\def\showauthornotes{1}
\def\showtableofcontents{0}
\def\showkeys{0}
\def\showdraftbox{0}
\def\showcolorlinks{1}
\def\usemicrotype{0}
\def\showfixme{0}
\newtheorem{theorem}{Theorem}[section]
\newtheorem*{theorem*}{Theorem}
\newtheorem{maintheorem}[theorem]{Main Theorem}
\newtheorem*{proposition*}{Proposition}
\newtheorem{lemma}[theorem]{Lemma}
\newtheorem*{lemma*}{Lemma}
\newtheorem{corollary}[theorem]{Corollary}
\newtheorem*{conjecture*}{Conjecture}
\newtheorem*{fact*}{Fact}
\newtheorem*{hypothesis*}{Hypothesis}
\newtheorem{conjecture}[theorem]{Conjecture}
\theoremstyle{definition}
\newtheorem{definition}[theorem]{Definition}
\newtheorem{problem}[theorem]{Problem}
\theoremstyle{remark}
\newtheorem{claim}[theorem]{Claim}
\newtheorem*{claim*}{Claim}
\newtheorem{remark}[theorem]{Remark}
\newtheorem*{remark*}{Remark}
\newtheorem{observation}[theorem]{Observation}
\newtheorem*{observation*}{Observation}
\definecolor{OliveGreen}{rgb}{0,0.6,0}
\newcommand{\savehyperref}[2]{\texorpdfstring{\hyperref[#1]{#2}}{#2}}
\newcommand{\Sref}[1]{\hyperref[#1]{\S\ref*{#1}}}
\newcommand{\Authornote}[2]{{\sffamily\small\color{red}{[#1: #2]}}}
\newcommand{\Authornotecolored}[3]{{\sffamily\small\color{#1}{[#2: #3]}}}
\newcommand{\Authorcomment}[2]{{\sffamily\small\color{gray}{[#1: #2]}}}
\newcommand{\Authorstartcomment}[1]{\sffamily\small\color{gray}[#1: }
\newcommand{\Authorfnote}[2]{\footnote{\color{red}{#1: #2}}}
\newcommand{\Authorfixme}[1]{\Authornote{#1}{\textbf{??}}}
\newcommand{\Authormarginmark}[1]{\marginpar{\textcolor{red}{\fbox{\Large #1:!}}}}
\newcommand{\Authornote}[2]{}
\newcommand{\Authornotecolored}[3]{}
\newcommand{\Authorcomment}[2]{}
\newcommand{\Authorstartcomment}[1]{}
\newcommand{\Authorfnote}[2]{}
\newcommand{\Authorfixme}[1]{}
\newcommand{\Authormarginmark}[1]{}
\newcommand{\Esymb}{\mathbb{E}}
\newcommand{\Psymb}{\mathbb{P}}
\DeclareMathOperator*{\E}{\Esymb}
\DeclareMathOperator*{\ProbOp}{\Psymb}
\renewcommand{\Pr}{\ProbOp}
\newcommand{\textparen}[1]{\text{(#1)}}
\newcommand{\because}[1]{\textparen{because #1}}
\renewcommand{\because}[1]{\textparen{because #1}}
\newcommand\bdot\bullet
\DeclareMathOperator{\LP}{LP}
\DeclareMathOperator{\OPT}{OPT}
\newcommand{\Z}{\mathbb Z}
\newcommand{\N}{\mathbb N}
\newcommand{\R}{\mathbb R}
\newcommand{\problemmacro}[1]{\texorpdfstring{\textsc{#1}}{#1}\xspace}
\newcommand{\uniquegames}{\problemmacro{unique games}}
\newcommand{\maxcut}{\problemmacro{max cut}}
\newcommand{\vertexcover}{\problemmacro{vertex cover}}
\newcommand{\maxthreesat}{\problemmacro{max $3$-sat}}
\newcommand{\knapsack}{\problemmacro{knapsack}}
\newcommand{\cC}{\mathcal C}
\newcommand{\cD}{\mathcal D}
\newcommand{\cI}{\mathcal I}
\newcommand{\cU}{\mathcal U}
\renewcommand{\leq}{\leqslant}
\renewcommand{\geq}{\geqslant}
\newcommand{\draftbox}{\begin{center}
  \fbox{%
    \begin{minipage}{2in}%
      \begin{center}%
          \Large\textsc{Working Draft}\\%
        Please do not distribute%
      \end{center}%
    \end{minipage}%
  }%
\end{center}
\vspace{0.2cm}}
\newcommand{\draftbox}{}
\let\epsilon=\varepsilon
\numberwithin{equation}{section}
\newcommand{\MYstore}[2]{%
  \global\expandafter \def \csname MYMEMORY #1 \endcsname{#2}%
}
\newcommand{\MYload}[1]{%
  \csname MYMEMORY #1 \endcsname%
}
\newcommand{\MYnewlabel}[1]{%
  \newcommand\MYcurrentlabel{#1}%
  \MYoldlabel{#1}%
}
\newcommand{\MYdummylabel}[1]{}
\newcommand{\torestate}[1]{%
  \let\MYoldlabel\label%
  \let\label\MYnewlabel%
  #1%
  \MYstore{\MYcurrentlabel}{#1}%
  \let\label\MYoldlabel%
}
\newcommand{\restatetheorem}[1]{%
  \let\MYoldlabel\label
  \let\label\MYdummylabel
  \begin{theorem*}[Restatement of \prettyref{#1}]
    \MYload{#1}
  \end{theorem*}
  \let\label\MYoldlabel
}
\newcommand{\restatelemma}[1]{%
  \let\MYoldlabel\label
  \let\label\MYdummylabel
  \begin{lemma*}[Restatement of \prettyref{#1}]
    \MYload{#1}
  \end{lemma*}
  \let\label\MYoldlabel
}
\newcommand{\restateprop}[1]{%
  \let\MYoldlabel\label
  \let\label\MYdummylabel
  \begin{proposition*}[Restatement of \prettyref{#1}]
    \MYload{#1}
  \end{proposition*}
  \let\label\MYoldlabel
}
\newcommand{\restatefact}[1]{%
  \let\MYoldlabel\label
  \let\label\MYdummylabel
  \begin{fact*}[Restatement of \prettyref{#1}]
    \MYload{#1}
  \end{fact*}
  \let\label\MYoldlabel
}
\newcommand{\restate}[1]{%
  \let\MYoldlabel\label
  \let\label\MYdummylabel
  \MYload{#1}
  \let\label\MYoldlabel
}
\let\origparagraph\paragraph
\renewcommand{\paragraph}[1]{\origparagraph{#1.}}
\newcommand{\ekvertexcover}{\problemmacro{$q$-Uniform-Vertex-Cover}}
\newcommand{\independentset}{\problemmacro{independent set}}
\newcommand{\constraintsatisfaction}{\problemmacro{constraint satisfaction problems}}
\newcommand{\oFk}{\problemmacro{1F-CSP}}
\newcommand{\mc}[1]{\mathcal{#1}}
\newcommand{\fc}{\textbf{\textsf{fc}}}
\newenvironment{proofof}[1]{\emph{Proof of #1. }}{\hfill$\square$}
\newcommand{\Infl}[3]{\text{\normalfont Inf}_{#1}^{#2}({#3})}
\newcommand{\mE}[2]{\mathop{\mathbb{E}}_{#1} \left[ #2\right]}
\newcommand{\hide}[1]{}
\newcommand{\mf}{\mathfrak}
\newcommand{\Val}{\textrm{Val}}
\newcommand{\Cost}{\textrm{Cost}}
\title{\bfseries No Small Linear Program Approximates Vertex Cover 
within a Factor $2 - \epsilon$}
\author[1]{Abbas Bazzi\thanks{email: abbas.bazzi@epfl.ch}}
\author[2]{Samuel Fiorini\thanks{email: sfiorini@ulb.ac.be}}
\author[3]{Sebastian Pokutta\thanks{email: sebastian.pokutta@isye.gatech.edu}}
\author[1]{Ola Svensson\thanks{email: ola.svensson@epfl.ch}}
\affil[1]{EPFL, School of Computer and Communication Sciences}
\affil[2]{Universit\'e libre de Bruxelles, D\'epartement de Math\'ematique}
\affil[3]{Georgia Tech, ISyE}
\begin{document}

\begin{titlepage}
\maketitle

\begin{abstract}
The vertex cover problem is one of the most important and intensively 
studied combinatorial optimization problems. Khot and Regev~\cite{KR03,KR08} 
proved that the problem is NP-hard to approximate within a factor
$2 - \epsilon$, assuming the Unique Games Conjecture (UGC). This is
tight because the problem has an easy $2$-approximation algorithm.
Without resorting to the UGC, the best inapproximability result for the
problem is due to Dinur and Safra~\cite{DS02,DS05}: vertex cover is 
NP-hard to approximate within a factor $1.3606$. 

We prove the following unconditional result about linear programming (LP)
relaxations of the problem: every LP relaxation that approximates vertex 
cover within a factor $2-\epsilon$ has super-polynomially many 
inequalities. As a direct consequence of our methods, we also establish 
that LP relaxations (as well as SDP relaxations) that approximate the independent set problem within 
any constant factor have super-polynomial size.
\end{abstract}

{\small \textbf{Keywords:}
Extended formulations, Hardness of approximation, Independent set, Linear programming, Vertex cover.
}
\end{titlepage}

\draftbox

\clearpage

\ifnum\showtableofcontents=1
{
\tableofcontents
\thispagestyle{empty}
 }
\fi

\clearpage

\section{Introduction}

In this paper we prove tight  inapproximability results for
\vertexcover  with respect to linear programming relaxations of polynomial
size. \vertexcover is the following classic problem:
given a graph $G = (V,E)$ together with vertex costs $c_v \geq 0$, $v \in V$,
find a minimum cost set of vertices $U \subseteq V$ such that every edge has at
least one endpoint in $U$. Such a set of vertices meeting every edge is called
a \emph{vertex cover}. 

It is well known that the LP relaxation
\begin{equation}
\label{eq:VC-LP-basic}
\begin{array}{rll}
\min &\displaystyle \sum_{v \in V} c_v x_v\\
\text{s.t.}&x_u + x_v \geq 1 &\forall uv \in E\\
&0 \leq x_v \leq 1 &\forall v \in V
\end{array}
\end{equation}
approximates \vertexcover within a factor~$2$. (See e.g., 
Hochbaum~\cite{Hochbaum97} and the references therein.) This means 
that for every cost vector there exists a vertex cover whose cost 
is at most $2$ times the optimum value of the LP. In fact, the 
(global) \emph{integrality gap} of this LP relaxation, the worst-case 
ratio over all graphs and all cost vectors between the minimum cost of 
an integer solution and the minimum cost of a fractional solution, 
equals $2$.

One way to make the LP relaxation \eqref{eq:VC-LP-basic} stronger is
by adding valid inequalities. Here, a \emph{valid inequality} is a
linear inequality $\sum_{v \in V} a_v x_v \geq \beta$ that is
satisfied by every integral solution. Adding all possible valid
inequalities to \eqref{eq:VC-LP-basic} would clearly decrease the
integrality gap all the way from $2$ to $1$, and thus provide a
perfect LP formulation. However, this would also yield an LP that we
would not be able to write down or solve efficiently.  Hence, it is
necessary to restrict to more tangible families of valid inequalities.

For instance, if $C \subseteq V$ is the vertex set of an odd cycle 
in $G$, then $\sum_{v \in C} x_v \geqslant \frac{|C| + 1}{2}$ is 
a valid inequality for vertex covers, known as an \emph{odd cycle
inequality}. However, the integrality gap remains $2$ after adding 
all such inequalities to \eqref{eq:VC-LP-basic}. 
More classes of inequalities are known beyond the odd cycle 
inequalities. However, we do not know any such class of valid 
inequalities that would decrease the integrality gap strictly below $2$. 

There has also been a lot of success ruling out concrete
polynomial-size linear programming formulations arising from, e.g., the
addition of a polynomial number of inequalities with sparse support or
those arising from hierarchies, where new valid inequalities are generated
in a systematic way. For instance, what about adding all valid inequalities supported 
on at most $o(n)$ vertices (where $n$ denotes the number of vertices 
of $G$), or all those obtained by performing a few rounds of the 
Lov\'asz-Schrijver (LS) lift-and-project procedure~\cite{LS91}? In 
their influential paper Arora, Bollob\'as 
and Lov\'asz~\cite{ABL02} (the journal version \cite{ABLT06} is joint work 
with Tourlakis) proved that none of these broad classes of valid 
inequalities are sufficient to decrease the integrality gap to
$2 - \epsilon$ for any $\epsilon > 0$.

The paper of Arora~\emph{et al.}\ was followed by many papers deriving stronger
and stronger tradeoffs between number of rounds and integrality gap for
\vertexcover and many other problems in various hierarchies, see the related
work section below. The focus of this paper is to prove lower bounds in a more
general model. Specifically, our goal is to understand the strength of
\emph{any} polynomial-size linear programming relaxation of \vertexcover
independently of any hierarchy and irrespective of any
complexity-theoretic assumption such as e.g., \(P \neq NP\).

We will rule out \emph{all possible} polynomial-size LP relaxations 
obtained from adding an \emph{arbitrary} set of valid inequalities of 
polynomial size. By ``all possible LP relaxations'', we mean
that the variables of the LP can be chosen arbitrarily. They do not have
to have to be the vertex-variables of \eqref{eq:VC-LP-basic}.

\subsection*{Contribution}

We consider the general model of LP relaxations as in~\cite{CLRS13}, see
also~\cite{BPZ}. Given an $n$-vertex graph $G=(V,E)$, a system of linear 
inequalities $Ax \geqslant b$ in $\R^d$, where $d \in \N$ is arbitrary, 
defines an \emph{LP relaxation} of $\vertexcover$ (on $G$) if the following 
conditions hold:
\begin{description}
  \item[Feasibility:] For every vertex cover $U\subseteq V$, we have a feasible vector $x^U \in \R^d$ satisfying $Ax^U \geqslant b$.
  \item[Linear objective:] For every vertex-costs $c \in \R^V_+$, we have an affine function (degree-$1$ polynomial) $f_c : \R^d \to \R$.
  \item[Consistency:] For all vertex covers $U \subseteq V$ and vertex-costs $c \in \R^V_+$, the condition $f_c(x^U) = \sum_{v \in U} c_v$ holds. 
\end{description}
For every vertex-costs $c \in \R^V_+$, the LP $\min \{f_c(x) \mid 
Ax \geqslant b\}$ provides a guess on the minimum cost of a vertex
cover. This guess is always a lower bound on the optimum.

We allow arbitrary computations for writing down the LP, and do not
bound the size of the coefficients. We only care about the following 
two parameters and their relationship: the \emph{size} of the LP 
relaxation, defined as the number of inequalities in $Ax \geqslant b$, 
and the (graph-specific) \emph{integrality gap} which is the 
worst-case ratio over all vertex-costs between the true optimum and 
the guess provided by the LP, for this particular graph $G$ and LP 
relaxation.

This framework subsumes the polyhedral-pair approach in extended 
formulations~\cite{BFPS12}; see also \cite{Pashkovich12}. We refer 
the interested reader to the surveys \cite{CCZ10,Kaibel11} for an 
introduction to extended formulations; see also Section~\ref{LPSec} 
for more details.

In this paper, we prove the following result about LP relaxations
of \vertexcover and, as a byproduct, \independentset.\footnote{Recall that an 
\emph{independent set} (stable set) in graph $G = (V,E)$ is a 
set of vertices $I \subseteq V$ such that no edge has both endpoints 
in $I$. \independentset is the corresponding maximization problem: 
given a graph together with a weight for each vertex, find a maximum 
weight independent set.}

\begin{theorem} \label{thm:main}
For infinitely many values of $n$, there exists an $n$-vertex graph $G$
such that: 
(i) Every size-$n^{o\left( \log n / \log \log n \right)}$ LP relaxation 
of \vertexcover on $G$ has integrality gap $2-o(1)$; 
(ii) Every size-$n^{o\left( \log n / \log \log n \right)}$ LP 
relaxation of \independentset on $G$ has integrality gap \(\omega(1)\).
\end{theorem}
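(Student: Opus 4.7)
The plan is to combine the Chan--Lee--Raghavendra--Steurer (CLRS) framework~\cite{CLRS13} for LP extension complexity with a Khot--Regev style long code reduction, replacing the Unique Games Conjecture by an unconditional LP lower bound for a base CSP. Concretely, the CLRS formalism reduces the claim ``every size-$s$ LP relaxation has integrality gap $\ge \rho$'' to a statement about the nonnegative rank of a certain approximate slack matrix $M_G^\rho$ whose rows are indexed by (convex combinations of) vertex covers, columns by ``distinguishing'' points, and entries by slacks. The target is $\operatorname{rk}_+(M_G^\rho) \ge n^{\Omega(\log n/\log\log n)}$ for $\rho = 2-o(1)$.

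\textbf{Base hard instance and reduction.} The first step is to pick a CSP (e.g.\ a variant of \maxkcsp\ or a \labelcover/unique-games style instance on a suitable domain) for which CLRS-type methods, together with Sherali--Adams integrality gap constructions, already yield LP inapproximability with a $n^{\Omega(\log n/\log\log n)}$-size lower bound for distinguishing a $(c,s)$-gap. The second step is to execute a Khot--Regev style reduction: for each variable of the CSP introduce a long-code block of boolean vertices, and place edges that encode (i) a dictator test inside each long code and (ii) the CSP's consistency constraints across pairs of long codes, scaling parameters so that a satisfying assignment of the CSP yields a vertex cover of measure $\tfrac12 + o(1)$ while a CSP with low value forces every vertex cover to have measure $1-o(1)$. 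The combinatorial outcome is an $n$-vertex graph $G$ whose integral vertex cover value is close to $1$ while the natural LP value is close to $\tfrac12$, giving a true gap of $2-o(1)$.

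\textbf{Transferring the LP lower bound.} The heart of the argument is the LP-preservation lemma: any nonnegative factorization of $M_G^\rho$ of rank $r$ must yield a nonnegative factorization of (roughly) the CSP approximation slack matrix of rank $\poly(r)$. This is done matrix-analytically in the CLRS style. One writes $M_G^\rho$ as a sum of rank-$1$ nonnegative terms, then projects/averages each term through the long-code gadget using Fourier analysis on the hypercube; the dictator tests ensure that only low-degree Fourier mass survives, which can be aggregated into a small nonnegative factorization of the CSP slack matrix. Composing this with the base CSP lower bound gives $\operatorname{rk}_+(M_G^\rho) \ge n^{\Omega(\log n/\log\log n)}$, proving part (i). For part (ii), one applies essentially the same reduction: the complementary relation $\alpha(G) = |V| - \tau(G)$ turns a $(2-\epsilon)$-gap for vertex cover into an unbounded (in fact $\omega(1)$) gap for independent set, since the ``YES'' side produces an independent set of linear size while the ``NO'' side leaves only $o(|V|)$ vertices independent, and this transformation preserves LP extension complexity up to constant factors.

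\textbf{Main obstacle.} The delicate part is the LP-preservation step: prior Khot--Regev reductions only needed soundness against \emph{integral} assignments, but here the reduction must soundly factor through any \emph{nonnegative matrix decomposition} of the slack matrix. This forces one to replace the UGC-based Fourier analysis of the long code by a ``matrix-valued'' Fourier decomposition compatible with nonnegative rank, and to choose parameters (block size, noise rate, alphabet size of the base CSP) so that the Sherali--Adams gap of the source survives both the long-code encoding and the slack-matrix projection with only a polynomial blow-up in size. Getting the $n^{o(\log n/\log\log n)}$ threshold to match the CLRS base lower bound, rather than a weaker polynomial bound, is the sharpest technical point.
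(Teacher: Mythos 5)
Your high-level ingredients (CLRS, Sherali--Adams gaps, a Unique-Games/long-code reduction, FGLSS-style complementation for \independentset) are the right ones, but the architecture you propose has a genuine gap at exactly the step you flag as the ``main obstacle.'' You build the graph $G$ by a Khot--Regev-style long-code reduction and then try to transfer the LP lower bound by pushing an arbitrary nonnegative factorization of the vertex-cover slack matrix back through the gadget, claiming that Fourier analysis of the dictator tests lets you aggregate the surviving low-degree mass into a small nonnegative factorization of the CSP slack matrix. No such argument is known, and the sketch does not work as stated: Fourier decompositions are not nonnegative, and the influence-decoding arguments used in UGC-based soundness take a single bounded function to a randomized \emph{integral} assignment --- they do not take a rank-$r$ nonnegative factorization of one slack matrix to a $\poly(r)$-rank nonnegative factorization of another. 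As written, part (i) rests entirely on this unproved ``LP-preservation lemma.''

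The paper avoids this obstacle by reordering the two steps so that all the analytic work happens where it is cheap. First, the Bansal--Khot one-free-bit test is run at the \emph{CSP} level, as a reduction from \uniquegames to a one-free-bit CSP (\oFk); there one only has to (a) prove soundness against integral assignments (the original Bansal--Khot analysis, Lemma~\ref{lem:soundness}) and (b) push forward a Sherali--Adams solution, which is done directly by composing local distributions with the long-code encoding (Lemma~\ref{fcspSA}) --- no slack matrices are involved. The CLRS theorem then converts the resulting SA gap into a lower bound against \emph{all} small LPs for \oFk. Second, the passage from \oFk to \vertexcover is an \emph{exact affine} reduction onto a universal FGLSS graph, with $\Val_{\mc{I}_1}(x) = 2 - \tfrac{1}{m}\Cost_{\mc{I}_2}(U(x))$, and Theorem~\ref{BPZmain} shows that such reductions transfer LP formulation complexity with no further analysis. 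Note also that the factor $2$ is extracted precisely from the one-free-bit property (each FGLSS cloud has exactly two vertices), so starting from a generic \maxkcsp or \labelcover-type instance, as you suggest, would not yield the tight vertex-cover gap even if your transfer step could be carried out; for part (ii) the same FGLSS reduction with affine shift $0$ does give the $\omega(1)$ gap, much as you describe.
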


This solves an open problem that was posed both by Singh~\cite{Singh10} 
and Chan, Lee, Raghavendra and Steurer~\cite{CLRS13}. 
In fact, Singh conjectured that every compact (that is, polynomial size),
\emph{symmetric} extended formulation for \vertexcover has integrality 
gap at least $2 - \epsilon$. We prove that his conjecture holds, even if 
asymmetric extended formulations are allowed.\footnote{Note that in some 
cases imposing symmetry is a severe restriction, see Kaibel, Pashkovich 
and Theis~\cite{KPT10}.}

Our result for the \independentset problem is even stronger than 
Theorem~\ref{thm:main}, as we are also able to rule out any polynomial 
size SDP with constant integrality gap for this problem. Furthermore, 
combining our proof strategy with more complex techniques 
we can prove a result similar to Theorem~\ref{thm:main} for \ekvertexcover
(that is, vertex cover in $q$-uniform \emph{hypergraphs}), for any fixed $q \geq 2$. 
For that problem, every size $n^{o\left( \log n / \log \log n \right)}$ LP 
relaxation has integrality gap $q-o(1)$. This generalizes our result on 
(graph) \vertexcover.

In the general model of LP relaxations outlined above, the LPs are
designed with the knowledge of the graph $G=(V,E)$; this is a 
\emph{non-uniform} model as the LP can depend on
the graph. It captures the natural LP relaxations for \vertexcover and
\independentset whose constraints depend on the graph structure. This
is in contrast to previous lower bound results (\cite{BFPS12,BM13,BP13})
on the LP formulation complexity of \independentset, which are of a 
\emph{uniform} nature: In those works, the formulation of the LP 
relaxation was agnostic to the input graph and only
allowed to depend on the number of vertices of the graph; see
\cite{BPZ} for a discussion of uniformity vs.\ non-uniformity. In general
non-uniform models are stronger (and so are lower bounds for
it) and interestingly, this allows for stronger LP relaxations for
\independentset than NP-hardness would predict. This phenomenon
is related to the approximability of problems with preprocessing. In
Section~\ref{sec:upperbounds}, we observe that a result of Feige and
Jozeph~\cite{FJ14} implies that there exists a size-$O(n)$ LP
formulation for approximating \independentset within a multiplicative
factor of
$O(\sqrt{n})$.

\subsection*{Related work}

Most of the work on extended formulations is ultimately rooted in
Yannakakis's famous paper~\cite{Yannakakis88,Yannakakis91} in which he
proved that every symmetric extended formulation of the matching
polytope and (hence) TSP polytope of the $n$-vertex complete graph has
size $2^{\Omega(n)}$. Yannakakis's work was motivated by approaches to
proving P $=$ NP by providing small (symmetric) LPs for the TSP, which
he ruled out. 

The paper of Arora \emph{et al.}~\cite{ABL02,ABLT06} revived
Yannakakis's ideas in the context of hardness of approximation and
provided lower bounds for \vertexcover in LS. It marked the starting
point for a whole series of papers on approximations via
hierarchies. Shortly after Arora \emph{et al.}~proved that performing
$O(\log n)$ rounds of LS does not decrease the integrality gap below
$2$, Schoenebeck, Trevisan and Tourlakis~\cite{STT07} proved that this
also holds for $o(n)$ rounds of LS. A similar result holds for the
stronger Sherali-Adams (SA) hierarchy~\cite{SA90}: Charikar, Makarychev and Makarychev~\cite{CMM}
showed that $\Omega(n^\delta)$ rounds of SA are necessary to decrease
the integrality gap beyond $2 - \epsilon$ for some
$\delta = \delta(\epsilon) > 0$.  

Beyond linear programming hierarchies, there are also semidefinite programming
(SDP) hierarchies, e.g., Lov\'asz-Schrijver (LS+)~\cite{LS91} and
Sum-of-Squares/Lasserre~\cite{Parillo00,Lasserre01a,Lasserre01b}.  Georgiou,
Magen, Pitassi and Tourlakis~\cite{GMPT07} proved that $O(\sqrt{\log n / \log
\log n})$ rounds of LS+ does not approximate \vertexcover within a factor
better than $2$. In this paper, we focus mostly on the LP case.

Other papers in the ``hierarchies'' line of work 
include~\cite{FM07b,GM08,Schoenebeck08,KS09,RS09,Tulsiani09,KMN11,BCVGZ,BGMT12}.  

Although hierarchies are a powerful tool, they have their limitations. For 
instance, $o(n)$ rounds of SA does not give an approximation of 
\knapsack with a factor better than $2$~\cite{KMN11}. However, for every
$\epsilon > 0$, there exists a size-$n^{1/\epsilon + O(1)}$ LP 
relaxation that approximates \knapsack within a factor of 
$1 + \epsilon$~\cite{Bienstock08}. 

Besides the study of hierarchy approaches, there was a distinct line
of work inspired directly by Yannakakis's paper that sought to study the 
power of general (linear) extended formulations, independently of any hierarchy, 
see e.g., \cite{Rothvoss11,FMPTW12,BFPS12,BGMT12,BM13,BP13,Rothvoss13}. 
Limitations of semidefinite extended formulations were also studied recently, 
see~\cite{BDP13,LeeRS14}.

The lines of work on hierarchies and (general) extended formulations
in the case of \constraintsatisfaction (CSPs) were merged in 
the work of Chan \emph{et al.}~\cite{CLRS13}. Their main result states that for Max-CSPs, SA is best 
possible among all LP relaxations in the sense that if there exists a 
size-$n^{r}$ LP relaxation approximating a given Max-CSP within factor 
$\alpha$ then performing $2r$ rounds of SA would also provide a 
factor-$\alpha$ approximation. They obtained several strong LP 
inapproximability results for Max-CSPs such as \maxcut and 
\maxthreesat. This result was recently strengthened in a breakthrough by Lee, Raghavendra, and
Steurer~\cite{LeeRS14}, who obtained analogous results showing (informally) that the Sum-of-Squares/Lasserre hierarchy is best possible among all SDP relaxations for Max-CSPs.

Braun, Pokutta and Zink~\cite{BPZ} developed a framework for proving
size lower bounds on LP relaxations via reductions. Using~\cite{CLRS13}
and FGLSS graphs~\cite{FGLSS96}, they obtained a $n^{\Omega\left(\log n / 
\log \log n\right)}$ size lower bound for approximating \vertexcover within 
a factor of $1.5-\epsilon$ and \independentset within a factor of $2-\epsilon$. Our paper 
improves these inapproximability factors to a tight $2 - \epsilon$ and any 
constant, respectively.

\subsection*{Outline}

The framework in Braun \emph{et al.}~\cite{BPZ} formalizes sufficient
properties of reductions for preserving inapproximability with respect to
extended formulations / LP relaxations; this reduction mechanism does not 
capture all known reductions due to certain linearity and independence 
requirements. Using this framework, they gave a reduction from \maxcut
to \vertexcover yielding the aforementioned result.

A natural approach for strengthening the hardness factor is to reduce from
\uniquegames instead of \maxcut (since \vertexcover is known to be \uniquegames-hard to approximate within a factor $2-\epsilon$). However, one obstacle is
that, in known  reductions from \uniquegames, the optimal value of the obtained
\vertexcover instance is not \emph{linearly} related to the value of the
\uniquegames instance. This makes these reductions unsuitable for the framework
in~\cite{BPZ} (see Definition~\ref{RedDef}).

We overcome this obstacle by designing a two-step reduction. In the first
step (Section~\ref{sec-1fcsp}), we interpret the ``one free bit'' PCP test 
of Bansal and Khot~\cite{BK} as a reduction from a \uniquegames instance 
to a ``one free bit'' CSP (\oFk). We then use the family of SA integrality 
gap instances for the \uniquegames problem constructed by Charikar 
\emph{et al.}~\cite{CMM}, to construct a similar family for this CSP. This,
together with the main result of Chan \emph{et al.}~\cite{CLRS13} applied to
this particular CSP, implies that no size-$n^{o\left( \log n / \log \log
n \right)}$ LP relaxation can provide a constant factor approximation for \oFk.
In the second step (Section~\ref{LPSec}), a reduction from \oFk to \vertexcover, in
the framework of Braun \emph{et al.}~\cite{BPZ}, then yields our main result.

Finally, following a slightly different and more challenging route we 
prove  tight hardness of approximation for LP relaxations of \ekvertexcover
for every $q \geqslant 2$. This is done in Section~\ref{sec:qvertexcover}.

\section{Preliminaries}

We shall now present required tools and background. In Sections
\ref{sec:csp} and \ref{sec:SA} we define the class of
\constraintsatisfaction and the Sherali-Adams (SA) hierarchy,
respectively.

\subsection{\constraintsatisfaction}
\label{sec:csp}

The class of \constraintsatisfaction (CSPs) captures a large variety of
combinatorial problems, like \maxcut and \maxthreesat. In general, we
are given a collection of \emph{predicates} $\mc{P} = \{P_1, \dots, P_m\}$ 
(or \emph{constraints} $\mc{C} = \{C_1,\ldots,C_m\}$) where 
each $P_i$ is of the form $P_i: [R]^n \mapsto \{0,1\}$, where 
$[R] := \{1,\ldots,R\}$ is the \emph{domain} and $n$ is the number of 
variables. We will be mainly interested in the family of CSPs where each 
predicate $P$ is associated with a set of distinct indices 
$S_P = \{i_1,\dots,i_k\} \subset [n]$ and is of constant arity
$k$, i.e., $P: [R]^k \mapsto \{0,1\}$. In this terminology, for
$x\in [R]^n$ we set $P(x) \coloneqq P(x_{i_1},\ldots,x_{i_k})$. The goal in such
problems is to find an assignment for $x\in [R]^n$ in such a way as to
maximize the total fraction of satisfied predicates.

The \emph{value} of an assignment $x \in [R]^n$ for a CSP instance 
$\mc{I}$ is defined as
$$
\textrm{Val}_\mc{I}(x) := \frac{1}{m} \sum_{i=1}^m P_i(x) = \mE{P \in \mc{P}}{P(x)},
$$
and the optimal value of such instance $\mc{I}$, denoted by 
$\textrm{OPT}(\mc{I})$ is $$\textrm{OPT}(\mc{I}) = 
\max_{x\in [R]^n} \textrm{Val}_\mc{I}(x).$$

Often, we will consider \emph{binary} CSPs, that is, with domain size
$R = 2$. Given a binary predicate $P: \{0,1\}^k \mapsto \{0,1\}$, the 
\emph{free bit} complexity of $P$ is defined to be $\log_2(|\{z \in 
\{0,1\}^k: P(z) = 1\}|)$. For example the \maxcut predicate $x_i \oplus 
x_j$ has a free bit complexity of \emph{one}, since the only \emph{two} 
accepting configurations are $(x_i = 0, x_j = 1)$ and $(x_i = 1, x_j = 0)$.

For the LP-hardness of \vertexcover and \independentset (i.e., Sections~\ref{sec-1fcsp} and \ref{LPSec}), we will be interested in a one free bit binary CSP, that we 
refer to as $\oFk$, defined as follows: 
\begin{definition}[\oFk]
\label{def:1FCSP}
A \emph{\oFk instance of arity $k$} is a binary CSP over a set of variables $\{x_1,\dots,x_n\}$ and a set of constraints $\mc{C} = \{C_1,\dots, C_m\}$ such 
that each constraint $C\in \mc{C}$ is of arity $k$ and has only two accepting
configurations out of the $2^k$ possible ones.
\end{definition}


\subsection{Sherali-Adams Hierarchy}
\label{sec:SA}

We define the  canonical relaxation for \constraintsatisfaction as it is
obtained by $r$-rounds of the Sherali-Adams (SA) hierarchy. We follow the
notation as in e.g.,~\cite{GeorgiouMT09}. For completeness we also describe in Appendix~\ref{sec:genSA} why
this relaxation is equivalent to the one obtained by applying the original
definition of SA as a reformulation-linearization technique on a binary program.

Consider any CSP defined over $n$ variables $x_1, \ldots, x_n \in [R]$, with
a set of $m$ constraints $\mc{C} = \{C_1,\dots,C_m\}$ where the arity of each constraint
is at most $k$. Let $S_i = S_{C_i}$ denote the set of variables that $C_i$ depends on. 
The $r$-rounds SA relaxation of this CSP has a variable $X_{(S,
\alpha)}$ for each $S\subseteq [n], \alpha \in [R]^S$ with $|S| \leq r$. The intuition is that
$X_{(S, \alpha)}$ models the indicator variable whether the variables in $S$ are
assigned the values in $\alpha$.  The $r$-rounds SA relaxation with $r\geq k$  is now 
\begin{equation}
\label{eq:CSP_prog2}
\begin{array}{rll}
  \max &\displaystyle \frac{1}{m} \sum_{i=1}^m \sum_{\alpha \in [R]^{S_i}} C_i(\alpha) \cdot X_{(S_i, \alpha)}\\[3ex]
  \text{s.t.}&\displaystyle  \sum_{u\in [R]} X_{(S \cup \{j\}, \alpha \circ u)}  = X_{(S,\alpha)}  \quad &\forall S \subseteq [n]: |S| < r, \alpha \in [R]^S, j \in [n]\setminus S\, ,\\
  &\qquad \qquad X_{(S, \alpha)} \geq 0 \qquad & \forall S \subseteq [n]: |S| \leq r, \alpha \in [R]^S\,, \\[1ex]
  & \qquad \qquad X_{(\emptyset, \emptyset)} = 1\, .
\end{array}
\end{equation}
Here we used the notation $(S\cup \{j\}, \alpha \circ u)$ to extend the
assignment $\alpha$ to assign $u$ to the variable indexed by $j$. Note that
the first set of constraints say that the variables should indicate
a consistent assignment.  

Instead of dealing with the constraints of the Sherali-Adams LP relaxation
directly, it is simpler to view each solution of the Sherali-Adams LP as
a consistent collection of local distributions over partial assignments.

Suppose that for every set $S \subseteq [n]$ with 
$|S| \leq r$, we are given a local distribution $\mc{D}(S)$
over $[R]^S$. We say that these distributions are \emph{consistent} 
if for all $S' \subseteq S \subseteq [n]$ with $|S| \leq r$, 
the marginal distribution induced on $[R]^{S'}$ by $\mc{D}(S)$ 
coincides with that of $\mc{D}(S')$. 

The equivalence  between SA solutions and consistent
collections of local distributions basically follows from the definition
of~\eqref{eq:CSP_prog2} and is also used in \cite{CMM} and \cite{CLRS13} that are 
most relevant to our approach. More specifically, we have

\begin{lemma}[Lemma~1 in~\cite{GeorgiouMT09}]
  If $\{\cD(S)\}_{S\subseteq [n]: |S| \leq r}$ is a consistent collection of local  distributions then
  \begin{align*}
    X_{(S, \alpha)} = \Pr_{\cD(S)}[\alpha]
  \end{align*}
  is a feasible solution to~\eqref{eq:CSP_prog2}.
  \label{lem:LoctoSA}
\end{lemma}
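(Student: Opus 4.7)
The plan is to directly verify that the assignment $X_{(S,\alpha)} = \Pr_{\cD(S)}[\alpha]$ satisfies each of the three constraint families in~\eqref{eq:CSP_prog2}. Since the lemma only concerns feasibility, the objective does not enter the argument, and all three checks reduce to rewriting probabilistic statements as linear equalities/inequalities.

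First, I would handle the two easy constraints. Non-negativity $X_{(S,\alpha)} \geq 0$ is immediate from the fact that $\Pr_{\cD(S)}[\alpha]$ is a probability. For the normalization $X_{(\emptyset,\emptyset)} = 1$, note that $\cD(\emptyset)$ is a distribution over $[R]^{\emptyset}$, which consists of the single (empty) assignment; hence this probability is $1$.

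The main content is the marginalization constraint
\[
\sum_{u \in [R]} X_{(S \cup \{j\}, \alpha \circ u)} = X_{(S,\alpha)}
\]
for every $S \subseteq [n]$ with $|S| < r$, every $\alpha \in [R]^S$, and every $j \in [n]\setminus S$. Rewriting the left-hand side using the definition of $X$ gives
\[
\sum_{u \in [R]} \Pr_{\cD(S\cup\{j\})}[\alpha \circ u].
\]
The events $\{\alpha \circ u\}_{u \in [R]}$ partition the event $\{\beta \in [R]^{S \cup \{j\}} : \beta|_S = \alpha\}$, so this sum equals the probability under $\cD(S \cup \{j\})$ that the restriction to $S$ equals $\alpha$. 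By the consistency hypothesis applied to $S' = S \subseteq S \cup \{j\}$ (which has size at most $r$), the marginal of $\cD(S \cup \{j\})$ on $[R]^{S}$ coincides with $\cD(S)$, so this is exactly $\Pr_{\cD(S)}[\alpha] = X_{(S,\alpha)}$.

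There is no serious obstacle here: each SA constraint reflects exactly one structural property of a consistent family of local distributions (non-negativity, the trivial distribution on the empty set, and marginalization onto subsets). The only thing to be careful about is ensuring that every marginalization invoked stays within the allowed radius $r$, which is automatic since the lemma's hypothesis applies to all $S$ with $|S| \leq r$ and the constraint is stated only for $|S| < r$.
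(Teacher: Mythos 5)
Your proof is correct and complete: all three constraint families of~\eqref{eq:CSP_prog2} are verified exactly as they should be, and you correctly note that the marginalization constraint is only imposed for $|S|<r$ so that $\cD(S\cup\{j\})$ is always defined. The paper omits a proof of this direction (citing~\cite{GeorgiouMT09}) and only proves the converse in Lemma~\ref{lem:SAtoLoc}, whose argument is precisely yours run in reverse, so there is nothing to add.
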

Moreover, we have the other direction.
\begin{lemma}
  Consider a feasible solution $(X_{(S, \alpha)})_{S\subseteq [n]: |S| \leq r, \alpha \in [R]^S}$  to~\eqref{eq:CSP_prog2}.
  For each $S \subseteq [n]$ with $|S| \leq r$, define
  \begin{align*}
    \Pr_{\cD(S)}[\alpha] = X_{(S, \alpha)} \qquad \mbox{for each } \alpha \in [R]^S.
  \end{align*}
  Then $(\cD(S))_{S\subseteq [n] : |S| \leq r}$ forms a consistent collection of local distributions.
  \label{lem:SAtoLoc}
\end{lemma}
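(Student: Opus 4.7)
The plan is to verify the two required properties of a consistent collection of local distributions: that each $\cD(S)$ is a bona fide probability distribution, and that the marginalization relation holds across nested sets. Both will follow from the constraints of \eqref{eq:CSP_prog2} by induction on the size of the sets involved, applied carefully so that the marginalization constraint is never invoked on a set of size larger than~$r$.

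First I would check that each $\cD(S)$ is a probability distribution. Nonnegativity of $\Pr_{\cD(S)}[\alpha] = X_{(S,\alpha)}$ is immediate from the constraint $X_{(S,\alpha)} \geq 0$. To establish $\sum_{\alpha \in [R]^S} X_{(S,\alpha)} = 1$ for every $S$ with $|S| \leq r$, I would induct on $|S|$: the base case $|S|=0$ is $X_{(\emptyset,\emptyset)}=1$, and for the inductive step with $S = S_0 \cup \{j\}$ where $|S_0| < r$ and $j \notin S_0$, the marginalization constraint gives
\[
\sum_{\alpha \in [R]^{S}} X_{(S,\alpha)} \;=\; \sum_{\beta \in [R]^{S_0}} \sum_{u \in [R]} X_{(S_0 \cup \{j\}, \beta \circ u)} \;=\; \sum_{\beta \in [R]^{S_0}} X_{(S_0,\beta)} \;=\; 1,
\]
by the inductive hypothesis.

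Next I would verify consistency: for $S' \subseteq S \subseteq [n]$ with $|S| \leq r$ and any $\alpha' \in [R]^{S'}$, the marginal of $\cD(S)$ on $[R]^{S'}$ at $\alpha'$ equals $X_{(S',\alpha')}$. The idea is again induction, this time on $|S \setminus S'|$. The base case $S = S'$ is trivial. For the inductive step, pick $j \in S \setminus S'$ and let $S_0 = S \setminus \{j\}$; since $|S_0| < r$ and $|S_0| \leq r$, the marginalization constraint applies and yields
\[
\sum_{\substack{\alpha \in [R]^{S}\\ \alpha|_{S'} = \alpha'}} X_{(S,\alpha)} \;=\; \sum_{\substack{\beta \in [R]^{S_0}\\ \beta|_{S'} = \alpha'}} \sum_{u \in [R]} X_{(S_0 \cup \{j\}, \beta \circ u)} \;=\; \sum_{\substack{\beta \in [R]^{S_0}\\ \beta|_{S'} = \alpha'}} X_{(S_0,\beta)} \;=\; X_{(S',\alpha')},
\]
where the last equality is the inductive hypothesis applied to $S_0 \supseteq S'$. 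This gives exactly the consistency requirement.

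I do not expect any real obstacle here; the lemma is essentially a restatement of the LP constraints in probabilistic language. The only point requiring mild care is ensuring that every invocation of the marginalization constraint is on a set of size strictly less than $r$, which is why both inductions are organized to remove one coordinate at a time starting from a set of size at most $r$.
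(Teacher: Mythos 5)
Your proof is correct and follows essentially the same route as the paper's: both arguments derive normalization by telescoping the single\hbox{-}coordinate marginalization constraint down to $X_{(\emptyset,\emptyset)}=1$, and consistency by summing out the coordinates of $S\setminus S'$ one at a time. The only difference is presentational — you make the inductions explicit, whereas the paper chains the equality constraints implicitly.
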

\begin{proof}
  Note that, for each $S\subseteq n$ with $|S| \leq r$, $\cD(S)$ is indeed a distribution because by the equality constraints of~\eqref{eq:CSP_prog2}
  \begin{align*}
    \sum_{\alpha \in [R]^S}  \Pr_{\cD(S)}[\alpha]  =\sum_{\alpha \in [R]^S} X_{(S, \alpha)} = \sum_{\alpha' \in [R]^{S'}} X_{(S',\alpha')} = X_{(\emptyset, \emptyset)} =1
  \end{align*}
  where $S'\subseteq S$ is arbitrary; and moreover
  $\Pr_{\cD(S)}[\alpha] = X_{(S, \alpha)}\geq 0$.
  Similarly we have, again by the equality constraints of~\eqref{eq:CSP_prog2}, that for each $S' \subseteq S$ and $\alpha' \in [R]^{S'}$  
  \begin{align*}
    \Pr_{\cD(S')}[\alpha']  = X_{(S', \alpha')} = \sum_{\alpha''\in [R]^{S\setminus S'}} X_{(S, \alpha' \circ \alpha'')} =  \sum_{\alpha''\in [R]^{S\setminus S'}} \Pr_{\cD(S)} [\alpha'\circ \alpha'']
  \end{align*}
  so the local distributions are consistent.
\end{proof}

When a SA solution $(X_{(S,\alpha)})$ is viewed as consistent collection $\{\mc{D}(S)\}$ of local distributions, the value of the SA solution can be computed as
$$
\frac{1}{m} \sum_{i=1}^m \sum_{\alpha \in [R]^{S_i}} C_i(\alpha) \cdot X_{(S_i, \alpha)} = \E_{C\in \cC} \left[ \Pr_{\alpha \sim \mc{D}(S_C)}[\alpha \mbox{ satisfies } C] \right]
$$
where $S_C$ is the support of constraint $C$.


\section{Sherali-Adams Integrality Gap for \oFk}
\label{sec-1fcsp}

In this section we establish Sherali-Adams integrality gaps for
\oFk and by virtue of \cite{CLRS13} this extends to general LPs. 
The proof uses the idea of~\cite{CMM} to perform a reduction between problems that
preserves the Sherali-Adams integrality gap.  

Specifically, we show that
the reduction by Bansal and Khot~\cite{BK} from the \uniquegames problem to
$\oFk$ also provides a large Sherali-Adams integrality gap for \oFk, assuming that we start
with a  Sherali-Adams integrality gap instance of \uniquegames.  As large
Sherali-Adams integrality gap  instances of \uniquegames were given
in~\cite{CMM}, this implies the aforementioned integrality gap of $\oFk$.

\subsection{\uniquegames}
\label{UGSec}
The  \uniquegames problem is defined as follows:
\begin{definition}
A \uniquegames instance $\mc{U}=(G,[R],\Pi)$ is defined  by a graph
$G=(V,E)$ over a vertex set $V$ and edge set $E$, where every edge $uv \in E$
is associated with a bijection map $\pi_{u,v} \in \Pi$ such that $\pi_{u,v}:[R]\mapsto [R]$ (we set $\pi_{v,u} := \pi^{-1}_{u,v}$). Here, $[R]$ is
known as the label set. The goal is to find a labeling $\Lambda: V \mapsto [R]$
that maximizes the number of satisfied edges, where an edge $uv$ is
satisfied by $\Lambda$ if $\pi_{u,v}(\Lambda(u))= \Lambda(v)$.
\end{definition}
The following very influential conjecture, known as the \uniquegames
conjecture,  is due to Khot \cite{Khot}.
\begin{conjecture}
\label{conj:UGC}
For any $\zeta, \delta >0$, there exists a sufficiently large constant $R
= R(\zeta,\delta)$ such that the following promise problem is NP-hard. Given a \uniquegames
instance $\mc{U}=(G,[R],\Pi)$, distinguish between the following two
cases: \begin{enumerate}
\item Completeness: There exists a labeling $\Lambda$ that satisfies at least
  $(1-\zeta)$-fraction of the edges.
\item Soundness: No labeling satisfies more than $\delta$-fraction of the edges. 
\end{enumerate}
\end{conjecture}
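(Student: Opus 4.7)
The statement is the Unique Games Conjecture of Khot, which is one of the most celebrated open problems in complexity theory; no proof is known, so any ``proof proposal'' is necessarily speculative and can only describe the broad strategy that the current literature has taken toward it. With that caveat in mind, the plan would start from the PCP theorem, which yields NP-hardness of gap-\labelcover at any constant gap over a large enough alphabet. The natural target is then to reduce \labelcover to \uniquegames while (i) turning general projection constraints into bijections on $[R]$ and (ii) retaining soundness $\delta$. Step (i) is trivial syntactically but devastating for the analysis: projections carry a rich combinatorial structure (each label on one side has many preimages on the other) that the soundness proof of \labelcover exploits, and bijections remove exactly this structure.

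A concrete way to attempt this would be to go through the intermediate problem of $d$-to-$1$ games. Recent progress, in particular the 2-to-2 Games Theorem of Khot, Minzer, and Safra together with the Grassmann-graph expansion analysis of Dinur, Khot, Kindler, Minzer, and Safra, already establishes a weakening in which constraints are $2$-to-$2$ rather than $1$-to-$1$, and as a corollary gives the ``half'' version of UGC in which completeness is $\tfrac{1}{2}-\zeta$ instead of $1-\zeta$. The natural proof plan is therefore: (a) start from a $2$-to-$2$ games hardness instance; (b) design a new gadget that converts each $2$-to-$2$ constraint into a small collection of bijective constraints; and (c) show that the completeness value is preserved up to $o(1)$ while the soundness does not blow up. Step (c) would need a ``small-set expansion on the Grassmann-type graph at the bijective level'' statement, analogous to the one already proved in the $2$-to-$2$ case but strictly stronger.

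The main obstacle, and the reason this is open, is precisely step (c). All known pseudorandom structure theorems on Grassmann graphs handle the linear-algebraic objects that naturally arise from $2$-to-$2$ tests (pairs of subspaces with a common quotient), and the combinatorial isoperimetric inequalities underlying them break down when one insists on bijective tests, because bijective gadgets force the test to look at a single label on each side and therefore lose the ``linearity slack'' that makes the Grassmann analysis go through. A successful proof would almost certainly require either a new pseudorandomness theorem on a richer object (higher-order Grassmann, or a suitable spectral/coding-theoretic structure), or a direct combinatorial construction bypassing PCP entirely. Until such a tool is developed, the strategy above gets stuck at the soundness analysis and does not yield a proof of Conjecture~\ref{conj:UGC}.
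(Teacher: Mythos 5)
This statement is the Unique Games Conjecture itself, which the paper states only as background (Conjecture~\ref{conj:UGC}, due to Khot) and does not prove --- nor does anyone; it remains open. You correctly recognize this, and your survey of the current state (the $2$-to-$2$ games theorem, the Grassmann-graph expansion analysis, and the obstruction to upgrading $2$-to-$2$ constraints to bijections while keeping completeness near $1$) is accurate, but it is, as you say yourself, not a proof. It is worth noting that the paper's results do not rely on this conjecture at all: instead of NP-hardness of \uniquegames, the authors use the \emph{unconditional} Sherali--Adams integrality-gap instances for \uniquegames of Charikar, Makarychev and Makarychev (Theorem~\ref{UGgap}), which is why their LP lower bounds for \vertexcover need no complexity-theoretic assumption.
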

We remark that the above conjecture has several equivalent formulations via fairly standard transformations. 
In particular, one can assume that the graph $G$ is bipartite and
regular \cite{KR03}. 

The starting point of our reduction is the following Sherali-Adams integrality
gap instances for the \uniquegames problem. Note that  \uniquegames  are
 \constraintsatisfaction and hence here and in the following, we
are concerned with the standard application of the Sherali-Adams hierarchy to CSPs.
\begin{theorem}[\cite{CMM}]
\label{UGgap}
Fix a label size $R = 2^\ell$, a real $\delta \in (0,1)$ and let $\Delta
:= \lceil C(R/\delta)^2 \rceil$ (for a sufficiently large constant C). Then for
every positive $\epsilon$ there exists $\kappa > 0$ depending on $\epsilon$ such
that for infinitely many $n$ there exists an instance of \uniquegames
on a $\Delta$-regular $n$-vertex graph $G=(V,E)$ so that: \begin{enumerate}
\item The value of the optimal solution is at most $\frac{1}{R}\cdot(1+\delta)$.
\item There exists a solution to the LP relaxation obtained after $r
  = n^\kappa$ rounds of the Sherali-Adams relaxation of value $1-\epsilon$.
\end{enumerate}
\end{theorem}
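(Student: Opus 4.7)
I would construct the instance $\mc{U} = (G,[R],\Pi)$ probabilistically: take $G$ to be a $\Delta$-regular expander on $n$ vertices (a uniformly random $\Delta$-regular graph, or an explicit Ramanujan construction) and, independently for every edge $uv \in E$, sample a uniformly random permutation $\pi_{uv}: [R] \to [R]$. Part~(1) is then a Chernoff-plus-union-bound statement over the random permutations, while Part~(2) requires building a consistent collection of local distributions $\{\cD(S) : |S| \leq r = n^\kappa\}$ and invoking Lemma~\ref{lem:LoctoSA}.

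\textbf{Soundness.} Fix any labeling $\Lambda : V \to [R]$. Because the permutations are i.i.d.\ uniform, each edge is satisfied by $\Lambda$ independently with probability $1/R$, so $\val(\Lambda)$ is the average of $|E| = \Delta n / 2$ independent Bernoulli$(1/R)$ variables. A multiplicative Chernoff bound gives
\[ \Pr\!\left[\val(\Lambda) > \tfrac{1+\delta}{R}\right] \;\leq\; \exp\!\left(-c\,\delta^2 \Delta n / R\right) \]
for some absolute $c > 0$. Union-bounding over the $R^n$ labelings and using $\Delta = \lceil C(R/\delta)^2 \rceil$ with $C$ sufficiently large, the total failure probability is $o(1)$, so some realization of the permutations makes $\OPT(\mc{U}) \leq (1+\delta)/R$.

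\textbf{Sherali--Adams value $\geq 1-\epsilon$.} The key probabilistic fact I would establish is that for $\kappa = \kappa(\epsilon,\Delta)$ small enough, a random $\Delta$-regular graph has, with high probability, the ``near-forest at scale $r$'' property: every vertex subset $S$ with $|S| \leq r = n^\kappa$ induces a subgraph $G[S]$ with at most $\epsilon|S|$ excess (non-forest) edges. This comes from a first-moment bound on short cycles together with a pigeonhole bound on dense induced substructures at that scale. Given this, for each such $S$ I would define $\cD(S)$ by picking a canonical spanning forest $F(S)$ of $G[S]$, assigning a uniform random label to each root of $F(S)$, and propagating labels along $F(S)$ through the permutations. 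The resulting $\cD(S)$ has uniform singleton marginals, satisfies every forest edge with probability $1$, and fails only on the at most $\epsilon|S|$ non-forest edges, for an SA objective of at least $1-\epsilon$.

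\textbf{Main obstacle.} The delicate point is consistency: the marginal of $\cD(S)$ on a subset $T \subseteq S$ need not agree with $\cD(T)$, because two vertices of $T$ may lie in the same component of $G[S]$ via a path through $S \setminus T$ (coupling their labels), while lying in different components of $G[T]$ (where the naive construction would make them independent). The fix, and the crux of the argument, is to define the $\cD(S)$ coherently as marginals of one ``master'' random labeling, pulled back from a covering or random lift of $G$ that is genuinely a forest up to scale $r$. One then has to verify that this master distribution (i)~projects to the forest-propagation distribution on every $G[S]$ satisfying the near-forest property, (ii)~still yields SA value at least $1-\epsilon$ after accounting for the excess edges, and (iii)~is well-defined on all subsets of size up to $n^\kappa$. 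Carrying out this coherent construction, and tuning $\kappa$ so that the near-forest property holds on all $S$ simultaneously, is the technical heart of the proof and the step that simultaneously uses expansion of $G$ and the near-tree structure of its induced subgraphs at scale $n^\kappa$.
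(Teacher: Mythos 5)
First, a point of reference: the paper does not prove Theorem~\ref{UGgap} at all. It is imported verbatim from Charikar--Makarychev--Makarychev~\cite{CMM} and used as a black box, so there is no in-paper proof to compare against; what you have written is an attempt to reprove the CMM result from scratch.

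On its merits, your sketch gets the soundness half right: for a fixed labeling each edge is satisfied independently with probability $1/R$ over the random permutations, and the Chernoff-plus-union-bound calculation with $\Delta = \lceil C(R/\delta)^2\rceil$ does close. The gap is in Part~(2), and you have correctly located it (consistency of the local distributions) but not closed it --- and the fix you propose is self-defeating. If the $\cD(S)$ are all marginals of a single ``master'' distribution over labelings $\Lambda : V \to [R]$, then the SA objective equals $\E_\Lambda[\val(\Lambda)] \leq \OPT(\mc{U}) \leq (1+\delta)/R$, which is exactly what an integrality gap instance must \emph{avoid}; the whole point is that the family $\{\cD(S)\}$ is locally consistent but \emph{not} realizable as marginals of any global distribution. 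Pulling back from the universal cover or a random lift does not rescue this: a subset $S$ whose induced subgraph contains a cycle has no canonical lift to the tree, and resolving that ambiguity is precisely the consistency problem restated, not solved. Your accounting of the objective value is also off: the SA objective only evaluates $\cD(S_C)$ on constraint supports, i.e.\ on single edges $\{u,v\}$, and your forest-propagation rule satisfies every such pair with probability $1$ --- the loss to $1-\epsilon$ must come from the repair needed to make the pair distributions consistent with the larger sets containing cycles, which is the step you defer. The actual CMM argument handles this with a dedicated machinery (defining distributions on enlarged sets and correcting them so that consistency holds by construction, at a controlled cost in the objective); without that, the proposal establishes feasibility of nothing beyond the trivial rounds.
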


\subsection{Reduction from \uniquegames to $\oFk$}
\label{sec:red}
We first describe the reduction from \uniquegames to $\oFk$ that follows the
construction in~\cite{BK}. We then show that it also preserves the
Sherali-Adams integrality gap.

\paragraph{Reduction} Let $\mc{U}=(G,[R],\Pi)$ be a \uniquegames instance 
over a regular bipartite graph $G=(V,W,E)$. 
Given $\mc{U}$, we construct an instance $\cI$ of
$\oFk$. The reduction has two parameters: $\delta >0$ and $\epsilon >0$, where $\epsilon$ is chosen such 
that $\epsilon R$ is an integer (taking $\epsilon = 2^{-q}$ for some integer $q \geqslant 0$ guarantees this). We then select $t$ to be a large integer depending on $\epsilon$ and 
$\delta$.  

The resulting \oFk instance $\cI$ will be defined over $2^R|W|$ variables and $c|V|$ constraints, where $c:= c(R,\epsilon,t,\Delta)$ is a function of the degree $\Delta$ of the \uniquegames instance, and the constants $R, t$ and $\epsilon$.\footnote{More precisely $c(R,\epsilon,t,\Delta)$ is exponential in the constants $R,t$ and $\epsilon$, and polynomial in $\Delta$ } For our purposes, 
the \uniquegames integrality gap instance that we start from has constant degree 
$\Delta$, and hence $c$ is a constant.

Before we proceed, we stress the fact that our reduction is essentially the same as the one free bit test 
$F_{\epsilon,t}$ in \cite{BK}, but casted in the language of \constraintsatisfaction. The test $F_{\epsilon,t}$ 
expects a labeling $\Lambda: W \mapsto [R]$ for the vertices of the \uniquegames instance, where 
each label $\Lambda(w) \in [R]$ is encoded using a $2^R$ bit string. To check the validity of this labeling, the verifier 
picks a vertex $v\in V$ uniformly at random, and a sequence of $t$ neighbors $w_1,\dots,w_t$ of $v$ randomly and independently from 
the neighborhood of $v$, and asks the provers about the labels of $\{w_1,\dots,w_t\}$ under the labeling $\Lambda$. 
It then accepts if the answers of the provers were convincing, i.e., the labels assigned to $\{w_1,\dots,w_t\}$ satisfy 
the edges $vw_1,\dots,vw_t$ simultaneously under $\pi_{v,w_1},\dots,\pi_{v,w_t}$ respectively. 

Instead of reading all of the $ t2^R$ bits corresponding to the $t$ labels, 
the verifier only reads a \emph{random subset} of roughly $ t 2^{\epsilon R}$ bits and is able to accept with high 
probability if the labeling was correct, and to reject with high probability if it was not correct. In our reduction, the variables 
of the \oFk instance $\cI$ corresponds to the $2^R$ bits encoding the labels of each vertex of the \uniquegames
instance we start from\footnote{For the reader familiar with hardness of approximation and PCP based hardness, we are using the long code to encode labels, 
so that each of these $2^R$ bits gives the value of the dictator 
function $f$ evaluated on a different binary string $x \in \{0,1\}^R$; 
for a valid encoding we have $f(x) = x_\ell$ where $\ell$ is the label 
that is encoded.}, 
and the constraints corresponds to all possible tests that the verifier might perform according to the random 
choice of $v$, the random neighbors $w_1,\dots,w_t$ and the random subset
of bits read by the verifier. Instead of actually enumerating all possible 
constraints, we give a distribution of constraints which is the same as the distribution over the test predicates of $F_{\epsilon,t}$. 

 We refer to the variables
of $\cI$ as follows: it has a binary variable $\langle w, x \rangle$ for each
$w\in W$ and $x\in \{0,1\}^R$.\footnote{$\langle w, x \rangle$ should be
interpreted as the long-code for $\Lambda(w)$ evaluated at $x \in \{0,1\}^R$.}
For further reference, we let
$\textrm{Var}(\cI)$ denote the set of variables of $\cI$. 
The constraints of $\cI$ are picked according to the distribution in Figure~\ref{fig:dist}. 
\begin{figure}[ht]
\begin{mdframed}[linewidth=2]
\begin{enumerate}
\item Pick a vertex $v\in V$ uniformly at random. 
\item Pick $t$ vertices $w_1,\dots,w_t$ randomly and independently from 
the neighborhood $N(v)
  = \{w\in W: vw \in E\}$.
\item Pick $x\in \{0,1\}^R$ at random. \label{s:s3}
\item Let $m = \epsilon R$. Pick indices $i_1,\dots,i_m$ randomly and independently 
from $[R]$ and let $S = \{i_1,\dots,i_m\}$ be the set of those indices. \label{s:s4}
\item Define the \emph{sub-cubes}: \begin{align*}
C_{x,S} & = \{z\in \{0,1\}^R: z_j = x_j \,\,\forall j\notin S \} \\
C_{\bar{x},S}& = \{z\in \{0,1\}^R: z_j = \bar{x}_j\,\,\forall j\notin S \} 
\end{align*} 
\item Output the constraint on the variables $\{\langle w_i,z \rangle \mid i \in [t], \pi_{v,w_i}^{-1}(z) \in C_{x,S} \cup C_{\bar{x},S}\}$ that is true if for some bit $b\in \{0,1\}$ we have
  \begin{align*}
    \begin{array}{rll}
      \langle w_i, z \rangle & = b & \mbox{for all } i \in [t] \mbox{ and } \pi_{v,w_i}^{-1}(z) \in C_{x,S} \text{, and}\\ 
      \langle w_i, z \rangle & = b\oplus 1 & \mbox{for all } i \in [t] \mbox{ and } \pi_{v,w_i}^{-1}(z) \in C_{\bar{x},S} \\ 
    \end{array}
  \end{align*}
  where $\pi(z)$ for $z\in \{0,1\}^R$ is defined as $\pi(z):=(z_{\pi(1)},z_{\pi(2)},\dots,z_{\pi(R)})$, and $\pi^{-1}$ is the inverse map, i.e., $\pi^{-1}(z) \in C_{x,S}$ is equivalent to saying that there exists $y \in C_{x,S}$ such that $\pi(y) = z$. 
\end{enumerate}
\end{mdframed}
\caption{Distribution for the \oFk constraints}
\label{fig:dist}
\end{figure}

It is crucial to observe that our distribution over the constraints
exploits the locality of a \uniquegames solution. To see this, assume
we performed the first two steps of Figure~\ref{fig:dist} and have
thus far fixed a vertex $v\in V$ and $t$ neighbors $w_1,\dots,w_t$,
and let $\mc{C}_{v,w_1,\dots,w_t}$ denote the set of all possible
constraints resulting from steps \ref{s:s3}-\ref{s:s4} (i.e., for all
possible $x\in \{0,1\}^R$ and $S \subseteq [R]$ of size $\epsilon R$).
We will argue that if there exists a local assignment of labels for
$\{v,w_1,\dots,w_t\}$ that satisfies the edges $vw_1,\dots,vw_t$, then
we can derive a local assignment for the variables
$\{\left<w,x\right>: w\in \{w_1,\dots,w_t\} \text{ and } x\in
\{0,1\}^R\}$
that satisfies at least $1-\epsilon$ fraction of the constraints in
$\mc{C}_{v,w_1,\dots,w_t}$. This essentially follows from the
completeness analysis of \cite{BK}, and is formalized in Claim
\ref{claim:completeness}. This allows us to convert a \emph{good}
Sherali-Adams solution of the starting \uniquegames $\mc{U}$, to a
\emph{good} Sherali-Adams solution of the resulting \oFk Instance
$\mc{I}$. Moreover, in order to show that $\mc{I}$ is a Sherali-Adams
integrality gap instance for the \oFk problem, we need to show that
$\textrm{OPT}(\mc{I})$ is \emph{small}. This follows from the
soundness analysis of \cite{BK}, where it was shown that:
\begin{lemma}[soundness]
  \label{lem:soundness}
  For any $\epsilon, \eta > 0$ there exists an integer $t$ so that  $\OPT(\cI) \leq \eta$ if $\OPT(\cU) \leq \delta$ where $\delta>0$ is a constant that only depends on $\epsilon, \eta$ and $t$. 
\end{lemma}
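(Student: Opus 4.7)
My plan is to establish the contrapositive by a standard influence-decoding argument that invokes directly the soundness analysis of the one free bit test $F_{\epsilon,t}$ carried out by Bansal and Khot~\cite{BK}. Suppose for contradiction that some $\{0,1\}$-assignment to the variables of $\cI$ satisfies strictly more than an $\eta$-fraction of its constraints. For each $w \in W$ view the $2^R$ assigned values $\{\langle w,x\rangle : x \in \{0,1\}^R\}$ as a Boolean function $f_w : \{0,1\}^R \to \{0,1\}$. By construction, the distribution over constraints defined in Figure~\ref{fig:dist} is identical to the distribution over accepting predicates of Bansal--Khot's test $F_{\epsilon,t}$ applied to the ``proof'' $\{f_w\}_{w \in W}$; hence $F_{\epsilon,t}$ accepts this proof with probability strictly greater than $\eta$.

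Next I would decode a labeling of the \uniquegames instance $\cU$. By averaging, there is an $\Omega(\eta)$-fraction of outer vertices $v \in V$ whose conditional acceptance probability (over the random choice of $w_1,\dots,w_t \in N(v)$, of $x$, and of $S$) is at least $\eta/2$. For each such $v$ the Fourier-analytic soundness analysis of~\cite{BK}, combining the noise operator with an invariance-principle-style argument, shows that a non-trivial fraction of neighbors $w \in N(v)$ admit a short list of ``influential'' coordinates of $f_w \circ \pi_{v,w}^{-1}$, with influence threshold and list size depending only on $\epsilon$, $\eta$, and $t$; moreover, the unique-games bijections $\pi_{v,w}$ align these influential coordinates consistently across the neighborhood of $v$. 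I then define a randomized labeling $\Lambda$ by setting $\Lambda(w)$ to be a uniformly random coordinate of $f_w$ whose influence exceeds the Bansal--Khot threshold, and $\Lambda(v)$ to be $\pi_{v,w}^{-1}(\Lambda(w))$ for a uniformly random neighbor $w \in N(v)$. The bookkeeping in~\cite{BK} then yields that $\Lambda$ satisfies at least a $\delta = \delta(\epsilon,\eta,t) > 0$ fraction of the edges of $\cU$ in expectation.

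Choosing $t$ sufficiently large, depending only on $\epsilon$ and $\eta$, ensures that the influence list size, the per-neighbor decoding probability, and the amplification provided by the $t$-fold independent sampling of neighbors are all simultaneously controlled, so that $\delta > 0$. Hence some deterministic labeling of $\cU$ satisfies at least a $\delta$-fraction of the edges, contradicting the hypothesis $\OPT(\cU) \leq \delta$.

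The main obstacle is the quantitative step of passing from a global test acceptance of $\eta$ to a labeling of $\cU$: one must establish both the existence of influential coordinates for typical neighbors and, crucially, that these coordinates are aligned through the unique-games bijections. This is the heart of~\cite{BK} and relies on the full Fourier-analytic toolkit, including noise smoothing to discard high-degree contributions and invariance-principle-type bounds to handle the product structure introduced by the $t$ independent neighbor samples. Because the predicate distribution in Figure~\ref{fig:dist} coincides verbatim with that of $F_{\epsilon,t}$, no new soundness argument is required; the lemma follows by translating the PCP-style conclusion of~\cite{BK} into the CSP-style language used here.
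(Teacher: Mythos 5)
Your proposal is correct and takes essentially the same route as the paper: the paper proves this lemma simply by observing that the constraint distribution of $\cI$ coincides with the distribution over test predicates of Bansal--Khot's one free bit test $F_{\epsilon,t}$ and citing their soundness analysis verbatim, which is exactly the reduction-to-\cite{BK} argument you give. The influence-decoding details you sketch (counting good outer vertices, extracting short lists of influential coordinates, randomized label decoding) match the explicit proof the paper writes out for the $q$-ary analogue, Lemma~\ref{lem:qsound}.
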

The above says that if we start with a \uniquegames instance $\mc{U}$
with a small optimum then we also get a $\oFk$ instance $\cI$ of small
optimum (assuming that the parameters of the reduction are set
correctly).  In~\cite{BK}, Bansal and Khot also proved the following
completeness: if $\OPT(\mc{U}) \geq 1-\zeta$, then
$\OPT(\cI) \geq 1- \zeta t - \epsilon$.  However, we need the stronger
statement: if $\cU$ has a Sherali-Adams solution of large value, then
so does $\cI$. The following lemma states this more formally, showing
that we can transform a SA solution to the \uniquegames instance $\cU$
into a SA solution to the $\oFk$ instance $\cI$ of roughly the same
value.
\begin{lemma}
\label{fcspSA}
  Let $\{\mu(S) \mid S \subseteq V \cup W, |S| \leq r\}$ be a consistent collection of local distributions defining a solution to the $r$-rounds Sherali-Adams relaxation of the regular bipartite  \uniquegames instance $\cU$. Then we can define a consistent collection of local distributions $\{\sigma(S) \mid S \subseteq  \textrm{Var}(\cI), |S| \leq r\}$ defining a solution to the $r$-rounds Sherali-Adams relaxation of the $\oFk$ instance $\cI$ so that
  \begin{align*}
  \E_{C\in \cC}\left[ \Pr_{\alpha \sim \sigma(S_C)}[ \alpha \mbox{ satisfies } C] \right] \geq (1-\epsilon) \left(1  - t \cdot \E_{vw \in E}\left[  \Pr_{(\Lambda(v), \Lambda(w) \sim \mu(\{v,w\})}[\Lambda(v) \neq \pi_{w,v}(\Lambda(w))] \right] \right),
  \end{align*}
  where $t$ and $\epsilon$ are the parameters of the reduction, and $\sigma(S_C)$ is the distribution over the set of variables in the support $S_C$ of constraint $C$.
\end{lemma}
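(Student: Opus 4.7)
The plan is to mimic the CMM-style ``round then lift'' idea: first produce a local distribution $\sigma(S)$ over assignments to any set $S$ of $\oFk$-variables by sampling a Unique Games labeling from $\mu$ and then long-coding, and second bound the expected satisfaction probability by coupling the constraint distribution with $\mu(\{v,w_1,\dots,w_t\})$ and applying the completeness calculation of the Bansal--Khot test.

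For the first step, given any $S \subseteq \textrm{Var}(\cI)$ with $|S| \leq r$, let
$W(S) = \{w \in W : \langle w, x\rangle \in S \text{ for some } x\in\{0,1\}^R\}$, noting $|W(S)| \leq |S| \leq r$. Define $\sigma(S)$ to be the distribution over $\{0,1\}^S$ obtained by drawing a labeling $\Lambda: W(S) \to [R]$ from $\mu(W(S))$ and assigning each variable $\langle w, x\rangle \in S$ the value $x_{\Lambda(w)}$ (long-code evaluation). I would then check consistency: for $S' \subseteq S$, the values on $S'$ depend on $\Lambda$ only through $\Lambda|_{W(S')}$, whose law under $\mu(W(S))$ coincides with $\mu(W(S'))$ by consistency of $\mu$; hence the marginal of $\sigma(S)$ on $S'$ equals $\sigma(S')$. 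Non-negativity is automatic. By \pref{lem:LoctoSA}, $\{\sigma(S)\}$ corresponds to a feasible SA solution.

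For the value bound, fix a random constraint $C$ determined by $(v,w_1,\dots,w_t,x,S)$ as in \pref{fig:dist}; its support lies in $\{\langle w_i, z\rangle : i\in[t]\}$, so $W(S_C) \subseteq \{w_1,\dots,w_t\}$. To evaluate $\Pr_{\alpha \sim \sigma(S_C)}[\alpha \text{ satisfies } C]$, I couple the sampling by drawing $(\Lambda(v), \Lambda(w_1), \dots, \Lambda(w_t))$ jointly from $\mu(\{v,w_1,\dots,w_t\})$ (its marginal on $\{w_1,\dots,w_t\}$ is $\mu(\{w_1,\dots,w_t\})$ by consistency). Let $E$ be the event that $\Lambda(v) = \pi_{w_i,v}(\Lambda(w_i))$ for all $i\in[t]$. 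A union bound over $i$ and the fact that each $w_i$ is a uniformly random neighbor of $v$ gives
\begin{align*}
\Pr[E^c] \leq t \cdot \E_{vw\in E}\Bigl[\Pr_{(\Lambda(v),\Lambda(w))\sim\mu(\{v,w\})}[\Lambda(v)\neq \pi_{w,v}(\Lambda(w))]\Bigr].
\end{align*}

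The main technical step is showing that, conditioned on $E$ and on $(v,w_1,\dots,w_t)$, the constraint is satisfied with probability at least $1-\epsilon$ over the remaining randomness in $x$ and $S$. Let $\ell := \Lambda(v)$, so under $E$ we have $\Lambda(w_i) = \pi_{v,w_i}(\ell)$. Then for any $z$, the assignment produces $\langle w_i, z\rangle = z_{\pi_{v,w_i}(\ell)}$, and writing $y = \pi_{v,w_i}^{-1}(z)$ we have $z_{\pi_{v,w_i}(\ell)} = y_\ell$. Hence if $\ell \notin S$, for every $z$ with $y\in C_{x,S}$ we have $y_\ell = x_\ell$, and for every $z$ with $y\in C_{\bar x,S}$ we have $y_\ell = \bar x_\ell$; so the assignment satisfies the constraint with $b = x_\ell$. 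Since $S$ is built from $m=\epsilon R$ independent draws from $[R]$, $\Pr[\ell\in S] \leq m/R = \epsilon$, giving the desired $1-\epsilon$ lower bound. Combining with the earlier union bound yields
\begin{align*}
\E_{C\in\cC}\Bigl[\Pr_{\alpha\sim\sigma(S_C)}[\alpha \text{ satisfies }C]\Bigr] \geq (1-\epsilon)\Pr[E] \geq (1-\epsilon)\Bigl(1 - t\cdot \E_{vw\in E}\bigl[\Pr_\mu[\Lambda(v)\neq \pi_{w,v}(\Lambda(w))]\bigr]\Bigr),
\end{align*}
as claimed. The main obstacle I anticipate is bookkeeping: making sure the coupling between the constraint distribution and the single local distribution $\mu(\{v,w_1,\dots,w_t\})$ is legitimate (which requires $r$ to exceed the arity of a constraint plus one, i.e., $r \geq t+1$, a constant), and that the ``$\ell \notin S$'' step handles $S$ being a set obtained from a multiset correctly so that $\Pr[\ell \in S] \leq \epsilon$ even without uniformity of $|S|$.
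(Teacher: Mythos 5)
Your proposal is correct and follows essentially the same route as the paper: the same long-code rounding of $\mu(T_S)$ to define $\sigma(S)$, the same consistency check, and the same conditioning on all $t$ edges being satisfied followed by a union bound over the edges and the ``$\Lambda(v)\notin S$'' completeness argument (which the paper isolates as Claim~\ref{claim:completeness} and proves in the appendix). The bookkeeping points you flag (that $r$ must exceed $t+1$, and that regularity is what lets one pass from a random vertex plus random neighbor to a uniformly random edge) are exactly the ones the paper also handles.
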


We remark that the above lemma says that we can transform a SA solution to the \uniquegames instance $\cU$ of value close to 1, into a SA solution to the $\oFk$ instance $\cI$ of value also close to 1. 

\begin{proofof}{Lemma \ref{fcspSA}}
Let $\{\mu (S) \mid S \subseteq V \cup W, |S| \leq r\}$ be  a solution
to the $r$-rounds SA relaxation of the \uniquegames instance $\cU$,
and recall that $\cI$ is the \oFk instance obtained from applying the reduction. We will now use the collection of consistent  local distributions of the \uniquegames instance, to construct another collection of consistent local distributions for the variables in $\textrm{Var}(\cI)$. 

For every set $S\subseteq \textrm{Var}(\cI)$ such that $|S| \leq r$, let $T_S\subseteq W$ be the subset of vertices in the \uniquegames instance defined as follows: \begin{align}
 \label{df1} T_S = \{w \in W: \left<w,x\right> \in S\}.
  \end{align}
   We construct $\sigma(S)$ from $\mu(T_S)$ in the following manner. 
  Given a labeling $\Lambda_{T_S}$ for the vertices in $T_S$ drawn from $\mu(T_S)$, define an assignment $\alpha_S$ for the variables in $S$ as follows: for a variable $\left< w, x \right> \in S$, let $\ell = \Lambda_{T_S}(w)$ be the label of $w$ according to $\Lambda_{T_S}$. Then the new assignment $\alpha_S$ sets $\alpha_S(\left<w,x\right>) := x_\ell$.\footnote{Because $\left<w,x\right>$ is supposed to be the dictator function of the $\ell$th coordinate evaluated at $x$, this is only the correct way to set the bit $\left<w,x\right>$.}
The aforementioned procedure defines a family $\{\sigma(S)\}_{S \subseteq \textrm{Var}(\cI): |S| \leq r}$ of local distributions for the variables of the \oFk instance $\mc{I}$. 
  
  To check that these local distributions are consistent,  take any $S' \subseteq S \subseteq \textrm{Var}(\cI)$ with $|S| \leq r$, and denote by $T_{S'}\subseteq T_{S}$ their corresponding set of vertices as in (\ref{df1}). We know that $\mu(T_S)$ and $\mu(T_{S'})$ agree on $T_{S'}$ since the distributions $\{\mu(S)\}$ defines a feasible Sherali-Adams solution for $\mc{U}$, and hence by our construction, the local distributions $\sigma(S)$ and $\sigma(S')$ agree on $S'$. Combining all of these together, we get that $\{\sigma(S) \mid S \subseteq  \textrm{Var}(\cI), |S| \leq r\}$ defines a feasible solution for the $r$-round Sherali-Adams relaxation of the $\oFk$ instance $\cI$. 
  
  It remains to bound the value of this feasible solution, i.e., \begin{align}
  \label{ex1}
    \E_{C\in \cC}\left[ \Pr_{\alpha \sim \sigma(S_C)}[\alpha \mbox{ satisfies } C] \right].
  \end{align}
   In what follows, we denote by $\psi(.)$ the operator mapping a labeling of the vertices in $T_S$ to an assignment  for the variables in $S$, i.e., $\psi(\Lambda_{T_S}) = \alpha_S$.
  
 First note that a constraint $C \in \mc{C}$ of the \oFk instance $\mc{I}$ is defined by the choice of the vertex $v\in V$, the sequence of $t$ neighbors $\mc{W}_v = (w_1,\dots,w_t)$, the random $x\in \{0,1\}^R$, and the random set $S \subset [R]$ of size $\epsilon R$. We refer to such a constraint $C$ as $C(v,\mc{W}_v,x,S)$. Thus we can rewrite (\ref{ex1}) as \begin{align}
 \label{ex0}
 \E_{v,w_1,\dots,w_t} \left[\Pr_{\Lambda \sim \mu(\{v,w_1,\dots,w_t\}), x ,S} \left[\psi(\Lambda) \mbox{ satisfies } C(v,\mc{W}_v,x,S) \right] \right].
 \end{align}
Recall that the assignment $\psi(\Lambda)$ for the variables $\left\{\left<w,z\right>: w\in \mc{W}_v \text{ and } z\in\{0,1\}^R \right\}$ is derived from the labeling of the vertices in $\mc{W}_v$ according to $\Lambda$. It was shown in \cite{BK} that if $\Lambda$ satisfies the edges $vw_1,\dots,vw_t$ simultaneously, then $\psi(\Lambda)$ satisfies  $C(v,\mc{W}_v,x,S)$ with \emph{high probability}. This is formalized in Claim~\ref{claim:completeness}, whose proof appears in Appendix~\ref{app:proofofclaim}. 

 \begin{claim}
 \label{claim:completeness}
 If $\Lambda$ satisfies $vw_1,\dots,vw_t$ simultaneously, then $\psi(\Lambda)$ satisfies $C(v,\mc{W}_v,x,S)$ with probability at least $1-\epsilon$. Moreover,  if we \emph{additionally} have that  $\Lambda(v) \notin S$, then $\psi(\Lambda)$ always satisfies $C(v,\mc{W}_v,x,S)$.
 \end{claim}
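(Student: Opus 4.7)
\medskip

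\noindent\textbf{Proof plan.} The plan is to unwind the definition of $\psi(\Lambda)$ and reduce the claim to an elementary union bound over the random index set $S$.

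First I would set $\ell := \Lambda(v)$ and, using the assumption that $\Lambda$ satisfies each edge $vw_i$, record the identity $\Lambda(w_i) = \pi_{v,w_i}(\ell)$ for every $i \in [t]$. By construction of $\psi$, for any $z \in \{0,1\}^R$ we have $\psi(\Lambda)(\langle w_i, z\rangle) = z_{\Lambda(w_i)} = z_{\pi_{v,w_i}(\ell)}$. Writing $z = \pi_{v,w_i}(y)$ with $y = \pi_{v,w_i}^{-1}(z)$, a direct computation with the coordinate-action $\pi(y)_j = y_{\pi(j)}$ then yields
\[
\psi(\Lambda)(\langle w_i, z\rangle) \;=\; y_{\ell}.
\]
Thus the bit that $\psi(\Lambda)$ assigns to $\langle w_i,z\rangle$ depends only on the $\ell$-th coordinate of $y = \pi_{v,w_i}^{-1}(z)$, uniformly across all $i\in [t]$.

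For the ``moreover'' part, assume $\ell \notin S$. Then for every $y \in C_{x,S}$ we have $y_\ell = x_\ell$, and for every $y \in C_{\bar x,S}$ we have $y_\ell = \bar x_\ell = x_\ell \oplus 1$. Setting $b := x_\ell$, the computation above gives $\psi(\Lambda)(\langle w_i,z\rangle) = b$ whenever $\pi_{v,w_i}^{-1}(z) \in C_{x,S}$ and $\psi(\Lambda)(\langle w_i,z\rangle) = b \oplus 1$ whenever $\pi_{v,w_i}^{-1}(z) \in C_{\bar x,S}$, for every $i \in [t]$ simultaneously. This is precisely the accepting pattern of $C(v,\mathcal{W}_v,x,S)$, so the constraint is satisfied deterministically.

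For the first part, I would bound the probability (over the choice of $S$, with $x$ and $v,w_1,\ldots,w_t$ fixed) that $\ell = \Lambda(v)$ lands in $S$. Since $S = \{i_1,\ldots,i_m\}$ with $m = \epsilon R$ indices drawn independently and uniformly from $[R]$, a union bound gives $\Pr[\ell \in S] \le m/R = \epsilon$. Combined with the previous paragraph this shows that $\psi(\Lambda)$ satisfies $C(v,\mathcal{W}_v,x,S)$ with probability at least $1-\epsilon$, proving the claim. The only subtlety is the coordinate bookkeeping connecting $\pi_{v,w_i}$, $y$, and $z$ in the first step; once that identity is established, everything else is a one-line union bound.
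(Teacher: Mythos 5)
Your proposal is correct and follows essentially the same route as the paper's own proof in Appendix~B: reduce the value of $\psi(\Lambda)(\langle w_i,z\rangle)$ to $y_{\Lambda(v)}$ for $y=\pi_{v,w_i}^{-1}(z)$ via the coordinate action of the permutation and the edge-satisfaction identity, observe that $\Lambda(v)\notin S$ forces the constant pattern $b=x_{\Lambda(v)}$ on $C_{x,S}$ and $b\oplus 1$ on $C_{\bar x,S}$, and bound $\Pr[\Lambda(v)\in S]\le \epsilon$ by a union bound over the $m=\epsilon R$ independently drawn indices. The only caveat is the direction convention for $\pi_{v,w_i}$ (you write $\Lambda(w_i)=\pi_{v,w_i}(\Lambda(v))$ while the paper's appendix uses $\pi_{v,w_i}(\Lambda(w_i))=\Lambda(v)$), an ambiguity already present in the paper itself and which you correctly flag as the one piece of bookkeeping to pin down.
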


It now follows from Claim~\ref{claim:completeness} that for the assignment $\psi(\Lambda)$ to satisfy the constraint $C(v,\mc{W}_v,x,S)$, it is sufficient that the following two conditions hold \emph{simultaneously}: 
\begin{enumerate}
 \item the labeling $\Lambda$ satisfies the edges $vw_1,\dots,vw_t$;
 \item the label of $v$ according to $\Lambda$ lies outside the set $S$.
\end{enumerate}
 
 Equipped with this, we can use conditioning to lower-bound the probability inside the expectation in (\ref{ex0}) by a product of two probabilities, where the first is \begin{align}
\label{pr1}  \Pr_{\Lambda \sim \mu(\{v,w_1,\dots,w_t\}),x,S} \left[\psi(\Lambda) \mbox{ satisfies } C(v,\mc{W}_v,x,S)  | \Lambda \mbox{ satisfies } vw_1,\dots, vw_t \right]
 \end{align}
 and the second is \begin{align*}
 \Pr_{\Lambda \sim \mu(\{v,w_1,\dots,w_t\})} \left[ \Lambda \mbox{ satisfies } vw_1,\dots, vw_t \right] .
 \end{align*}
 Thus using Claim \ref{claim:completeness}, we get 
\begin{align}
 & \E_{C\in \cC}\left[ \Pr_{\alpha \sim \sigma(S_C)}[\alpha \mbox{ satisfies } C] \right] \geq  (1-\epsilon)\cdot \E_{v,w_1,\dots,w_t} \left[ \Pr_{\Lambda \sim \mu(\{v,w_1,\dots,w_t\})} \left[ \Lambda \mbox{ satisfies } vw_1,\dots, vw_t \right]  \right] \nonumber \\
  \label{ub}   & \geq (1-\epsilon)\left( 1 - \sum_{i=1}^t \E_{v,w_1,\dots,w_t} \left[ \Pr_{\Lambda \sim \mu(\{v,w_1,\dots,w_t\})} \left[ \Lambda \mbox{ does not satisfy } vw_i \right]  \right] \right)\\
  \label{localSat}     & = (1-\epsilon)\left( 1 - \sum_{i=1}^t \E_{v,w_1,\dots,w_t} \left[ \Pr_{\Lambda \sim \mu(\{v,w_i\})} \left[ \Lambda \mbox{ does not satisfy } vw_i \right]  \right] \right)\\
 \label{lasteq}   & = (1-\epsilon) \cdot\left( 1 -  t \cdot \E_{v,w} \left[\Pr_{\Lambda \sim \mu(\{v,w\})} \left[ \Lambda \mbox{ does not satisfy } vw \right] \right]\right)
 \end{align}
 where (\ref{ub}) follows from the union bound, and (\ref{localSat})
 is due to the fact that the local distributions of the \uniquegames
 labeling are consistent, and hence agree on $\{v,w_i\}$. Note that
 the only difference between what we have proved thus far and the
 statement of the lemma, is that the expectation in (\ref{lasteq}) is
 taken over a random vertex $v$ and a random vertex $w\in N(v)$, and
 not random edges. However, our \uniquegames instance we start from is
 regular, so picking a vertex $v$ at random and then a random neighbor
 $w\in N(v)$, is equivalent to picking an edge at random from
 $E$. This concludes the proof.
\end{proofof}

Combining Theorem \ref{UGgap} with Lemmata \ref{lem:soundness} and \ref{fcspSA}, we get the following Corollary.
\begin{corollary}
\label{cor:fcspSA}
For every $\epsilon, \eta>0 $, there exist an arity $k$ and a real $\kappa > 0$ depending on $\epsilon$ and $\eta$ such that for infinitely many $n$ there exists an instance of \oFk of arity $k$ over $n$ variables, so that \begin{enumerate}
\item The value of the optimal solution is at most $\eta$.
\item There exists a solution to the LP relaxation obtained after $r =
  n^{\kappa}$ rounds of the Sherali-Adams relaxation of value at least $1-\epsilon$.
\end{enumerate}
\end{corollary}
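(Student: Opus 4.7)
The plan is to combine the three ingredients stated just before the corollary: \prettyref{UGgap} (SA gap for \uniquegames), \prettyref{lem:soundness} (soundness of the reduction), and \prettyref{fcspSA} (SA solution transfer). The non-trivial work is setting the five interacting parameters ($R$, $\delta$, $\epsilon$ of the reduction, $t$, and the SA-value slack for the UG instance) so that all hypotheses line up.

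First, given the target error $\epsilon, \eta > 0$, I would set the reduction's error parameter to be $\epsilon' = 2^{-q}$ where $q$ is the smallest integer with $2^{-q} \leq \epsilon/3$; this ensures both $\epsilon' R \in \N$ whenever $R = 2^\ell$ with $\ell \geq q$, and $\epsilon' \leq \epsilon/3$. Invoking \prettyref{lem:soundness} with this $\epsilon'$ and with error bound $\eta$ supplies an arity parameter $t$ and a soundness threshold $\delta_0 > 0$ (depending only on $\epsilon', \eta, t$) such that $\OPT(\cU) \leq \delta_0$ implies $\OPT(\cI) \leq \eta$.

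Next, I would fix the UG label alphabet size $R = 2^\ell$ for $\ell$ large enough that $\ell \geq q$ and $2/R \leq \delta_0$, and apply \prettyref{UGgap} with this $R$, with its internal parameter $\delta := 1/2$ (so $(1+\delta)/R \leq \delta_0$), and with its SA-slack parameter set to $\epsilon_{\mathrm{UG}} := \epsilon/(3t)$. This yields, for infinitely many $n_0$, a $\Delta$-regular UG instance $\cU$ on $n_0$ vertices whose optimum is at most $\delta_0$ and which admits an $n_0^{\kappa_0}$-round SA solution of value at least $1 - \epsilon/(3t)$, for some $\kappa_0 = \kappa_0(\epsilon) > 0$. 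I would then apply the reduction of \prettyref{sec:red} to $\cU$ with parameters $\epsilon'$ and $t$ to obtain a $\oFk$ instance $\cI$ of arity $k = O(t \cdot 2^{\epsilon' R})$ (a constant) on $n := 2^R |W| = \Theta(n_0)$ variables.

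To conclude, the soundness claim $\OPT(\cI) \leq \eta$ is immediate from \prettyref{lem:soundness}. For the SA value, \prettyref{fcspSA} transforms the SA solution of $\cU$ into an SA solution of $\cI$ whose value is at least
\begin{equation*}
(1 - \epsilon')\Bigl(1 - t\cdot \E_{vw \in E}\bigl[\Pr_{\mu(\{v,w\})}[\Lambda(v) \neq \pi_{w,v}(\Lambda(w))]\bigr]\Bigr) \geq (1 - \epsilon/3)(1 - t \cdot \epsilon/(3t)) \geq 1 - \epsilon,
\end{equation*}
where the first bracketed quantity is bounded using that the SA value of the UG solution is at least $1 - \epsilon/(3t)$ (so the expected unsatisfied fraction is at most $\epsilon/(3t)$). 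Since $R$ and $t$ are constants and $n = \Theta(n_0)$, the $n_0^{\kappa_0}$-round SA solution of $\cU$ yields an $n^{\kappa}$-round SA solution of $\cI$ for some $\kappa = \kappa(\epsilon,\eta) > 0$, giving both claims of the corollary. The main delicate point is the order of quantification: we must pick $\epsilon', t, \delta_0, R$ before invoking \prettyref{UGgap}, so that the SA slack $\epsilon_{\mathrm{UG}} = \epsilon/(3t)$ is chosen only after $t$ has been produced by \prettyref{lem:soundness}.
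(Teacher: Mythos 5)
Your overall strategy is exactly the paper's: instantiate \prettyref{UGgap}, feed the resulting instance through the Bansal--Khot-style reduction, and invoke \prettyref{lem:soundness} and \prettyref{fcspSA} for soundness and the SA value respectively. Your parameter bookkeeping (choosing $\epsilon' = 2^{-q} \leq \epsilon/3$, extracting $t$ and $\delta_0$ from \prettyref{lem:soundness} \emph{before} setting the UG slack to $\epsilon/(3t)$, and the final $(1-\epsilon/3)(1-\epsilon/3) \geq 1-\epsilon$ computation) is correct and in fact more explicit about the order of quantification than the paper's proof.

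There is, however, one missing step. The reduction of \prettyref{sec:red}, as well as \prettyref{lem:soundness} and \prettyref{fcspSA}, are all stated for a regular \emph{bipartite} \uniquegames instance $\cU = (G,[R],\Pi)$ with $G = (V,W,E)$, whereas \prettyref{UGgap} only supplies a $\Delta$-regular graph with no bipartiteness guarantee. You apply the reduction directly to the CMM instance and even write $n := 2^R|W|$, referring to a part $W$ that need not exist. The paper spends most of its proof of this corollary repairing exactly this: it builds a bipartite instance $\cU'$ on the double cover (each $v$ split into $v_1 \in V_1$, $v_2 \in V_2$, each edge duplicated), proves $\OPT(\cU') \leq \delta$ whenever $\OPT(\cU) \leq \delta/4$ via a randomized decoding of a labeling of $\cU'$ back to $\cU$, and extends the consistent local distributions to the doubled vertex set without loss in SA value. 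Without this (or an equivalent appeal to a bipartite version of \prettyref{UGgap}), the lemmas you cite do not apply to the instance you constructed; note also that the constant-factor loss in the optimum ($\delta$ vs.\ $\delta/4$) must be absorbed into your choice of $\delta_0$. The fix is standard, but as written the proof has a hole at precisely the point where the paper does its real work.
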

\begin{proof}
Let $\mc{U}=(G, [R],\Pi)$ be a $\Delta$-regular \uniquegames instance of Theorem \ref{UGgap} that is $\delta/4$-satisfied with an ${n}^{2 \kappa}_G$-rounds Sherali-Adams solution of value $1-\zeta$, where $n_G$ is the number of vertices in $G$. Note that $G=(V,E)$ is not necessarily bipartite, and our starting instance of the reduction is bipartite. To circumvent this obstacle, we construct a new bipartite \uniquegames instance $\mc{U}'$ from $\mc{U}$ that is $\delta$-satisfied with a Sherali-Adams solution of the same value, i.e., $1-\zeta$. We will later use this new instance to construct our \oFk instance over $n$ variables that satisfies the properties in the statement of the corollary.

 In what follows we think of $\delta,\zeta$ and $R$ as functions of $\epsilon$ and $\eta$, and hence fixing the latter two parameters enables us to fix the constant $t$ of Lemma~\ref{lem:soundness}, and the constant degree $\Delta$ of Theorem \ref{UGgap}. The aforementioned parameters are then sufficient to provide us with the constant arity $k$ of the \oFk instance, along with the number of its corresponding variables and constraints, that is linear in $n_G$.

We now construct the new \uniquegames instance $\mc{U}'$ over a graph $G'=(V_1,V_2,E')$ and the label set $[R]$ from $\mc{U}$ in the following manner: \begin{itemize}
\item Each vertex $v\in V$ in the original graph is represented by two vertices $v_1,v_2$, such that $v_1 \in V_1$ and $v_2 \in V_2$.
\item Each edge $e = uv \in E$ is represented by two edges $e_1 = u_1v_2$ and $e_2=u_2v_1$ in $E'$. The bijection maps $\pi_{u_1,v_2}$ and $\pi_{u_2,v_1}$ are the same as $\pi_{u,v}$.  
\end{itemize}
Note that $G'$ is bipartite  by construction, and since $G$ is $\Delta$-regular, we get that $G'$ is also $\Delta$-regular.  

We claim that no labeling $\Lambda':V_1\cup V_2 \mapsto [R]$ can satisfy more than $\delta$ fraction of the edges in $\mc{U}'$. Indeed, assume towards contradiction that there exists a labeling $\Lambda':V_1\cup V_2 \mapsto [R]$ that satisfies at least $\delta$ fraction of the edges. We will derive a labeling $\Lambda:V \mapsto [R]$ that satisfies at least $\delta/4$ fraction of the edges in $\mc{U}$ as follows: \begin{itemize}
\item[] For every vertex $v\in V$, let $v_1\in V_1$ and $v_2\in V_2$ be its representative vertices in $G'$. Define $\Lambda(v)$ to be either $\Lambda'(v_1)$ or $\Lambda'(v_2)$ with equal probability. 
\end{itemize}
Assume that at least one edge of $e_1=u_1v_2$ and $e_2=u_2v_1$ is satisfied by $\Lambda'$, then the edge $e=uv\in E$ is satisfied with probability at least $1/8$, and hence the expected fraction of satisfied edges in $\mc{U}$ by $\Lambda$ is at least $\delta/4$.

Moreover, we can extend the $r$-rounds Sherali-Adams solution of $\mc{U}$ $\{\mc{D}(S)\}_{S \subseteq V:|S| \leq r}$, to a $r$-rounds Sherali-Adams solution $\{\mc{D}'(S)\}_{S \subseteq V_1 \cup V_2:|S| \leq r}$ for $\mc{U}'$ with the same value. This can be done as follows: For every set $S = S_1 \cup S_2 \subseteq V_1 \cup V_2 $ of size at most $r$, let $S_\mc{U} \subseteq V$ be the set of their corresponding vertices in $G$ and define the local distribution $\mc{D}'(S)$ by mimicking the local distribution $\mc{D}(S_\mc{U})$, repeating labels if the same vertex $v \in S_{\mc{U}}$ has its two copies $v_1$ and $v_2$ in $S$.

Now let $\mc{I}$ be the \oFk instance over $n$ variables obtained by our reduction from the \uniquegames instance $\mc{U}'$, where $n = 2^R n_G$. Since $\OPT(\mc{U}')\leq \delta$, we get from Lemma \ref{lem:soundness} that $\OPT(\mc{I}) \leq \eta$. Similarly, we know from Lemma~\ref{fcspSA} that using an $n_G^{2\kappa}$-rounds Sherali-Adams solution for $\mc{U}'$, we can define an $n^{\kappa}$-rounds Sherali-Adams solution of $\cI$ of roughly the same value, where we used the fact that $R$ is a constant and hence $\left(2^{-2R\kappa} n^{2\kappa} \right)>n^{\kappa}$ for sufficiently large values of $n$. This concludes the proof.
\end{proof}

We have thus far proved that the \oFk problem fools the Sherali-Adams
relaxation even after $n^\kappa$ many rounds for some constant $1>  \kappa > 0$.

\section{LP-hardness of \vertexcover and \independentset }
\label{LPSec}

\subsection{Reduction of LP relaxations}
\label{sec:lpreduc}

We will now briefly introduce a formal framework for reducing
between problems that is a stripped down version of the framework 
due to Braun \emph{et al}, with a few notational changes. The 
interested reader can read the details of the full original 
framework in~\cite{BPZ}.

We start with the definition of an optimization problem.

\begin{definition} \label{optProb} An \emph{optimization problem}
  $\Pi=(\mc{S},\mf{I})$ consists of a (finite) set $\mc{S}$ of 
  feasible solutions and a set $\mf{I}$ of instances. 
  Each instance $\mc{I} \in \mf{I}$ specifies an 
  objective function from $\mc{S}$ to $\R_+$. We will denote this 
  objective function by $\Val_{\mc{I}}$ for maximization
  problems, and $\Cost_\mc{I}$ for minimization problems.
  We let $\OPT(\mc{I}) \coloneqq \max_{S \in \mc{S}} \Val_{\mc{I}}(S)$
  for a maximization problem and $\OPT(\mc{I}) \coloneqq
  \min_{S \in \mc{S}} \Cost_{\mc{I}}(S)$ for a minimization 
  problem.
\end{definition}

With this in mind we can give a general definition of the notion 
of an LP relaxation of an optimization problem $\Pi$. We deal
with minimization problems first.

\begin{definition}
  Let $\rho \geq 1$. A \emph{factor-$\rho$ LP relaxation} 
  (or \emph{$\rho$-approximate LP relaxation}) for a 
  minimization problem $\Pi=(\mc{S},\mf{I})$ is a linear system 
  $Ax \geq b$ with $x \in \mathbb{R}^d$ together with the following 
  realizations: 
  \begin{enumerate}[(i)]
\item {\bf Feasible solutions } as vectors $x^S \in \mathbb{R}^d$ for every $S \in \mc{S}$ so that \begin{align*}
Ax^S \geq b && \text{for all } S \in \mc{S}
\end{align*}
\item {\bf Objective functions }via affine functions $f_{\mc{I}} : \mathbb{R}^d \to \mathbb{R}$ for every $\mc{I} \in \mf{I}$ such that \begin{align*}
f_{\mc{I}}(x^S) = \Cost_{\mc{I}}(S) && \text{for all } S \in \mc{S}
\end{align*}
\item {\bf Achieving approximation guarantee $\rho$} via requiring 
\begin{align*}
\OPT(\mc{I}) \leq \rho \LP(\mc{I}) && \text{for all } \mc{I} \in \mf{I}
\end{align*}
where $\LP(\mc{I}) := \min \left\{f_{\mc{I}}(x) \mid Ax \geq b \right\}$.
\end{enumerate}
\end{definition}

Similarly, one can define factor-$\rho$ LP relaxations of a 
maximization problem for $\rho \geq 1$. In our context, the 
concept of a \((c,s)\)-approximate LP relaxation will turn out
to be most useful. Here, $c$ is the \emph{completeness} and 
$s \leq c$ is the \emph{soundness}. For a maximization problem, this 
corresponds to replacing condition (iii) above with
\begin{enumerate}[(i)']
\setcounter{enumi}{2}
\item {\bf Achieving approximation guarantee $(c,s)$} via requiring
\begin{align*}
\OPT(\mc{I}) \leq s \Longrightarrow \LP(\mc{I}) \leq c &&\text{for all } 
\mc{I} \in \mf{I}\,.
\end{align*}
\end{enumerate}

The \emph{size} of an LP relaxation is the number of inequalities 
in $Ax \geq b$. We let $\fc_+(\Pi,\rho)$ denote the minimum size of
a factor-$\rho$ LP relaxation for $\Pi$. In the terminology of~\cite{BPZ},
this is the \emph{$\rho$-approximate LP formulation complexity} of $\Pi$. 
We define \(\fc_+(\Pi,c,s)\) similarly.

In this framework problems can be naturally reduced to each other. We
will use the following restricted form of reductions. 

\begin{definition}
\label{RedDef}
Let $\Pi_1=(\mc{S}_1,\mf{I}_1)$ be a maximization problem and 
$\Pi_2=(\mc{S}_2,\mf{I}_2)$ be a minimization problem. A 
\emph{reduction from $\Pi_1$ to $\Pi_2$} consists 
of two maps, one $\mc{I}_1 \mapsto \mc{I}_2$ from $\mf{I}_1$
to $\mf{I}_2$ and the other $S_1 \mapsto S_2$ from $\mc{S}_1$
to $\mc{S}_2$, subject to 
\begin{align*}
\Val_{\mc{I}_1}(S_1) = \mu_{\mc{I}_1} - \zeta_{\mc{I}_1} \cdot \Cost_{\mc{I}_2}(S_2)
&& \mc{I}_1 \in \mf{I}_1, S_1 \in \mc{S}_1
\end{align*}
where \(\mu_{\mc{I}_1}\) is called the \emph{affine shift} and 
\(\zeta_{\mc{I}_1} \geq 0\) is a normalization factor.

We say that the reduction is \emph{exact} if additionally 
\begin{align*}
\OPT(\mc{I}_1) = \mu_{\mc{I}_1} - \zeta_{\mc{I}_1} \cdot \OPT(\mc{I}_2)
&& \mc{I}_1 \in \mf{I}_1\,.
\end{align*}
\end{definition}

The following result is a special case of a more general result by
\cite{BPZ}. We give a proof for completeness.

\begin{theorem}
\label{BPZmain}
Let \(\Pi_1\) be a maximization problem and let \(\Pi_2\) be a 
minimization problem. Suppose that there exists an exact reduction 
from \(\Pi_1\) to \(\Pi_2\) with \(\mu \coloneqq \mu_{\mc{I}_1}\) 
constant for all \(\mc{I}_1 \in \mf{I}_1\). Then, 
$\fc_+(\Pi_1,c_1,s_1) \leq \fc_+(\Pi_2,\rho_2)$ where 
$\rho_2 \coloneqq \frac{\mu - s_1}{\mu - c_1}$ (assuming
$\mu > c_1 \geq s_1$).
\end{theorem}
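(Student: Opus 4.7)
The plan is to take a factor-$\rho_2$ LP relaxation of $\Pi_2$ and transport it back along the reduction to obtain a $(c_1,s_1)$-approximate LP relaxation of $\Pi_1$ of the same size. Concretely, let $Ax \geq b$ (with $x \in \R^d$) be a factor-$\rho_2$ LP relaxation for $\Pi_2$, with feasible vectors $x^{S_2} \in \R^d$ for $S_2 \in \mc{S}_2$ and affine objective functionals $f_{\mc{I}_2}$ for $\mc{I}_2 \in \mf{I}_2$. I keep the linear system $Ax \geq b$ as it is (so the size is preserved), and lift the other data through the reduction maps $S_1 \mapsto S_2$ and $\mc{I}_1 \mapsto \mc{I}_2$.

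First I would define the new realizations. For each $S_1 \in \mc{S}_1$, set $y^{S_1} \coloneqq x^{S_2}$ where $S_2$ is the image of $S_1$; then trivially $A y^{S_1} \geq b$, giving feasibility. For each $\mc{I}_1 \in \mf{I}_1$, define the affine functional
\[
g_{\mc{I}_1}(x) \coloneqq \mu - \zeta_{\mc{I}_1} \cdot f_{\mc{I}_2}(x),
\]
which is affine because $f_{\mc{I}_2}$ is. Using the defining property of the reduction (Definition~\ref{RedDef}) together with $f_{\mc{I}_2}(x^{S_2}) = \Cost_{\mc{I}_2}(S_2)$, one gets $g_{\mc{I}_1}(y^{S_1}) = \mu - \zeta_{\mc{I}_1} \Cost_{\mc{I}_2}(S_2) = \Val_{\mc{I}_1}(S_1)$, so the new affine functionals correctly realize the objective on feasible vectors.

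Next I would verify the $(c_1,s_1)$-approximation guarantee. Let $\LP_1(\mc{I}_1) \coloneqq \max\{g_{\mc{I}_1}(x) \mid Ax \geq b\}$ and $\LP_2(\mc{I}_2) \coloneqq \min\{f_{\mc{I}_2}(x) \mid Ax \geq b\}$. Since $g_{\mc{I}_1}(x) = \mu - \zeta_{\mc{I}_1} f_{\mc{I}_2}(x)$ and $\zeta_{\mc{I}_1} \geq 0$, maximizing $g_{\mc{I}_1}$ amounts to minimizing $f_{\mc{I}_2}$, so $\LP_1(\mc{I}_1) = \mu - \zeta_{\mc{I}_1} \LP_2(\mc{I}_2)$. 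Assume $\OPT(\mc{I}_1) \leq s_1$. Exactness of the reduction gives $\zeta_{\mc{I}_1} \OPT(\mc{I}_2) = \mu - \OPT(\mc{I}_1) \geq \mu - s_1$, and the factor-$\rho_2$ guarantee for $\Pi_2$ gives $\LP_2(\mc{I}_2) \geq \OPT(\mc{I}_2)/\rho_2$. Combining,
\[
\LP_1(\mc{I}_1) = \mu - \zeta_{\mc{I}_1} \LP_2(\mc{I}_2) \leq \mu - \frac{\zeta_{\mc{I}_1} \OPT(\mc{I}_2)}{\rho_2} \leq \mu - \frac{\mu - s_1}{\rho_2} = c_1,
\]
where the last equality is the definition $\rho_2 = (\mu - s_1)/(\mu - c_1)$. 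This establishes condition (iii)' for completeness $c_1$ and soundness $s_1$, so the constructed LP is indeed $(c_1,s_1)$-approximate.

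There is no real obstacle here; the main subtleties are purely bookkeeping. The affine-shift parameter $\mu$ must be a constant independent of $\mc{I}_1$ (this is exactly the hypothesis invoked), otherwise $g_{\mc{I}_1}$ would not be a legitimate affine functional in the sense of the definition, since instance-dependent constants are permitted but mixing them with the fixed $\mu$ is only legitimate when $\mu$ does not itself depend on $\mc{I}_1$. Once this is in place, the inequality chain above is tight at $\rho_2 = (\mu - s_1)/(\mu - c_1)$, and the size of the LP is unchanged since we reused the same system $Ax \geq b$, yielding $\fc_+(\Pi_1, c_1, s_1) \leq \fc_+(\Pi_2, \rho_2)$.
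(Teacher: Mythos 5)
Your proposal is correct and follows essentially the same route as the paper's own proof: reuse the system $Ax \geq b$, pull back the realizations via $x^{S_1} \coloneqq x^{S_2}$ and $f_{\mc{I}_1} \coloneqq \mu - \zeta_{\mc{I}_1} f_{\mc{I}_2}$, and chain the factor-$\rho_2$ guarantee with exactness of the reduction to bound $\LP(\mc{I}_1)$ by $c_1$. The only cosmetic difference is that you lower-bound $\zeta_{\mc{I}_1}\OPT(\mc{I}_2)$ by $\mu - s_1$ before dividing by $\rho_2$, whereas the paper substitutes $\OPT(\mc{I}_1)$ first and bounds it by $s_1$ at the end; the two computations are identical.
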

\begin{proof}
Let $Ax \geq b$ by a $\rho_2$-approximate LP relaxation for 
$\Pi_2 = (\mc{S}_2,\mf{I}_2)$, with realizations $x^{S_2}$ for 
$S_2 \in \mc{S}_2$ and $f_{\mc{I}_2} : \R^d \to \R$ for $\mc{I}_2
\in \mf{I}_2$. We use the same system $Ax \geq b$ to define a
$(c_1,s_1)$-approximate LP relaxation of the same size for 
$\Pi_1 = (\mc{S}_1,\mf{I}_2)$ by letting 
$x^{S_1} \coloneqq x^{S_2}$ where $S_2$ is the solution of $\Pi_2$
corresponding to $S_1 \in \mc{S}_1$ via the reduction, and 
similarly $f_{\mc{I}_1} \coloneqq \mu - \zeta_{\mc{I}_1} f_{\mc{I}_2}$
with $\zeta_{\mc{I}_1} \geq 0$ where $\mc{I}_2$ is the instance
of $\Pi_2$ to which $\mc{I}_1$ is mapped by the reduction and \(\mu\)
is the affine shift independent of the instance \(\mc{I}_1\).

Then conditions (i) and (ii) of Definition \ref{RedDef}
are automatically satisfied. It suffices to check (iii)' with 
our choice of $\rho_2$, for the given completeness $c_1$ 
and soundness $s_1$. Assume that $\OPT(\mc{I}_1) \leq s_1$
for some instance $\mc{I}_1$ of $\Pi_1$. Then 
\begin{align*}
\LP(\mc{I}_1) 
&= \mu - \zeta_{\mc{I}_1} \LP(\mc{I}_2) 
&& \textrm{(by definition of $f_{\mc{I}_1}$, and since $\zeta_{\mc{I}_1} \geq 0$)}\\
&\leq \mu - \frac{1}{\rho_2} \cdot \zeta_{\mc{I}_1} \cdot \OPT(\mc{I}_2)
&& \textrm{(since $\OPT(\mc{I}_2) \leq \rho_2 \LP(\mc{I}_2)$)}\\
&= \mu + \frac{\mu - c_1}{\mu - s_1} \cdot (\underbrace{\OPT(\mc{I}_1)}_{\leq s_1} - \mu)
&& \textrm{(since the reduction is exact)}\\
&\leq  \mu + \frac{\mu - c_1}{\mu - s_1} \cdot (s_1 - \mu)\\
&= c_1\,,
\end{align*}
as required. Thus $Ax \geqslant b$ gives a $(c_1,s_1)$-approximate
LP relaxation of $\Pi_1$. The theorem follows.
\end{proof}

We will also derive inapproximability of \independentset from a 
reduction between maximization problems. In this case the 
inapproximability factor obtained is of the form 
\(\rho_2 = \frac{\mu + c_1}{\mu + s_1}\).

\subsection{Hardness for \vertexcover and \independentset}
\label{sec:lphardness}
We will now reduce \oFk to \vertexcover with the reduction mechanism
outlined in the previous section, which will yield the desired LP
hardness for the latter problem. 

We start by recasting \vertexcover, \independentset and \oFk
in our language. The two first problems are defined on a fixed
graph $G = (V,E)$.

\begin{problem}[$\vertexcover(G)$]
\label{prob:vc}
  The set of feasible solutions $\mc{S}$ consists of all possible 
  vertex covers $U \subseteq V$, and there is one instance 
  $\mc{I} = \mc{I}(H) \in \mf{I}$ for each induced subgraph $H$ of
  $G$. For each vertex cover \(U\) we have \(\Cost_{\mc{I}(H)}(U) \coloneqq
  |U \cap V(H)|\) being the size of the induced vertex cover in \(H\). 
\end{problem}
  
Note that the instances we consider have 0/1 costs, which makes
our final result stronger: even restricting to 0/1 costs does not
make it easier for LPs to approximate \vertexcover. Similarly, 
for the independent set problem we have:

\begin{problem}[$\independentset(G)$]
\label{prob:indepSet}
  The set of feasible solutions $\mc{S}$ consists
  of all possible independent sets of $G$, and there is one instance 
  $\mc{I} = \mc{I}(H) \in \mf{I}$ for each induced subgraph $H$ of
  $G$. For each independent set \(I \in \mc{S}\), we have that
  \(\Val_{\mc{I}(H)}(I) \coloneqq |I \cap V(H)|\) is the size of 
  the induced independent set of \(H\).
\end{problem}

Finally, we can recast \oFk as follows. Let $n, k \in \N$ be fixed,
with $k \leq n$.

\begin{problem}[$\oFk(n,k)$]
 \label{prob:ofk}
 The set of feasible solutions \(\mc{S}\) consists of all 
 possible variable assignments, i.e., the vertices of the 
 \(n\)-dimensional 0/1 hypercube and there is one instance 
 $\mc{I} = \mc{I}(\mc{P})$ for each possible set $\mc{P} =
 \{P_1, \ldots, P_m\}$ 
 of one free bit predicates of arity $k$. As before, for 
 an instance \(\mc{I} \in \mf{I}\) and an assignment 
 \(x \in \{0,1\}^n\), \(\Val_{\mc{I}}(x)\) is the fraction 
 of predicates $P_i$ that $x$ satisfies (see Definition~\ref{def:1FCSP}). 
\end{problem}

With the notion of LP relaxations and \oFk from above we can now formulate LP-hardness of approximation for \oFk{}s, which   follows directly from Corollary \ref{cor:fcspSA} by
the result of \cite{CLRS13}.

\begin{theorem}
\label{mainfcsp}
For every $\epsilon>0$ there exists a constant arity $k = k(\epsilon)$ 
such that for infinitely many $n$ we have \(\fc_+(\oFk(n,k),1-\epsilon,\epsilon) \geq
n^{\Omega \left(\log n / \log \log n\right)}\).
\end{theorem}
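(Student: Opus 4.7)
The plan is to combine the Sherali-Adams integrality gap of Corollary \ref{cor:fcspSA} with the main theorem of Chan, Lee, Raghavendra and Steurer \cite{CLRS13}. Recall that the latter asserts, for any Max-CSP of bounded arity, that any LP relaxation of size $n^r$ achieving completeness $c$ and soundness $s$ is no stronger than the $2r$-round Sherali-Adams relaxation with the same parameters. Contrapositively, an SA integrality gap surviving $2r$ rounds forbids $(c,s)$-approximate LPs of size at most $n^r$.

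First, I would invoke Corollary \ref{cor:fcspSA} with parameters $(\epsilon, \epsilon)$, obtaining a constant arity $k = k(\epsilon)$, a constant $\kappa = \kappa(\epsilon) > 0$, and, for infinitely many $n$, a \oFk instance $\mc{I}$ of arity $k$ on $n$ variables such that $\OPT(\mc{I}) \leq \epsilon$ while the $n^\kappa$-round Sherali-Adams relaxation of $\mc{I}$ has value at least $1 - \epsilon$. In particular, the $n^\kappa$-round SA relaxation of $\oFk(n,k)$ does not achieve completeness $1 - \epsilon$ and soundness $\epsilon$, since it returns a value above $1-\epsilon$ on an instance of true optimum at most $\epsilon$.

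Next, I would argue by contradiction. Suppose that for some such $n$ there is a $(1-\epsilon, \epsilon)$-approximate LP relaxation of $\oFk(n,k)$ of size $n^r$ with $r = o(\log n / \log \log n)$. By \cite{CLRS13}, the $2r$-round Sherali-Adams relaxation is then also a $(1-\epsilon, \epsilon)$-approximate LP for $\oFk(n,k)$. But on the instance $\mc{I}$ above, even $n^\kappa$ rounds of SA fail to achieve these parameters; and since $2r \ll n^\kappa$ for all sufficiently large $n$, the $2r$-round relaxation is weaker than the $n^\kappa$-round one on $\mc{I}$, so it must also fail. This contradiction yields $\fc_+(\oFk(n,k), 1-\epsilon, \epsilon) \geq n^{\Omega(\log n / \log \log n)}$.

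The only technical point to verify is that the arity $k$ produced by Corollary \ref{cor:fcspSA} is bounded independently of $n$, so that the constant-arity hypothesis of \cite{CLRS13} applies uniformly over the family of instances. This is automatic, since Corollary \ref{cor:fcspSA} explicitly makes $k$ depend only on $\epsilon$. With this in hand, the theorem is an immediate consequence of the two cited results; no substantive further obstacle arises, and the quantitative bound $n^{\Omega(\log n / \log \log n)}$ in the conclusion is in fact weaker than what the SA lower bound $n^\kappa$ combined with \cite{CLRS13} actually gives, which is more than sufficient.
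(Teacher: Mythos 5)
Your proposal matches the paper's proof, which likewise obtains Theorem~\ref{mainfcsp} by feeding the Sherali--Adams gap instances of Corollary~\ref{cor:fcspSA} into the LP-to-SA transfer theorem of \cite{CLRS13}; the paper gives no further detail, treating exactly this combination as immediate. One small caution: the precise quantitative form of the \cite{CLRS13} transfer (rather than the simplified ``size $n^r$ implies $2r$ rounds'' paraphrase) is what pins the bound at $n^{\Omega\left(\log n/\log\log n\right)}$, so your closing remark that the stated bound is strictly weaker than what the combination yields should not be taken literally.
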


Following the approach in \cite{BPZ}, we define a graph \(G\)
over which we consider \vertexcover, which will correspond to our
(family of) hard instances. This graph is a \emph{universal} FGLSS
graph as it encodes all possible choices of predicates
simultaneously \cite{feige1991approximating}. The constructed graph 
is similar to the one in \cite{BPZ}, however now we consider 
\emph{all} one free bit predicates and not just the \maxcut 
predicate \(x \oplus y\).

\begin{definition}[\vertexcover host graph]
  For fixed number of variables $n$ and arity $k \leq n$ 
  we define a graph \(G^* = G^*(n,k)\) as follows. Let
  $x_1$, \ldots, $x_n$ denote the variables of the CSP.

  \emph{Vertices:} For every one free bit predicate \(P\) of arity
  \(k\) and subset of indices \(S \subseteq [n]\) of size \(k\) we 
  have two vertices \(v_{P,S, 1}\) and \(v_{P,S, 2}\) corresponding
  to the two satisfying partial assignments for \(P\) on variables
  $x_i$ with $i \in S$. For simplicity we identify the partial 
  assignments with 
  the respective vertices in \(G^*\). Thus a partial assignment 
  \(\alpha \in \{0,1\}^{S}\) satisfying predicate \(P\) has a 
  corresponding vertex \(v_{P,\alpha} \in \{v_{P,S,1},v_{P,S,2}\}\).

\emph{Edges:} Two vertices \(v_{P,\alpha_1}\) and
\(v_{P,\alpha_2}\) are connected if and only if the corresponding
partial assignments $\alpha_1$ and $\alpha_2$ are
incompatible, i.e., there exists \(i \in S_1 \cap S_2\) with
\(\alpha_1(i) \neq \alpha_2(i)\). 
\end{definition}

Note that the graph has \(2 \binom{2^k}{2} \binom{n}{k}\) vertices, which
is polynomial in \(n\) for fixed \(k\). In order to establish LP-inapproximability 
of \vertexcover and \independentset it now suffices to define a reduction 
satisfying Theorem~\ref{BPZmain}.

\begin{maintheorem}
  For every $\epsilon>0$ and for infinitely many $n$, there exists a
  graph $G$ with $|V(G)|=n$ such that
  $\fc_+(\vertexcover(G),2-\epsilon) \geq n^{\Omega\left( \log n /\log
        \log n \right)}$, and also
  $\fc_+(\independentset(G),1/\epsilon) \geq n^{\Omega\left( \log n /\log
        \log n \right)}$.
\end{maintheorem}
\begin{proof}
We reduce \oFk on $n$ variables with sufficiently large arity $k = k(\epsilon)$
to \vertexcover over $G \coloneqq G^*(n,k)$. 
For a \oFk instance $\mc{I}_1 \coloneqq \mc{I}_1(\mc{P})$ and set of
predicates $\mc{P} = \{P_1,\ldots,P_m\}$, let $H(\mc{P})$ be the induced 
subgraph of $G$ on the set of vertices
$V(\mc{P})$ corresponding to the partial assignments satisfying 
some constraint in $\mc{P}$. So $V(\mc{P}) = \{v_{P,S,i} \mid P \in \mc{P}, 
S \subseteq [n], |S| \leq k, i = 1, 2\}$. 

In Theorem~\ref{mainfcsp} we have shown that no LP of size
at most $n^{o\left(\log n / \log \log n\right)}$ can
provide an $(1-\epsilon,\epsilon)$-approximation for \oFk for any
$\epsilon>0$, provided the arity $k$ is large enough. To prove 
that every LP relaxation with $2-\epsilon$ approximation guarantee 
for \vertexcover has size at least $n^{\Omega\left(\log n / \log 
\log n \right)}$, we provide maps defining a reduction from \oFk 
to \vertexcover. 
   
In the following, let \(\Pi_1 = (\mc{S}_1, \mf{I}_1)\) be the 
\oFk problem and let \(\Pi_2 = (\mc{S}_2, \mf{I}_2)\) be the 
\vertexcover problem. In view of Definition~\ref{RedDef}, we 
map $\mc{I}_1 = \mc{I}_1(\mc{P})$ to $\mc{I}_2 = \mc{I}_2(H(\mc{P}))$
and let \(\mu \coloneqq 2\) and \(\zeta_{\mc{I}_1} \coloneqq \frac{1}{m}\) 
where \(m\) is the number of constraints in \(\mc{P}\).

For a total assignment $x \in \mc{S}_1$ we define
$U = U(x) \coloneqq \{ v_{P,\alpha} \mid \alpha$ satisfies
$P$ and $x$ does not extend $\alpha\}$. The latter
is indeed a vertex cover: we only have edges between conflicting
partial assignments, and all the partial assignments that agree with
$x$ are compatible with each other. Thus $I = I(x) \coloneqq
\{ v_{P,\alpha} \mid \alpha$ satisfies $P$ and $x$ extends $\alpha\}$
is an independent set and its complement \(U\) is a vertex cover. 

We first verify the condition that \(\Val_{\mc{I}_1}(x) = 2 - \frac{1}{m}
\Cost_{\mc{I}_2}(U(x))\) for all instances \(\mc{I}_1 \in \mf{I}_1\) and 
assignments \(x \in \mc{S}_1\). Every predicate $P$ in $\mc{P}$ over 
the variables in $\{x_{i} \mid i \in S\}$ has exactly two 
representative vertices $v_{P,\alpha_1}$, $v_{P,\alpha_2}$ where the
$\alpha_1, \alpha_2 \in \{0,1\}^{S}$ are the two partial assignments satisfying $P$. 
If an assignment $x \in \mc{S}_1$ satisfies the predicate $P$, then exactly 
one of $\alpha_1, \alpha_2$ is compatible with $x$. Otherwise, when $P(x) = 0$,
neither of $\alpha_1,\alpha_2$ do. This means that in the former case 
exactly one of \(v_{P,\alpha_1},v_{P,\alpha_2}\) is contained in \(U\) and in 
the latter both \(v_{P,\alpha_1}\) and \(v_{P,\alpha_2}\) are contained in \(U\). 
It follows that for any \(\mc{I}_1 = \mc{I}_1(\mc{P}) \in \mf{I}_1\) and \(x \in \mc{S}_1\) it holds
\[
\Val_{\mc{I}_1}(x) = 2 - \frac{1}{m} \Cost_{\mc{I}_2}(U(x))\,.
\] 
In other words, for any specific \(\mc{P}\) the affine shift is \(2\),
and the normalization factor is $\frac{1}{m}$. 

Next we verify exactness of the reduction, i.e.,
\[\OPT(\mc{I}_1) = 2 - \frac{1}{m} \OPT(\mc{I}_2)\,.\]
For this take an arbitrary vertex cover $U \in \mc{S}_2$ 
of $G$ and consider its complement. This is an independent 
set, say $I$. As \(I\) is an independent set, all partial assignments $\alpha$
such that $v_{P,\alpha} \in I$ are compatible and there exists a 
total assignment $x$ that is compatible with each $\alpha$ with 
$v_{P,\alpha} \in I$. Then the corresponding vertex cover $U(x)$
is contained in $U$. Thus there always exists an optimum solution 
to $\mc{I}_2$ that is of the form $U(x)$. Therefore, the reduction 
is exact.

It remains to compute the inapproximability factor via
Theorem~\ref{BPZmain}. 
We have 
\begin{align*}
  \rho_2 =  \frac{2 -
  \epsilon}{2 - (1 - \epsilon)}
\geq 2 - 3\epsilon
\end{align*}

A similar reduction works for \independentset. This time, the
affine shift is $\mu = 0$ and we get an inapproximability factor of
$$
\rho_2 = \frac{1-\epsilon}{\epsilon} \geq \frac{1}{2\epsilon}
$$
for $\epsilon$ small enough.
\end{proof}

\section{Upper bounds}
\label{sec:upperbounds}

Here we give a size-$O(n)$ LP relaxation for approximating \independentset within 
a factor-$O(\sqrt{n})$, which follows directly by work of Feige and
Jozeph~\cite{FJ14}. Note that this is strictly better than the
\(n^{1-\epsilon}\) hardness obtained assuming \(P \neq NP\) by \cite{haastad1996clique}. This
is possible because the \emph{construction} of our LP is NP-hard while
being still of small size, which is
allowed in our framework. 

Start with a greedy coloring of $G = (V,E)$: let $I_1$ be any maximum size 
independent set of $G$, let $I_2$ be any maximum independent set of $G - I_1$, 
and so on. In general, $I_{j+1}$ is any maximum independent set of $G - I_1 - 
\cdots - I_j$. Stop as soon as $I_1 \cup \cdots \cup I_j$ covers the whole 
vertex set. Let $k \leq n$ denote the number of independent sets constructed,
that is, the number of colors in the greedy coloring. 

Feige and Jozeph~\cite{FJ14} made the following observation:

\begin{lemma} \label{lem:greedy1}
Every independent set $I$ of $G$ has a nonempty intersection with at most 
$\lfloor 2 \sqrt{n} \rfloor$ of the color classes $I_j$.
\end{lemma}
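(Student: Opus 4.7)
The key observation is a monotonicity property of the greedy color classes: since $I_{j+1}$ is a maximum independent set of $G - I_1 - \cdots - I_j$, which is a subgraph of $G - I_1 - \cdots - I_{j-1}$ (where $I_j$ was chosen as a maximum independent set), we get the chain
\[
|I_1| \geq |I_2| \geq \cdots \geq |I_k| \geq 1\mper
\]

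Now fix an independent set $I$ of $G$ and suppose that $I$ meets exactly $t$ color classes, at indices $j_1 < j_2 < \cdots < j_t$. The plan is to lower-bound each $|I_{j_\ell}|$ by $t - \ell + 1$. Indeed, the set $I \cap (V \setminus (I_1 \cup \cdots \cup I_{j_\ell - 1}))$ is an independent set in the graph $G - I_1 - \cdots - I_{j_\ell - 1}$, and by choice of the indices it contains at least one vertex from each of the $t - \ell + 1$ classes $I_{j_\ell}, I_{j_{\ell+1}}, \ldots, I_{j_t}$. Since $I_{j_\ell}$ was picked as a \emph{maximum} independent set in $G - I_1 - \cdots - I_{j_\ell - 1}$, this yields $|I_{j_\ell}| \geq t - \ell + 1$.

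Summing over $\ell$ gives
\[
n \;\geq\; \sum_{j=1}^{k} |I_j| \;\geq\; \sum_{\ell=1}^{t} |I_{j_\ell}| \;\geq\; \sum_{\ell=1}^{t} (t - \ell + 1) \;=\; \frac{t(t+1)}{2}\mper
\]
Hence $t^2 \leq t(t+1) \leq 2n$, so $t \leq \sqrt{2n} \leq 2\sqrt{n}$, i.e., $t \leq \lfloor 2\sqrt{n}\rfloor$. There is no real obstacle here; the only subtlety is remembering that the maximality of $I_{j_\ell}$ must be applied in the \emph{residual} graph $G - I_1 - \cdots - I_{j_\ell - 1}$ rather than in $G$ itself, so that the pigeon-hole count of classes hit by $I$ past index $j_\ell$ translates into a lower bound on $|I_{j_\ell}|$.
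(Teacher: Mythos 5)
Your argument is correct, and in fact gives the slightly stronger bound $t \leq \sqrt{2n}$. The paper itself offers no proof of this lemma --- it is attributed to Feige and Jozeph \cite{FJ14} --- and your argument is the standard one: lower-bounding $|I_{j_\ell}| \geq t-\ell+1$ via maximality of $I_{j_\ell}$ in the residual graph and summing to get $t(t+1)/2 \leq n$. (A tiny stylistic remark: the monotonicity chain $|I_1|\geq |I_2|\geq\cdots$ you open with is never actually used; the residual-graph maximality argument stands on its own.)
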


Now consider the following linear constraints in $\R^V \times \R^k \simeq \R^{n+k}$:
\begin{align}
\label{eq:greedy1} &0 \leq x_v \leq y_j \leq 1 \qquad \forall j \in [k], v \in I_j\\
\label{eq:greedy2} &\sum_{j = 1}^k y_j \leq \lfloor 2 \sqrt{n} \rfloor\,.
\end{align}
These constraints describe the feasible set of our LP for \independentset on $G$. 
Each independent set $I$ of $G$ is realized by a 0/1-vector $(x^I,y^I)$ defined 
by $x^I_v = 1$ iff $I$ contains vertex $v$ and $y^I_j = 1$ iff $I$ has a 
nonempty intersection with color class $I_j$. For an induced subgraph $H$ 
of $G$, we let $f_{\mc{I}(H)}(x,y) \coloneqq \sum_{v \in V(H)} x_v$. 
By Lemma~\ref{lem:greedy1}, $(x^I,y^I)$ satisfies 
\eqref{eq:greedy1}--\eqref{eq:greedy2}. 
Moreover, we clearly have $f_{\mc{I}(H)}(x^I,y^I) 
= |I \cap V(H)|$. Let $\LP(\mc{I}(H)) \coloneqq \max \{f_{\mc{I}(H)}(x,y) \mid 
\eqref{eq:greedy1}, \eqref{eq:greedy2}\} = \max \{\sum_{v \in V(H)} x_v \mid \eqref{eq:greedy1}, \eqref{eq:greedy2}\}$.

\begin{lemma} \label{lem:greedy2}
For every induced subgraph $H$ of $G$, we have
$$
\LP(\mc{I}(H)) \leqslant \lfloor 2 \sqrt{n} \rfloor \OPT(\mc{I(H)})\,.
$$
\end{lemma}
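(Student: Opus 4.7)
The plan is to bound $\LP(\mc{I}(H))$ directly by exploiting the two families of LP constraints: the per-color-class bound $x_v \leq y_j$ (for $v \in I_j$) and the global budget constraint $\sum_j y_j \leq \lfloor 2\sqrt{n}\rfloor$. The key structural fact I would use is that each color class $I_j$, being an independent set of $G$, restricts to an independent set $I_j \cap V(H)$ of the induced subgraph $H$; hence $|I_j \cap V(H)| \leq \OPT(\mc{I}(H))$ for every $j \in [k]$.

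Given any feasible $(x,y)$ for the system \eqref{eq:greedy1}--\eqref{eq:greedy2}, I would partition the objective along the color classes (recall $V = I_1 \cup \cdots \cup I_k$ is a partition produced by the greedy coloring) and bound each piece using $x_v \leq y_j$:
\[
\sum_{v \in V(H)} x_v \;=\; \sum_{j=1}^k \sum_{v \in V(H) \cap I_j} x_v \;\leq\; \sum_{j=1}^k y_j \cdot |V(H) \cap I_j|.
\]
Then replacing $|V(H) \cap I_j|$ with the uniform upper bound $\OPT(\mc{I}(H))$ and applying the budget constraint yields
\[
\sum_{v \in V(H)} x_v \;\leq\; \OPT(\mc{I}(H)) \cdot \sum_{j=1}^k y_j \;\leq\; \lfloor 2\sqrt{n} \rfloor \cdot \OPT(\mc{I}(H)),
\]
which is the claimed inequality after taking the maximum over feasible $(x,y)$.

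There is essentially no obstacle here: the argument is a one-line chain of inequalities. The only thing to double-check is that $I_j \cap V(H)$ really is an independent set in $H$ for every $j$, which is immediate because $H$ is an induced subgraph of $G$ and $I_j$ is independent in $G$. Note that the greedy coloring ordering does not enter the proof of this lemma at all --- it was only used in \prettyref{lem:greedy1} to guarantee $k \leq n$ and to enforce the $\lfloor 2\sqrt{n}\rfloor$ budget for integral solutions; here we simply reuse that budget as a linear constraint on the $y_j$'s.
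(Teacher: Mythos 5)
Your proof is correct, and every step checks out: the greedy coloring does partition $V$ into the classes $I_j$, the constraint $x_v \leq y_j$ with $y_j \geq 0$ justifies the termwise bound, and $|I_j \cap V(H)| \leq \OPT(\mc{I}(H))$ holds because $I_j$ is independent in $G$ and hence $I_j \cap V(H)$ is independent in the induced subgraph $H$. The paper's proof starts from the same decomposition --- after setting $x_v = y_j$ it rewrites the LP as $\max \sum_j |I_j \cap V(H)|\, y_j$ over the budgeted box --- but then finishes differently: it observes that this feasible region is a 0/1-polytope, so the optimum is attained by setting $y_j = 1$ on at most $\lfloor 2\sqrt{n}\rfloor$ classes, whence some single class carries a $1/\lfloor 2\sqrt{n}\rfloor$ fraction of the LP value. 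Your route replaces that integrality-of-the-vertex argument with the uniform bound $|I_j \cap V(H)| \leq \OPT(\mc{I}(H))$ applied before summing, which is strictly more elementary (no appeal to vertex structure of the polytope, which in the paper is asserted without justification) and yields the same constant. The paper's version has the mild advantage of exhibiting a single witnessing color class, i.e., it shows the rounding is constructive, but for the stated inequality your chain is all that is needed.
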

\begin{proof}
When solving the LP, we may assume $x_v = y_j$ for all $j \in [k]$ and all 
$v \in I_j$. Thus the LP can be rewritten
$$
\max \left\{\sum_{j = 1}^k |I_j \cap V(H)| \cdot y_j \mid 0 \leq y_j \leq 1\ 
\forall j \in [k],\ \sum_{j = 1}^k y_j \leq \lfloor 2 \sqrt{n} \rfloor\right\}\,.
$$
Because the feasible set is a 0/1-polytope, we see that the optimum value of 
this LP is attained by letting $y_j = 1$ for at most $\lfloor 2 \sqrt{n} \rfloor$
of the color classes $I_j$ and $y_j = 0$ for the others. Thus some color 
class $I_j$ has weight at least $1/\lfloor 2 \sqrt{n} \rfloor$ of the LP 
value.
\end{proof}

By Lemma~\ref{lem:greedy2}, constraints 
\eqref{eq:greedy1}--\eqref{eq:greedy2} provide a size-$O(n)$
factor-$O(\sqrt{n})$ LP relaxation of \independentset.

\begin{theorem}
For every $n$-vertex graph $G$, $\fc_+(\independentset(G),2\sqrt{n})
\leqslant O(n)$.
\end{theorem}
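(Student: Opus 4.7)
The plan is to directly assemble the LP relaxation already sketched in the surrounding text and verify that it meets the three conditions of a factor-$2\sqrt{n}$ LP relaxation in the sense of Section~\ref{sec:lpreduc}, where the instance set $\mf{I}$ corresponds to induced subgraphs of $G$ (as in Problem~\ref{prob:indepSet}) and feasible solutions are independent sets of $G$. Since all the real work has been offloaded to Lemma~\ref{lem:greedy1} (due to Feige--Jozeph) and Lemma~\ref{lem:greedy2}, the theorem is mostly a bookkeeping argument.

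First I would compute the greedy coloring $I_1,\ldots,I_k$ of $G$ described above; this is non-uniform and NP-hard but allowed in our model since we only bound the number of inequalities, not the construction time. Then I would take the system \eqref{eq:greedy1}--\eqref{eq:greedy2} as the constraint matrix $Ax\geq b$ of the LP relaxation, living in $\R^{n+k}\subseteq \R^{2n}$. Counting carefully, the constraints \eqref{eq:greedy1} contribute at most $3n+k$ inequalities (the $x_v\geq 0$, $x_v\leq y_{j(v)}$ for the unique $j$ with $v\in I_j$, and the $y_j\leq 1$), plus the single knapsack-type inequality \eqref{eq:greedy2}, for a total of $O(n)$.

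Next I would verify the three conditions of an LP relaxation. For \textbf{feasibility}, given an independent set $I\subseteq V$, define $x^I_v\coloneqq \Ind[v\in I]$ and $y^I_j\coloneqq \Ind[I\cap I_j\neq\emptyset]$; the constraints \eqref{eq:greedy1} hold by construction, and \eqref{eq:greedy2} holds precisely because of Lemma~\ref{lem:greedy1}, which bounds $\sum_j y^I_j\leq \lfloor 2\sqrt n\rfloor$. For the \textbf{linear objective}, for each induced subgraph $H$ set $f_{\mc{I}(H)}(x,y)\coloneqq \sum_{v\in V(H)} x_v$, which is affine. \textbf{Consistency} $f_{\mc{I}(H)}(x^I,y^I)=|I\cap V(H)|=\Val_{\mc{I}(H)}(I)$ is immediate.

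Finally I would invoke Lemma~\ref{lem:greedy2}, which gives $\LP(\mc{I}(H))\leq \lfloor 2\sqrt{n}\rfloor\cdot \OPT(\mc{I}(H))$ for every induced subgraph $H$ of $G$, establishing the approximation guarantee $\rho = 2\sqrt{n}$ in the sense of condition (iii) (adapted to maximization: $\OPT\leq \LP\leq \rho\cdot\OPT$). Combined with the $O(n)$ size bound this yields $\fc_+(\independentset(G),2\sqrt n)\leq O(n)$, as required. There is no genuine obstacle here: the only subtle point is remembering that in the non-uniform model the LP is allowed to depend on $G$ via the (NP-hard to compute) greedy coloring, which is what lets us sidestep the $n^{1-\epsilon}$ NP-hardness barrier of~\cite{haastad1996clique}.
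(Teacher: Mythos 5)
Your proposal is correct and follows essentially the same route as the paper: the theorem is indeed just the assembly of the greedy coloring, the constraint system \eqref{eq:greedy1}--\eqref{eq:greedy2} with its $O(n)$ inequality count, the 0/1 realizations $(x^I,y^I)$ whose feasibility rests on Lemma~\ref{lem:greedy1}, and the approximation guarantee from Lemma~\ref{lem:greedy2}. Your verification of the three conditions (including the correctly adapted maximization form of condition (iii)) matches the paper's argument.
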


Although the LP relaxation 
\eqref{eq:greedy1}--\eqref{eq:greedy2}
is NP-hard to construct, it is allowed by our 
framework because we do not bound the time
needed to construct the LP. To our knowledge, this is the first example of 
a polynomial-size extended formulation outperforming polynomial-time 
algorithms.

We point out that a factor-$n^{1-\epsilon}$ LP-inapproximability of 
\independentset holds in a different model, known as the \emph{uniform 
model}~\cite{BM13,BP13}. In that model, we seek an LP relaxation that 
approximates \emph{all} \independentset instances with the same number of 
vertices $n$. This roughly corresponds to solving \independentset 
by approximating the correlation polytope in some way, which turns
out to be strictly harder than approximating the stable set polytope,
as shown by our result above.

\section{LP Hardness for \ekvertexcover}
\label{sec:qvertexcover}
\newcommand{\uCSP}{\problemmacro{Not-Equal-CSP}}
\newcommand{\Tf}[1]{\Upsilon_{#1}}
\newcommand{\sconstraintsatisfaction}{\problemmacro{constraint satisfaction problem}}
In order to prove LP lower bounds for \vertexcover and \independentset in Sections~\ref{sec-1fcsp} and~\ref{LPSec}, we first started by providing a reduction from the \uniquegames problem to the \oFk problem, that implied that no \emph{small size} linear program is a ($1-\epsilon,\epsilon$)-approximation for the \oFk problem. We then gave a gap-preserving reduction from any LP approximating $\oFk$ to any LP approximating \vertexcover, and showed that no small size LP can provide a ($2-\epsilon$)-approximation for the \vertexcover problem.

Our approach for the \ekvertexcover will be similar, however our starting point is a \sconstraintsatisfaction different than \oFk. This new \sconstraintsatisfaction, that we refer to as \uCSP, is defined as follows: 

\begin{definition}
\label{def:ucsp}
  A CSP of arity $k$ over the domain\footnote{For convenience, we use the additive
  group $\Z_q = \{0,\ldots,q-1\}$ instead of $[q]$ as the domain of our CSP.} 
  $\Z_q$ is referred to as \emph{\uCSP} if each  constraint 
  $P : \Z_q^k \rightarrow \{0,1\}$ is of the following form
  \begin{align*}
    P_A(x_1, x_2, \ldots, x_k) = 1 \qquad \mbox{if and only if} \qquad \bigwedge_{i=1}^k (x_i \neq a_i) 
  \end{align*}
  for some $A = (a_1, a_2, \ldots, a_k) \in \Z_q^k$. When $x\in \Z_q^n$, for some $n \geq k$, a predicate $P:=P_{S,A}$ is additionally indexed by a set $S = \{ i_1,i_2,\dots,i_k\}\subseteq [n]$, and $P_{S,A}(x) = P_A(x_{i_1},x_{i_2},\dots,x_{i_k})$.
\end{definition}

We remark that the above definition should not be confused with the common
Not-\emph{All}-Equal predicate.

Similar to the approach of \vertexcover, we shall prove that there is no
small linear programming relaxation for \ekvertexcover with a good
approximation guarantee in two steps. In the first step, we prove that no small
linear programming relaxation can \emph{approximate well} the \uCSP problem. We then give a gap-preserving reduction from this problem to that of \ekvertexcover in the framework of \cite{BPZ}.

\subsection{Sherali-Adams Integrality Gap for \uCSP}
This section will be dedicated to proving the following theorem. 
\begin{theorem}
\label{thm:mainUCSP}
  For any $\varepsilon > 0$ and integer $q\geq 2$, there exist  $\kappa >0$ and an integer $k$ so that for infinitely many $n$ there exists a \uCSP{} instance $\cI$ of arity $k$ over $n$ variables satisfying
  \begin{itemize}
    \item $\OPT(\cI) \leq \varepsilon$;
    \item There is a solution to the $n^\kappa$-round Sherali-Adams relaxation  of value $1-1/q - \varepsilon$.
  \end{itemize}
  \label{}
\end{theorem}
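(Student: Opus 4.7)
The plan is to mirror the two-step approach used for \oFk in Section~\ref{sec-1fcsp}. The starting point is the Charikar--Makarychev--Makarychev Sherali-Adams integrality gap instance for \uniquegames (Theorem~\ref{UGgap}) instantiated with label size $R=q$, bipartized as in the proof of Corollary~\ref{cor:fcspSA} to yield a $\Delta$-regular bipartite instance $\mc{U}=(G,[q],\Pi)$ with $G=(V,W,E)$ satisfying $\OPT(\mc{U}) \leq (1+\delta)/q$ and admitting an $n_G^\kappa$-round SA solution $\{\mu(S)\}$ of value at least $1-\zeta$. From $\mc{U}$ I would build a \uCSP instance $\cI$ on variable set $\{x_w : w \in W\}$ whose constraints are drawn from the following distribution: pick $v \in V$ uniformly, pick $k$ independent neighbors $w_1,\ldots,w_k \in N(v)$, and a label $\ell \in [q]$ uniformly; then output the constraint $P_A$ on the variables $x_{w_1},\ldots,x_{w_k}$ with $a_j := \pi_{v,w_j}(\ell)$, i.e., the conjunction $\bigwedge_{j=1}^k (x_{w_j} \neq \pi_{v,w_j}(\ell))$. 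Intuitively, if $x_w$ encodes a UG labeling $\Lambda(w)$, the constraint is satisfied exactly when the label implied at $v$ differs from $\ell$, which on a perfectly satisfied UG instance happens with probability $1-1/q$ over random $\ell$.

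For SA-completeness, I transfer the SA solution of $\mc{U}$ to $\cI$ by setting $\sigma(S) := \mu(S)$ for every $S \subseteq W$ with $|S| \leq r$; consistency is inherited. Using the extended local distribution $\mu(\{v\} \cup \{w_1,\ldots,w_k\})$ (available provided $r \geq k+1$) and bijectivity of each $\pi_{v,w_j}$, the event $\{\Lambda(v) \neq \ell\} \cap \bigcap_j \{\Lambda(w_j) = \pi_{v,w_j}(\Lambda(v))\}$ implies the satisfying event $\bigcap_j \{\Lambda(w_j) \neq \pi_{v,w_j}(\ell)\}$. Since the swap identity $\E_\ell \Pr_\Lambda[\Lambda(v) \neq \ell] = 1-1/q$ holds regardless of the singleton marginal at $v$, a union bound on the $k$ edge-violation events will give
\begin{align*}
\E_{v,\vec{w},\ell}\Pr_{\Lambda \sim \mu(\{v\}\cup \vec{w})}\left[\bigwedge_{j=1}^k \Lambda(w_j) \neq \pi_{v,w_j}(\ell)\right] \;\geq\; \left(1-\tfrac{1}{q}\right) - k\zeta,
\end{align*}
so choosing $\zeta \leq \varepsilon/k$ yields the required SA value $\geq 1-1/q-\varepsilon$.

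For soundness, fix any \uCSP assignment $x \colon W \to \Z_q$ and let $\mu_v(\ell) := \Pr_{w \in N(v)}[\pi_{v,w}^{-1}(x_w) = \ell]$. Since the $k$ neighbors are sampled independently, a direct computation gives that the fraction of constraints satisfied by $x$ equals $\E_v \E_\ell (1-\mu_v(\ell))^k$. The guarantee $\OPT(\mc{U}) \leq (1+\delta)/q$ translates into $\E_v \|\mu_v\|_\infty \leq (1+\delta)/q$, so by Markov's inequality at most a $\sqrt{\delta}$-fraction of vertices $v$ have $\|\mu_v\|_\infty > (1+\sqrt{\delta})/q$; for the remaining $v$, the constraint $\sum_\ell \mu_v(\ell) = 1$ forces every $\mu_v(\ell)$ into $[1/q - \sqrt{\delta},\, 1/q + \sqrt{\delta}]$, and hence $(1-\mu_v(\ell))^k \leq (1-1/(2q))^k$ can be driven below $\varepsilon/2$ by taking $k$ large enough. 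The ``bad'' $v$'s contribute at most $1-1/q < 1$ each, yielding $\OPT(\cI) \leq \varepsilon/2 + \sqrt{\delta} \leq \varepsilon$ once $\delta$ is sufficiently small.

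The main obstacle is the soundness argument. The trivial per-constraint bound $\E_\ell(1-\mu_v(\ell))^k \leq 1-1/q$ matches exactly the completeness value and would yield no gap at all; what rescues the argument is the CMM guarantee that the UG optimum is close to the information-theoretic floor $1/q$ (rather than merely small), which forces $\mu_v$ to be nearly uniform on $[q]$ for most $v$ and thus makes $(1-\mu_v(\ell))^k$ shrink exponentially with $k$. Parameters are chosen in the order $\varepsilon$, then $k$ large enough to drive $(1-1/(2q))^k$ below $\varepsilon/2$, then $\delta$ small enough so that $\sqrt{\delta} \leq \varepsilon/2$ and $\zeta \leq \varepsilon/k$, and finally $\kappa$ is supplied by Theorem~\ref{UGgap}.
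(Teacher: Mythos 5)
Your reduction is genuinely different from the paper's: you identify the CSP variables with the \uniquegames labels themselves (one $\Z_q$-valued variable per vertex of $W$), avoiding the long code and the ``It Ain't Over Till It's Over'' theorem entirely, and your SA-transfer/completeness half is correct and simpler than the paper's Lemma~\ref{ucspSA}. The soundness half, however, has a genuine gap. It hinges on $\E_v\|\mu_v\|_\infty \le \OPT(\mc{U}) \le (1+\delta)/q$, and you truly need this bound to be within a $1+o(1)$ factor of $1/q$: if $\mu_v$ vanishes on a constant fraction of labels, then $\E_\ell(1-\mu_v(\ell))^k$ stays bounded away from $0$ for every $k$. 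But $\E_v\|\mu_v\|_\infty$ is the value of the best $V$-side labeling played against the fixed $W$-side labeling $x$, i.e.\ a \emph{cross}-satisfaction between two independent labelings. For the bipartite double cover built in Corollary~\ref{cor:fcspSA} this quantity is only bounded by a constant multiple (roughly $4$) of $\OPT$ of the original instance, not by $\OPT$ itself, and that constant factor is fatal. Concretely, the ``not equal'' game on $K_n$ over $\Z_2$ has $\OPT = 1/2 + o(1) = (1+o(1))/q$, yet its bipartite double cover admits $\Lambda_1\equiv 0$ on one side and $x\equiv 1$ on the other satisfying \emph{all} edges, so $\E_v\|\mu_v\|_\infty = 1$ there. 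The paper's long-code reduction is immune to this because its soundness (influence decoding plus Theorem~\ref{thm:over}) only needs $\OPT(\mc{U})\le\delta$ for some sufficiently small constant $\delta$ --- a condition that survives the lossy bipartization --- whereas yours needs $\OPT$ within a $(1+o(1))$ factor of the random-assignment floor, which does not.

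A second, smaller issue: you instantiate Theorem~\ref{UGgap} with label size $R=q$, but that theorem is stated only for $R=2^\ell$ (the underlying construction lives on $\mathbb{F}_2^\ell$), so for $q$ not a power of two your starting instance is not available as cited. The paper's reduction decouples the \uniquegames label size $R$ from the CSP alphabet $\Z_q$ precisely via the long code, which is what lets it handle every integer $q\ge 2$.
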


The above theorem states that the \uCSP problem can fool the Sherali-Adams relaxation even after $n^{\kappa}$ many rounds. Before we proceed, we discuss functions of the form $f:\Z_q^R\mapsto \{0,1\}$. These functions will play a crucial role in the analysis.


\subsubsection{Functions Over the Domain $\Z_q$}
In order to construct Sherali-Adams integrality gaps for the \uCSP problem, we also reduce from the \uniquegames problem. The analysis of this reduction relies heavily on known properties regarding functions of the form $f: \Z_q^R \mapsto \{0,1\}$, where $\Z_q$ is to be thought of as the domain of the new CSP, and $R$ as the label set size of the \uniquegames instance. More precisely, we exploit the drastic difference in the behavior of functions depending on whether they have \emph{influential} coordinates or not. To quantify these differences, we first need the following definitions.
\begin{definition}
For a function $f: \Z_q^R \mapsto \{0,1\}$, and an index $i \in [R]$, the influence of the $i$-th coordinate is given by
 \begin{align*}
\Infl{i}{}{f} = \E \left[{\bf Var}\left[f(x) \large | x_1,\dots,x_{i-1},x_{i+1},\dots,x_n \right] \right]
\end{align*}
where $x_1,\dots,x_n$ are uniformly distributed.
\end{definition}
An alternative definition for the influence requires defining the Fourier expansion of a function $f$ of the form $f: \Z_q^R \mapsto \{0,1\}$. To do this, let $\phi_0 \equiv 1, \phi_1,\dots,\phi_{q-1}: \Z_q \mapsto \R$ be such that for all $i, j \in [q]$, we have 
\begin{align*}
\E_{y \in \Z_q} \left[ \phi_i(y) \phi_j(y)\right] = 
\begin{cases}
0 &\text{if } i \neq j\\
1 &\text{if } i = j
\end{cases}
\end{align*}
where the expectation is taken over the uniform distribution, and define the functions $\phi_\alpha: \Z_q^R \mapsto \R$ for every $\alpha \in \Z_q^R$ to be  
\begin{align*}
\phi_\alpha(x) := \prod_{i=1}^R \phi_{\alpha_i} \left(x_i\right)
\end{align*}
for any $x \in \Z_q^R$. We take these functions for defining our Fourier basis. Note that this coincides with the boolean case, where for $b\in \{0,1\}$ we have $\phi_0(b) \equiv 1$, and $\phi_1(b) = (-1)^b$ (or the identity function in the $\{-1,1\}$ domain). For a more elaborate discussion on the Fourier expansion in generalized domains, we refer the interested reader to Chapter 8 in \cite{RyanBook}. 

Having fixed the functions $\phi_0, \phi_1,\dots,\phi_{q-1}$, every function $f: \Z_q^R \mapsto \{0,1\}$ can be uniquely expressed as 
\begin{align*}
f (x)= \sum_{\alpha \in \Z_q^R} \hat{f}_\alpha \phi_\alpha(x)
\end{align*}
Equipped with this, we can relate the influence of a variable $i\in [R]$ with respect to a function $f: \Z_q^R \mapsto \{0,1\}$, to the Fourier coefficients of $f$ as follows:  
\begin{align*}
\Infl{i}{}{f} = \sum_{\alpha: \alpha_i\neq 0} \hat{f}^2_\alpha
\end{align*}
%
%
In our analysis we will however be interested in \emph{degree-d} influences, denoted $\Infl{i}{d}{d}$ and defined as \begin{align*}
\Infl{i}{d}{f} = \sum_{\alpha: \alpha_i \neq 0, |\alpha| \leq d} \hat{f}^2_\alpha
\end{align*}
where $|\alpha|$ in this context is the support of \(\alpha\), i.e., the number of indices $j\in [R]$ such that $\alpha_j \neq 0$.

\begin{observation}[see, e.g., Proposition 3.8 in \cite{MOO05}]
\label{obs:dinfl}
For a function $f:\Z_q^R \mapsto \{0,1\}$, the sum of all degree-$d$ influences is at most $d$.
\end{observation}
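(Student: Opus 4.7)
The plan is to exploit the Fourier expansion of $f$ together with Parseval's identity. The key observation is that the degree-$d$ influence of coordinate $i$ can be written in terms of Fourier coefficients, so summing over $i$ simply reorders a sum over $\alpha$ weighted by the number of nonzero coordinates.

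Concretely, I would begin by writing
\[
\sum_{i=1}^R \Infl{i}{d}{f} = \sum_{i=1}^R \sum_{\substack{\alpha \in \Z_q^R : \alpha_i \neq 0 \\ |\alpha| \leq d}} \hat{f}_\alpha^2.
\]
Swapping the order of summation, each $\alpha$ with $|\alpha| \leq d$ contributes $\hat{f}_\alpha^2$ to exactly $|\alpha|$ terms (one for each coordinate in its support), so
\[
\sum_{i=1}^R \Infl{i}{d}{f} = \sum_{\substack{\alpha \in \Z_q^R \\ |\alpha| \leq d}} |\alpha| \cdot \hat{f}_\alpha^2 \leq d \sum_{\alpha \in \Z_q^R} \hat{f}_\alpha^2.
\]

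The next step is Parseval: $\sum_\alpha \hat{f}_\alpha^2 = \E[f(x)^2]$ where the expectation is over the uniform distribution on $\Z_q^R$ (this uses orthonormality of the basis $\{\phi_\alpha\}$ established just before the observation). Since $f$ takes values in $\{0,1\}$, $f(x)^2 = f(x)$, so $\E[f(x)^2] = \E[f(x)] \leq 1$. Combining gives $\sum_i \Infl{i}{d}{f} \leq d$, as claimed.

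There is essentially no obstacle here: the only thing to verify carefully is the orthonormality of the product basis $\phi_\alpha$ (which follows from the one-dimensional orthonormality stated above and the fact that coordinates are independent under the uniform measure), and the identification of $\Infl{i}{d}{f}$ as the stated Fourier sum (which is by definition in the paper). The argument is identical to the Boolean-cube case but using the generalized Fourier basis over $\Z_q$ in place of characters $(-1)^{x_i}$.
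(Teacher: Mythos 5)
Your proof is correct and is the standard double-counting/Parseval argument; the paper itself gives no proof, deferring to Proposition 3.8 of \cite{MOO05}, where essentially this same argument appears. Nothing further is needed.
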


We will also need a generalization of the notion of sub-cubes defined in Figure~\ref{fig:dist} in order to state the ''It Ain't Over Till It's Over'' Theorem \cite{MOO05}, a main ingredient of the analysis of the reduction. In fact we only state and use a special case of it, as it appears in \cite{O12}.

\begin{definition}
Fix $\epsilon>0$. For $x\in \Z_q^R$, and $S_\epsilon \subseteq[R]$ such that $|S_\epsilon| = \epsilon R$, the sub-cube $C_{x,S_\epsilon}$ is defined as follows: \begin{align*}
C_{x,S_{\epsilon}}:= \left\{z \in \Z_q^R: z_j = x_j \,\, \forall j \notin S_\epsilon \right\}
\end{align*}
\end{definition}

\begin{theorem}[\emph{Special case of the} {\normalfont It Ain't Over Till It's Over} \emph{Theorem}]
For every $\epsilon, \delta>0$ and integer $q$, there exist $\vartheta > 0$ and integers $t, d$ such  that any collection of functions $f_1,\dots,f_t: \Z_q^R \mapsto \{0,1\}$ that satisfies \begin{align*}
\forall j: \E\left[f_j\right] \geq \delta && \text{ and } && \forall i \in [R], \forall 1 \leq \ell_1\neq \ell_2 \leq t: \min \left\{\text{\normalfont Inf}_i^d(f_{\ell_1}), \text{\normalfont Inf}_i^d(f_{\ell_2})\right\} \leq \vartheta,
\end{align*}
has the property
 \begin{align*}
\Pr_{x,S_\epsilon} \left[ \bigwedge_{j=1}^t (f_j(C_{x,S_\epsilon})\equiv 0)\right] \leq \delta.
\end{align*}
\label{thm:over}
\end{theorem}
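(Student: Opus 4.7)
The plan is to apply the invariance principle of \cite{MOO05} together with an approximate-independence argument, reducing the multi-function statement to the single-function ``It Ain't Over Till It's Over'' theorem. First, since each $f_j$ is $\{0,1\}$-valued, the event $\{f_j(C_{x,S_\epsilon}) \equiv 0\}$ is equivalent to $g_j(x,S_\epsilon) := \E_{z \in C_{x,S_\epsilon}}[f_j(z)] = 0$, and in particular is implied by $\{g_j \leq \theta\}$ for any $\theta > 0$. The averaging operator that sends $f_j$ to $g_j(\cdot,\cdot)$ is, up to the minor discrepancy between sampling $S_\epsilon$ of exact size $\epsilon R$ and i.i.d.\ coordinate-wise perturbation, a noise operator with parameter $1-\epsilon$, placing the problem in the natural setting of the invariance principle.

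Next, I would use the pairwise-minimum hypothesis to \emph{partition the coordinates}. Since for each $i \in [R]$ at most one $f_j$ can satisfy $\Inf_i^d(f_j) > \vartheta$, we can write $[R] = R_0 \sqcup R_1 \sqcup \cdots \sqcup R_t$, where $R_j$ (for $j \geq 1$) is the set of coordinates at which $f_j$ is the unique influential function, and $R_0$ the coordinates at which no function has degree-$d$ influence exceeding $\vartheta$; by Observation~\ref{obs:dinfl} we have $|R_j| \leq d/\vartheta$ for each $j \geq 1$. I would then invoke a multi-function invariance principle, in the spirit of Mossel's Gaussian-bounds framework, to replace the joint random vector $(g_1,\ldots,g_t)$ by its Gaussian analog up to error $o_\vartheta(1)$ against smooth test functions. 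Crucially, in the Gaussian analog each $\tilde g_j$ is (to leading order) supported on the disjoint set $R_j$, with $R_0$ contributing only vanishing noise, so the events $\{\tilde g_j \leq \theta\}$ become approximately independent.

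Finally, for each individual $j$, the single-function ``It Ain't Over Till It's Over'' theorem of \cite{MOO05} applied with $\E[f_j] \geq \delta$ yields $\Pr_{x,S_\epsilon}[g_j(x,S_\epsilon) \leq \theta] \leq 1 - \delta + \eta$ for any desired $\eta > 0$, provided $\vartheta$ and $\theta$ are small enough and $d$ large enough. Combining this with the approximate independence gives
$$\Pr\!\left[\,\bigwedge_{j=1}^t \bigl(f_j(C_{x,S_\epsilon}) \equiv 0\bigr)\right] \leq (1 - \delta + \eta)^t + o_\vartheta(1),$$
and the proof is completed by first choosing $t$ large so that $(1-\delta+\eta)^t \leq \delta/2$ and then $\vartheta$ small enough that the error term is at most $\delta/2$. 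The main obstacle is the multi-function invariance step: one must show that the coordinates in $R_0$ (shared across all $g_j$) contribute only correlations of magnitude $o_\vartheta(1)$, which requires a careful Lindeberg-style swapping argument using smooth Lipschitz approximations of $\mathbf{1}[\,\cdot \leq \theta]$ as test functions. It is this step that fixes the quantitative dependence of $\vartheta$ (and of $t$ and $d$) on $\epsilon, \delta$ and $q$.
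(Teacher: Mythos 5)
First, a remark on provenance: the paper does not prove Theorem~\ref{thm:over} at all. It is imported as a black box, a special, multi-function form of the ``It Ain't Over Till It's Over'' theorem of \cite{MOO05} as it appears in \cite{O12} (and is proved, in essentially this form, in Bansal--Khot \cite{BK}). So you are attempting to reprove a cited external result; that is legitimate, but your argument has a genuine gap.

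The gap is the ``approximate independence'' step. The hypothesis only forbids \emph{two} functions from sharing a coordinate of large degree-$d$ influence; it does not say that each $g_j$ is essentially supported on its private set $R_j$. Consider the extreme case $f_1=\cdots=f_t=f$ for a single function $f$ all of whose degree-$d$ influences are below $\vartheta$: the pairwise-minimum hypothesis holds vacuously, every $R_j$ with $j\ge 1$ is empty, all of the variance of every $g_j$ sits on the shared set $R_0$, and the events $\{f_j(C_{x,S_\epsilon})\equiv 0\}$ are \emph{identical} rather than independent. Your argument then only justifies the single-event bound $1-\delta+\eta$, not $(1-\delta+\eta)^t$, so the crucial gain from taking $t$ large evaporates. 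More generally, the invariance principle does the opposite of what you want here: it says that low-influence functions of the same input behave like functions of a \emph{common} Gaussian, so the parts of $g_1,\dots,g_t$ living on $R_0$ can be arbitrarily strongly correlated (two copies of Majority have no influential coordinates yet are perfectly correlated), and no Lindeberg swap with smooth test functions can make that correlation $o_\vartheta(1)$; your assertion that ``$R_0$ contributes only vanishing noise'' is exactly where the proof breaks. The known proofs of the multi-function statement accordingly do not factor the probability over $j$; they argue, roughly, that if the conjunction probability exceeds $\delta$ then the single-function theorem, applied after suitable conditioning or restriction, forces two of the functions to share an influential coordinate. An argument of that type (or a direct appeal to \cite{BK}) is what is needed; the decomposition $[R]=R_0\sqcup R_1\sqcup\cdots\sqcup R_t$ by itself cannot deliver independence.
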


Essentially what this theorem says is that if a collection of $t$ fairly balanced functions are all identical to zero on the same random sub-cube with non-negligible probability, then at least two of these functions must share a common influential coordinate. In fact all the functions that we  use throughout this section   satisfy a strong balance property, that we denote by \emph{folding}.\footnote{We abuse the notion of folding here, and we stress that this should not be confused with the usual notion of folding in the literature, although it coincides with standard folding for the boolean case.}

\paragraph{Folded Functions}
We say that a function $f: \Z_q^R \mapsto \{0,1\}$ is \emph{folded} if every
\emph{line} of the form $\{x \in \Z_q^R \mid x = a + \lambda \mathbf{1}, \lambda 
\in \Z_q\}$ contains a unique point where $f(x)$ is zero, where $\mathbf{1} \in 
\Z_q^R$ is the all-one vector and $a \in \Z_q^R$ is any point.

\begin{remark}
\label{rem:expec}
For any folded function $f: \Z_q^R \mapsto \{0,1\}$, we have that  $\E_x \left[f (x) \right] = 1 - 1/q$. 
\end{remark}
We shall also extend the notion of \emph{dictatorship} functions restricted to the folded setting. In this setting, the $\ell$-th coordinate dictator function  $f_\ell: \Z_q^R \mapsto \{0,1\}$ for some $\ell \in [R]$ is defined as
\begin{align*}
f_\ell(x ) = \left \{ \begin{array} {l l}
1 & \text{ if $x_\ell \neq 0$ }\\
0 & \text{ if $x_\ell = 0$\,.}
\end{array} \right.
\end{align*}
Notice that $f_\ell$ is folded because it is zero exactly on the coordinate hyperplane $\{x \in \Z_q^R \mid x_\ell = 0\}$.
 
\paragraph{Truth Table Model}
In order to guarantee the folding property of a function $f: \Z_q^R \mapsto \{0,1\}$ in the truth table model, we adopt the following convention:\begin{itemize}
\item The truth table $\Tf{f}$ has $q^{R-1}$ entries in $\Z_q$, one for each 
$x \in \Z_q^R$ such that $x_1 = 0$.
\item For each $x \in \Z_q^R$ with $x_1 = 0$, the corresponding entry $\Tf{f}(x)$
contains the unique $\lambda \in \Z_q$ such that $f(x + \lambda \mathbf{1}) = 0$.
\end{itemize}
We can however use $\Tf{f}$ to query $f(x)$ for any $x \in \Z_q^R$ as follows: we have $f(x) = 0$ whenever $\Tf{f}{(x-x_1 \mathbf{1})} = \Tf{f}(0,x_2-x_1,\ldots,x_R-x_1) = x_1$ and $f(x) = 1$ otherwise.

We can now readily extend the notion of the  \emph{long code} encoding to match our definition of dictatorship functions.
\begin{definition}
\label{def:LC}
The long code encoding of an index $\ell \in [R]$ is simply $\Tf{f_\ell}$, the truth table of the \emph{folded} dictatorship function of the $\ell$-th coordinate. Similarly, the long code $\Tf{f_\ell} \in \Z_q^{q^{R-1}}$ is indexed by all $x\in \Z_q^R$ such that $x_1 = 0$. 
\end{definition} 

\subsubsection{Reduction from \uniquegames to \uCSP}
We first describe the reduction from \uniquegames to $\uCSP$ that is similar in many aspects to the reduction in Section~\ref{sec:red}. We then show that it also preserves the
Sherali-Adams integrality gap.

\paragraph{Reduction} Let $\mc{U}=(G,[R],\Pi)$ be a \uniquegames instance 
over a regular bipartite graph $G=(V,W,E)$. 
Given $\mc{U}$, we construct an instance $\cI$ of
$\uCSP$. The reduction has three parameters: an integer $q\geq2$ and reals $\delta, \epsilon >0$, where $\epsilon$ is chosen such 
that $\epsilon R$ is an integer. We then select $t$ to be a large integer depending on $\epsilon, \delta$ and $q$ so as to satisfy Lemma~\ref{lem:qsound}.  

The resulting \uCSP instance $\cI$ will be defined over $|W|q^{R-1}$ variables and $c|V|$ constraints, where $c:= c(R,\epsilon,t,\Delta,q)$ is a function of the degree $\Delta$ of the \uniquegames instance, and the constants $R, t, q$ and $\epsilon$. For our purposes, 
the \uniquegames integrality gap instance that we start from, has constant degree 
$\Delta$, and hence $c$ is a constant.

 We refer to the variables
of $\cI$ as follows: it has a variable $\langle w, z \rangle \in \Z_q$ for each
$w\in W$ and $z \in \Z_q^R$ such that $z_1 = 0$.
For further reference, we let
$\textrm{Var}(\cI)$ denote the set of variables of $\cI$. 
The constraints of $\cI$ are picked according the distribution in Figure~\ref{fig:dist2} on page~\pageref{fig:dist2}. One can see that a constraint $C:= C(v,\mc{W}_v,x,S_\epsilon)$ is then defined by the random vertex $v$ (Line~\ref{line:l1}), the $t$ random neighbors $\mc{W}_v = \{w_1,\dots,w_t\}$ (Line~\ref{line:l2}), the random $x\in \Z_q^R$ (Line~\ref{line:l3}) and the random subset $S_\epsilon \subseteq[R]$ (Line~\ref{line:l4}).
 \begin{figure}[ht]
\begin{mdframed}[linewidth=2]
\begin{enumerate}
\item Pick a vertex $v\in V$ uniformly at random.  \label{line:l1}
\item Pick $t$ vertices $w_1,\dots,w_t$ randomly and independently from 
the neighborhood $N(v)
  = \{w\in W: vw \in E\}$.
\label{line:l2}
\item Pick $x\in \Z_q^R$ at random. \label{line:l3}
\item Let $m = \epsilon R$. Pick indices $i_1,\dots,i_m$ randomly and independently \label{line:l4} 
from $[R]$ and let $S_\epsilon = \{i_1,\dots,i_m\}$ be the set of those indices. 
\item Output the constraint on the variables $\{\langle w_i,z-z_1 \mathbf{1} \rangle \mid i \in [t], \pi_{v,w_i}^{-1}(z) \in C_{x,S_\epsilon} \}$ that is true if   \begin{align*}
\left<w_i,z - z_1 \mathbf{1}\right> \neq z_1 && \forall \,\, 1\leq i \leq t, \forall z \text{ such that }\pi^{-1}_{v,w_1}(z) \in C_{x,S_\epsilon}
\end{align*}
  where $\pi(z)$ for $z\in \Z_q^R$ is defined as $\pi(z):=(z_{\pi(1)},z_{\pi(2)},\dots,z_{\pi(R)})$.
\end{enumerate}
\end{mdframed}
\caption{Distribution for the \uCSP constraints}
\label{fig:dist2}
\end{figure}

Note that if we think of the variables $\left<w,z\right>$ for a fixed $w\in W$ as the truth table of some function $f_w:\Z_q^R \mapsto \{ 0,1\}$, then $f$ is \emph{forced} to satisfy the folding property. 

We claim that if the starting \uniquegames instance $\mc{U}$ was a Sherali-Adams integrality gap instance, then $\mc{I}$ is also an integrality gap instance for the \uCSP problem. Similar to Section~\ref{sec:red}, we prove this in two steps; we first show that if $\OPT(\mc{U})$ is \emph{small}, then so is $\OPT(\mc{I})$. Formally speaking, the following holds:
\begin{lemma}
\label{lem:qsound}
For every $\epsilon, \eta >0$ and alphabet size $q\geq2$ there exists an integer $t$ so that $\OPT(\mc{I}) \leq \eta$ if $\OPT(\mc{U}) \leq \delta$ where $\delta>0$ is a constant that only depends on $\epsilon, \eta, q$ and $t$.
\end{lemma}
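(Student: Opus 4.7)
The plan is to prove the contrapositive: if $\OPT(\cI) > \eta$, then $\OPT(\cU) > \delta$ for a suitable $\delta = \delta(\epsilon, \eta, q, t)$, following the standard \uniquegames-to-CSP soundness pattern of Khot--Regev~\cite{KR03} and Bansal--Khot~\cite{BK} adapted to the folded $\Z_q$-valued setting. Shrinking $\eta$ if necessary, I may assume $\eta \leq 1/q$.

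\emph{Reformulation.} Let $\{f_w\}_{w \in W}$ be the family of folded functions $f_w : \Z_q^R \to \{0,1\}$ corresponding to an assignment of value $> \eta$ in $\cI$, and for each $v \in V$ and $w \in N(v)$ define $h_{v,w}(y) := 1 - f_w(\pi_{v,w}(y))$. Since $\pi_{v,w}$ is a bijection, Remark~\ref{rem:expec} gives $\E[h_{v,w}] = 1/q$. Inspection of Figure~\ref{fig:dist2} shows that the constraint attached to $(v, w_1, \dots, w_t, x, S_\epsilon)$ is satisfied exactly when $\bigwedge_{i=1}^t h_{v,w_i}|_{C_{x,S_\epsilon}} \equiv 0$. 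Writing $p_v$ for the probability of satisfying $v$'s constraint taken over $\mc{W}_v, x, S_\epsilon$, we have $\E_v[p_v] > \eta$.

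\emph{Extracting a pair of influential coordinates.} Invoke Theorem~\ref{thm:over} with parameters $\delta_{\text{IAO}} := \eta/4$, arity $q$ and the reduction parameter $\epsilon$; this produces $\vartheta > 0$ and integers $t, d$, and I fix the reduction's parameter $t$ equal to this value. A two-stage averaging---first over $v$, then over $\mc{W}_v$---isolates an $\Omega(\eta^2)$-mass collection of pairs $(v, \mc{W}_v)$ for which the inner probability (over $x, S_\epsilon$) exceeds $\eta/4$. Since $\E[h_{v,w_i}] = 1/q \geq \eta/4$, the contrapositive of Theorem~\ref{thm:over} yields, for each such good pair, distinct $\ell_1, \ell_2 \in [t]$ and a coordinate $i^\star \in [R]$ with $\min\{\Infl{i^\star}{d}{h_{v,w_{\ell_1}}}, \Infl{i^\star}{d}{h_{v,w_{\ell_2}}}\} > \vartheta$. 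A short Fourier calculation shows that $h_{v,w}$ is obtained from $\tilde h_w := 1 - f_w$ by a coordinate permutation, so this translates into $\Infl{\pi_{v,w_{\ell_k}}^{-1}(i^\star)}{d}{\tilde h_{w_{\ell_k}}} > \vartheta$ for $k \in \{1,2\}$, exhibiting influential coordinates on $w_{\ell_1}$ and $w_{\ell_2}$ that are matched by the UG bijections through $v$.

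\emph{Decoding.} For each $w \in W$, set $L_w := \{j \in [R] : \Infl{j}{d}{\tilde h_w} > \vartheta\}$; by Observation~\ref{obs:dinfl}, $|L_w| \leq d/\vartheta$. The randomized labeling assigns $\Lambda(w)$ uniformly in $L_w$ and assigns $\Lambda(v)$ by picking a uniform $w' \in N(v)$ and a uniform $j \in L_{w'}$, setting $\Lambda(v) := \pi_{v,w'}^{-1}(j)$. A $\binom{t}{2}$-pair averaging over good tuples, combined with the $\Delta$-regularity of $G$ (used to identify uniform edge sampling with (vertex, neighbor) sampling), shows that in expectation at least $\delta := \Omega(\eta^2 \vartheta/(\Delta d t^2))$ fraction of UG edges are satisfied, which depends only on $\epsilon, \eta, q, t$. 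The main obstacle is precisely this last step: aggregating the per-tuple local witnesses produced by Theorem~\ref{thm:over} into a global lower bound on expected edge satisfaction requires carefully passing from $t$-tuples to neighbor pairs via $\binom{t}{2}$-averaging and then invoking regularity to pass from (vertex, neighbor) pairs to uniformly random edges; the bookkeeping is standard but determines the precise constants in $\delta$, whereas the Fourier transport identity in the previous step is routine.
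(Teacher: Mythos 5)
Your proposal is correct and follows essentially the same route as the paper's proof: argue the contrapositive, view the assignment as folded functions $f_w$, average to isolate good tuples, invoke Theorem~\ref{thm:over} on the complemented-and-permuted functions to extract a matched pair of influential coordinates, and decode via the candidate-label sets $L_w$ with a randomized labeling. One bookkeeping remark that does not affect validity: the label-matching probability should be $(\vartheta/d)^2$ rather than $\vartheta/d$, and the factor $\Delta$ in your final $\delta$ is spurious (regularity already equates random edges with random vertex--neighbor pairs) and must not appear, since $\delta$ may depend only on $\epsilon,\eta,q,t$ while $\Delta$ in Theorem~\ref{UGgap} is itself chosen as a function of $\delta$.
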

\begin{proof}
Suppose towards contradiction that $\OPT(\mc{I}) > \eta$. As noted earlier, for a fixed $w \in W$, we can think of the variables $\left<w,z\right> \in \textrm{Var}(\mc{I})$ as the truth table of a folded function $f_w:\Z_q^R\mapsto \{0,1\}$, where $\Tf{f_w}(z):= \left<w,z\right>$. This is possible since the variables $\left<w,z\right> \in \textrm{Var}(\mc{I})$ are restricted to $z\in \Z_q^R$ with $z_0 = 0$. Given this alternative point of view, define for every vertex $w\in W$, a set of \emph{candidate labels} $L[w]$ as follows: \begin{align*}
L[w] = \{i\in [R]: \Infl{i}{d}{f_w} \geq \vartheta\}
\end{align*}
Note that $|L[w]|\leq d/\vartheta$ by Observation~\ref{obs:dinfl}.

For every vertex $v\in V$, and every $\mc{W}_v = \{ w_1,\dots,w_t\} \subseteq N(V)$, let \begin{align*}
\mc{C}_{v,\mc{W}_v} := \left\{C_{v,\mc{W}_v,x,S}: x\in \Z_q^R, \,\, S \subseteq[R] \text{ such that } |S| = \epsilon R \right\}
\end{align*}
A standard counting argument then shows that if $\OPT(\mc{I}) > \eta$, then at least $\eta/2$ fraction of the tuples $(v,w_1,\dots,w_t)$ have at least $\eta/2$ fraction of the constraints inside $\mc{C}_{v,\mc{W}_v}$ satisfied. We refer to such tuples as \emph{good}. Adopting the language of folded functions instead of variables, the aforementioned statement can be casted as \begin{align*}
\Pr_{x\in \Z_q^R, S_\epsilon \subseteq [R]} \left[\bigwedge_{i=1}^t \left(f_{w_i}\left(\pi_{v,w_i}\left(C_{x,S_\epsilon}\right)\right) \equiv 1 \right)\right] \geq \eta/2 && \text{if the tuple $(v,w_1,\dots,w_t)$ is good}
\end{align*}
where \begin{align*}
f \left( \pi \left(C_{x,S_\epsilon} \right) \right) \equiv 1 && \Longleftrightarrow && f(z)=1 \quad \quad \forall z \text{ such that } \pi^{-1}(z) \in C_{x,S_\epsilon}
\end{align*}
From Remark~\ref{rem:expec}, we get that $\E\left[f_{w_i} \right] = 1-1/q$, and hence invoking Theorem~\ref{thm:over} on the functions $\bar{f}_{w_1},\dots, \bar{f}_{w_t}$, where $\bar{f}_{w_i}(x):=1-f_{w_i}(\pi_{v,w_i}(x))$, yields that for every good tuple, there exists $\ell_1\neq \ell_2 \in \{1,2,\dots,t\}$ such that $\bar{f}_{w_{\ell_1}}$ and $\bar{f}_{w_{\ell_2}}$ share a common influential coordinate. Note that this is equivalent to saying that there exists $j_1 \in L[w_{\ell_1}]$, $j_2 \in L[w_{\ell_2}]$ such that $\pi_{v,w_{\ell_1}}(j_1) = \pi_{v,w_{\ell_2}}(j_2)$.

We now claim that if $\OPT(\mc{I}) > \eta$, then we can come up with a labeling $\Lambda:V\cup W \mapsto [R]$ that satisfies at least $\frac{\eta \vartheta^2}{2d^2 t^2}$ of edges, which contradicts the fact that $\OPT(\mc{U})\leq \delta$ for a small enough value of $\delta>0$. Towards this end, consider the following randomized labeling procedure: \begin{enumerate}
\item For every $w \in W$, let $\Lambda(w)$ be a random label from the set $L[w]$, or an arbitrary label if $L[w]= \emptyset$.
\item For every $v \in V$, pick a random neighbor $w \in N(v)$ and set $\Lambda(v) = \pi_{v,w}(\Lambda(w))$.
\end{enumerate}
We can readily calculate the fraction of edges in $\mc{U}$ that are satisfied by $\Lambda$. This follows from putting the following observations together: \begin{enumerate}
\item If we pick a random tuple $(v,w_1,\dots,w_t)$, it is \emph{good} with probability $\eta/2$.
\item If $(v,w_1,\dots,w_t)$ is \emph{good}, and we pick $w',w''$ at random from $\{w_1,\dots, w_t\}$, then with probability $1/t^2$ the functions $f_{{w'}}$ and $f_{{w''}}$ share a common influential coordinates.
\item If $(v,w_1,\dots,w_t)$ is \emph{good}, and the functions $f_{{w'}}$ and $f_{{w''}}$ share a common influential coordinates, then picking a random label to ${w'}$ and ${w''}$ from $L[{w'}]$ and $L[{w''}]$ respectively, will satisfies $\pi_{v,{w'}(\Lambda({w'})} = \pi_{v,{w''}(\Lambda({w''})}$ with probability $1/(d^2 / \vartheta^2)$.
\end{enumerate} 
Hence the expected number of edges satisfied by $\Lambda$ in this case is \begin{align*}
\Pr_{vw\in E} \left[ \Lambda(v) = \pi_{v,w}(\Lambda(w)) \right] = \frac{\eta \vartheta^2}{2d^2 t^2}
\end{align*}
\end{proof}
We now show that given an $r$-rounds Sherali-Adams solution of \emph{high} value for $\mc{U}$, we can also come up with an $r$-rounds Sherali-Adams solution for $\mc{I}$ of high value as well. The proof goes along the same lines of that of Lemma~\ref{fcspSA}, and hence we will try to only highlight the differences.
\begin{lemma}
\label{ucspSA}
  Let $\{\mu(S) \mid S \subseteq V \cup W, |S| \leq r\}$ be a consistent collection of local distributions defining a solution to the $r$-rounds Sherali-Adams relaxation of the regular bipartite  \uniquegames instance $\cU$. Then we can define a consistent collection of local distributions $\{\sigma(S) \mid S \subseteq  \textrm{Var}(\cI), |S| \leq r\}$ defining a solution to the $r$-rounds Sherali-Adams relaxation of the $\uCSP$ instance $\cI$ so that
  \begin{align*}
  \E_{C\in \cC}\left[ \Pr_{\alpha \sim \sigma(S_C)}[ \alpha \mbox{ satisfies } C] \right] \geq (1-\epsilon) (1-\frac{1}{q})\left(1  - t \cdot \E_{vw \in E}\left[  \Pr_{(\Lambda(v), \Lambda(w) \sim \mu(\{v,w\})}[\Lambda(v) \neq \pi_{w,v}(\Lambda(w))] \right] \right),
  \end{align*}
  where $t$ and $\epsilon$ are the parameters of the reduction, and $\sigma(S_C)$ is the distribution over the set of variables in the support $S_C$ of constraint $C$.
\end{lemma}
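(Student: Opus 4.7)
The plan is to mirror the proof of Lemma \ref{fcspSA} in the $\Z_q$-valued folded setting, with the only substantive changes being that the Boolean long code is replaced by its folded counterpart (Definition \ref{def:LC}) and that one extracts an extra $(1-1/q)$ factor from the randomness of $x$. First I construct $\sigma(S)$: for each $S \subseteq \textrm{Var}(\cI)$ with $|S|\leq r$, let $T_S := \{w \in W : \langle w,z\rangle \in S \text{ for some } z\}$; to sample from $\sigma(S)$, I draw $\Lambda_{T_S} \sim \mu(T_S)$ and set
\begin{align*}
\alpha_S(\langle w,z\rangle) \;:=\; \Tf{f_{\Lambda_{T_S}(w)}}(z) \;=\; -z_{\Lambda_{T_S}(w)} \pmod q
\end{align*}
for every $\langle w,z\rangle \in S$. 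Consistency of $\{\sigma(S)\}$ is inherited from consistency of $\{\mu(S)\}$ exactly as in Lemma \ref{fcspSA}: for $S'\subseteq S$ we have $T_{S'}\subseteq T_S$, so the marginal of $\mu(T_S)$ on $T_{S'}$ agrees with $\mu(T_{S'})$, and the induced marginal of $\sigma(S)$ on $S'$ agrees with $\sigma(S')$.

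Next I reduce the value bound to a single completeness statement. Parameterising a random constraint $C$ by $(v,\mc{W}_v,x,S_\epsilon)$ as in Figure \ref{fig:dist2} and letting $\psi(\Lambda)$ denote the folded long-code assignment obtained from $\Lambda$, I rewrite
\begin{align*}
\E_{C\in\cC}\!\left[\Pr_{\alpha\sim\sigma(S_C)}[\alpha\text{ sat }C]\right] \;=\; \E_{v,w_1,\ldots,w_t}\!\left[\Pr_{\Lambda,x,S_\epsilon}\!\left[\psi(\Lambda)\text{ sat }C(v,\mc{W}_v,x,S_\epsilon)\right]\right],
\end{align*}
where $\Lambda \sim \mu(\{v,w_1,\ldots,w_t\})$. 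The claim I then target, in analogy with Claim \ref{claim:completeness}, is: \emph{if $\Lambda$ satisfies each edge $vw_1,\ldots,vw_t$, then over the independent random choices of $x\in\Z_q^R$ and $S_\epsilon\subseteq[R]$ the assignment $\psi(\Lambda)$ satisfies $C(v,\mc{W}_v,x,S_\epsilon)$ with probability at least $(1-\epsilon)(1-1/q)$}. Given this claim, conditioning on $\Lambda$ satisfying all $t$ edges, a union bound over the $t$ events $\{\Lambda$ does not satisfy $vw_i\}$, consistency of $\mu$ (so the marginals on $\{v,w_i\}$ coincide with $\mu(\{v,w_i\})$), and $\Delta$-regularity of $G$ (to replace a uniform random $(v,w)$ with $w\in N(v)$ by a uniform edge of $E$) yield the stated lower bound exactly as in lines (\ref{ub})--(\ref{lasteq}) of the \oFk proof.

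It remains to prove the completeness statement. Assume $\Lambda$ satisfies $vw_1,\ldots,vw_t$ and set $\ell := \Lambda(v)$ and $\ell_i := \Lambda(w_i) = \pi_{v,w_i}(\ell)$. By the definition of $\psi$, for any $z\in\Z_q^R$ and $i\in[t]$,
\begin{align*}
\psi(\Lambda)(\langle w_i,z-z_1\mathbf{1}\rangle) \;=\; -(z-z_1\mathbf{1})_{\ell_i} \;=\; z_1 - z_{\ell_i} \pmod q,
\end{align*}
so the clause $\langle w_i,z-z_1\mathbf{1}\rangle \neq z_1$ is equivalent to $z_{\ell_i}\neq 0$. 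A short bookkeeping using the definition of $C_{x,S_\epsilon}$ together with the identity $\ell_i = \pi_{v,w_i}(\ell)$ shows that whenever $\ell\notin S_\epsilon$, the coordinate $z_{\ell_i}$ is forced, for every $z$ with $\pi^{-1}_{v,w_i}(z)\in C_{x,S_\epsilon}$ and every $i\in[t]$, to equal $x_\ell$. Consequently, on the event $\{\ell\notin S_\epsilon\}\cap\{x_\ell\neq 0\}$---whose probability is exactly $(1-\epsilon)(1-1/q)$ over independent uniform $S_\epsilon$ of size $\epsilon R$ and uniform $x\in\Z_q^R$---the constraint is always satisfied. The main obstacle is precisely this index bookkeeping: one has to check that the folding shift $-z_1\mathbf{1}$ and the permutation $\pi_{v,w_i}$ cancel so that the ``relevant'' coordinate of $z$ both does not depend on $i$ and is constant on the sub-cube, which is the same calculation that underlies the completeness analysis of the one-free-bit test of Bansal and Khot~\cite{BK}. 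Once this is verified, the remainder of the argument is a verbatim transcription of the proof of Lemma \ref{fcspSA}.
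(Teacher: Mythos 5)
Your proposal is correct and follows essentially the same route as the paper: you build $\sigma(S)$ by pushing $\mu(T_S)$ forward through the folded long code, reduce the value bound to the analogue of Claim~\ref{claim:completeness} (the paper's Claim~\ref{claim:completeness2}, proved in Appendix~\ref{app:proofofclaim2}), and then repeat the union-bound/consistency/regularity steps of Lemma~\ref{fcspSA}. Your observation that the clause $\langle w_i, z - z_1\mathbf{1}\rangle \neq z_1$ unwinds to $z_{\ell_i} \neq 0$ is exactly the computation in the appendix, and the deferred ``bookkeeping'' is precisely the identity $z_{\Lambda(w_i)} = y_{\Lambda(v)} = x_{\Lambda(v)}$ on the sub-cube when $\Lambda(v) \notin S_\epsilon$, so nothing essential is missing.
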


\begin{proof}
Let $\{\mu(S) \mid S \subseteq V \cup W, |S| \leq r\}$ be  a solution to the $r$-rounds SA relaxation of the \uniquegames instance $\cU$, and recall that $\cI$ is the \uCSP instance we get by applying the reduction. We will now use the collection of consistent local distributions of the \uniquegames instance, to construct another  collection of consistent local distributions for the variables in $\textrm{Var}(\cI)$. 

For every set $S\subseteq \textrm{Var}(\cI)$ such that $|S| \leq r$, let $T_S\subseteq W$ be the subset of vertices in the \uniquegames instance defined as follows: \begin{align}
T_S = \{w \in W: \left<w,x\right> \in S\}.
  \end{align}
  We will now construct $\sigma(S)$ from $\mu(T_S)$ in the following manner. 
  Given a labeling $\Lambda_{T_S}$ for the vertices in $T_S$ drawn from $\mu(T_S)$, define an assignment $\alpha_S$ for the variables in $S$ as follows: for a variable $\left< w, x \right> \in S$, let $\ell = \Lambda_{T_S}(w)$ be the label of $w$ according to $\Lambda_{T_S}$. Then the new assignment $\alpha_S$ sets $\alpha_S(\left<w,x\right>) := \Tf{f_\ell}(x)$, where $\Tf{f_\ell}$ is the long code encoding of $\ell$ as in Definition~\ref{def:LC}. The aforementioned procedure defines a family $\{\sigma(S)\}_{S \subseteq \textrm{Var}(\cI): |S| \leq r}$ of local distributions for the variables of the \uCSP instance $\mc{I}$. The same argument as in the proof of Lemma~\ref{fcspSA} yields that $\{\sigma(S) \mid S \subseteq  \textrm{Var}(\cI), |S| \leq r\}$ defines a feasible solution for the $r$-round Sherali-Adams relaxation of the $\uCSP$ instance $\cI$. 
  
  It remains to bound the value of this feasible solution, i.e., \begin{align}
  \label{app:ex1}
    \E_{C\in \cC}\left[ \Pr_{\alpha \sim \sigma(S_C)}[\alpha \mbox{ satisfies } C] \right] =  \E_{v,w_1,\dots,w_t} \left[\Pr_{\Lambda \sim \mu(\{v,w_1,\dots,w_t\}), x ,S} \left[\psi(\Lambda) \mbox{ satisfies } C(v,\mc{W}_v,x,S) \right] \right].
 \end{align}
 where $\psi(.)$ the operator mapping a labeling of the vertices in $T_S$ to an assignment  for the variables in $S$, i.e., $\psi(\Lambda_{T_S}) = \alpha_S$. 
The following claim, which is in some sense the equivalent of Claim~\ref{claim:completeness} in the \uCSP language, along with the same remaining steps of the proof of Lemma~\ref{fcspSA} will yield the proof.
 \begin{claim}
 \label{claim:completeness2}
 If $\Lambda$ satisfies $vw_1,\dots,vw_t$ simultaneously, then $\psi(\Lambda)$ satisfies $C(v,\mc{W}_v,x,S)$ with probability at least $(1-\epsilon)(1-\frac{1}{q})$. Moreover,  if we \emph{additionally} have that  $\Lambda(v) \notin S$ and $x_{\Lambda(v)} \neq 0$, then $\psi(\Lambda)$ always satisfies $C(v,\mc{W}_v,x,S)$.
 \end{claim}

 Equipped with this, we can use conditioning to lower-bound the probability inside the expectation in (\ref{app:ex1}) by a product of two probabilities, where the first is \begin{align}
\label{app:pr1}  \Pr_{\Lambda \sim \mu(\{v,w_1,\dots,w_t\}),x,S} \left[\psi(\Lambda) \mbox{ satisfies } C(v,\mc{W}_v,x,S)  | \Lambda \mbox{ satisfies } vw_1,\dots, vw_t \right]
 \end{align}
 and the second is \begin{align*}
 \Pr_{\Lambda \sim \mu(\{v,w_1,\dots,w_t\})} \left[ \Lambda \mbox{ satisfies } vw_1,\dots, vw_t \right] .
 \end{align*}
 Thus using Claim \ref{claim:completeness2}, we get 
\begin{align*}
 & \E_{C\in \cC}\left[ \Pr_{\alpha \sim \sigma(S_C)}[\alpha \mbox{ satisfies } C] \right] \geq  (1-\epsilon)(1-\frac{1}{q})\cdot \E_{v,w_1,\dots,w_t} \left[ \Pr_{\Lambda \sim \mu(\{v,w_1,\dots,w_t\})} \left[ \Lambda \mbox{ satisfies } vw_1,\dots, vw_t \right]  \right] \\
\geq & (1-\epsilon) (1-\frac{1}{q}) \cdot\left( 1 -  t \cdot \E_{v,w} \left[\Pr_{\Lambda \sim \mu(\{v,w\})} \left[ \Lambda \mbox{ does not satisfy } vw \right] \right]\right)
 \end{align*}
\end{proof}

The proof of Corollary~\ref{cor:fcspSA} adjusted to the \uCSP problem now yields Theorem~\ref{thm:mainUCSP}.


\subsection{LP-reduction from \uCSP to \ekvertexcover}
We will now reduce \uCSP to \ekvertexcover on $q$-Uniform hypergraphs with the reduction mechanism
outlined in Section~\ref{sec:lpreduc}, which will yield the desired LP
hardness for the latter problem. 

We start by recasting \ekvertexcover and \uCSP
in the language of Section~\ref{sec:lpreduc}. The first problem is defined on a fixed
$q$-uniform hypergraph $H = (V,E)$.

\begin{problem}[$\ekvertexcover(G)$]
\label{prob:ekvc}
  The set of feasible solutions $\mc{S}$ consists of all possible 
  vertex covers $U \subseteq V$, and there is one instance 
  $\mc{I} = \mc{I}(H') \in \mf{I}$ for each induced subgraph $H'$ of
  $G$. For each vertex cover \(U\) we have \(\Cost_{\mc{I}(H')}(U) \coloneqq
  |U \cap V(H')|\) being the size of the induced vertex cover in \(H'\). 
\end{problem}
  
We also recast \uCSP as follows. Let $n, q, k \in \N$ be fixed,
with $k \leq n$.

\begin{problem}[$\uCSP(n,q,k)$]
 \label{prob:ucsp}
 The set of feasible solutions \(\mc{S}\) consists of all 
 possible variable assignments, i.e., all possible values of $\Z_q^n$
 and there is one instance 
 $\mc{I} = \mc{I}(\mc{P})$ for each possible set $\mc{P} =
 \{P_1, \ldots, P_m\}$ of
\uCSP predicates of arity $k$. As before, for 
 an instance \(\mc{I} \in \mf{I}\) and an assignment 
 \(x \in \Z_q^n\), \(\Val_{\mc{I}}(x)\) is the fraction 
 of predicates $P_i$ that $x$ satisfies (see Definition~\ref{def:ucsp}). 
\end{problem}

With the notion of LP relaxations and \uCSP from above, we can now formulate LP-hardness of approximation for \uCSP{}s, which   follows directly from Theorem~\ref{thm:mainUCSP} by
the result of \cite{CLRS13} (See the discussion in \cite{CLRS13} and Section 7 in \cite{LRS14}).

\begin{theorem}
\label{mainucsp}
For every $\epsilon>0$ and  alphabet size $q\geq2$, there exists a constant arity $k = k(\epsilon)$ 
such that for infinitely many $n$ we have \(\fc_+(\uCSP(n,q,k),1-1/q - \epsilon,\epsilon) \geq
n^{\Omega\left(\log n / \log \log n \right)}\).
\end{theorem}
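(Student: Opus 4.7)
The statement should follow by combining the Sherali-Adams integrality gap for $\uCSP$ established in Theorem~\ref{thm:mainUCSP} with the main result of Chan, Lee, Raghavendra and Steurer~\cite{CLRS13} (as strengthened in Section~7 of \cite{LeeRS14}), which converts Sherali-Adams integrality gaps for Max-CSPs of fixed arity over a fixed finite alphabet into LP extension complexity lower bounds. This is completely parallel to the way Theorem~\ref{mainfcsp} is derived from Corollary~\ref{cor:fcspSA} in the $\oFk$ setting, so the proof amounts to checking that the present CSP fits into the same framework.

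First, I would apply Theorem~\ref{thm:mainUCSP} to the given $\epsilon > 0$ and $q \geq 2$ to obtain a constant $\kappa > 0$ and an arity $k = k(\epsilon,q)$ such that, for infinitely many $n$, there exists a $\uCSP$ instance $\cI$ on $n$ variables of arity $k$ with $\OPT(\cI) \leq \epsilon$ while the $n^\kappa$-round Sherali-Adams relaxation has value at least $1 - 1/q - \epsilon$. Equivalently, no Sherali-Adams relaxation of fewer than $n^\kappa$ rounds is a $(1 - 1/q - \epsilon, \epsilon)$-approximation on this family of instances, placing such an SA relaxation of the appropriate size \emph{strictly above} the soundness threshold required by the definition of $\fc_+$.

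Next, I would invoke the transfer principle from \cite{CLRS13,LeeRS14}: for any Max-CSP family of bounded arity over a fixed finite alphabet, every LP relaxation of size $N$ achieving a $(c,s)$-approximation yields a Sherali-Adams relaxation with $O(\log N)$ rounds (with constants depending on the arity $k$ and alphabet size $q$) achieving the same $(c,s)$ guarantee. Because $\uCSP(n,q,k)$ is a Max-CSP of fixed arity $k$ over the fixed alphabet $\Z_q$ whose predicates $P_A$ depend only on the restriction of the assignment to $k$ coordinates (Definition~\ref{def:ucsp}), the hypotheses of the transfer principle are met without modification. Contrapositively, the existence of the $n^\kappa$-round SA integrality gap instance from Theorem~\ref{thm:mainUCSP} forces every LP relaxation providing a $(1 - 1/q - \epsilon, \epsilon)$-approximation for $\uCSP(n,q,k)$ to have size at least $n^{\Omega(\log n / \log \log n)}$.

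The main technical point to verify is the bookkeeping of parameters in the CLRS/LRS reduction so that the round lower bound $r = n^\kappa$ translates into the stated size lower bound of $n^{\Omega(\log n / \log \log n)}$; this requires noting that the reduction is uniform in the alphabet size $q$ and the arity $k$, which are both constants in our application. Beyond this, no new ingredient is needed: the $\uCSP$ predicate structure is visibly of the type handled by the CLRS/LRS framework, so the same chain of implications that yielded Theorem~\ref{mainfcsp} for $\oFk$ now yields Theorem~\ref{mainucsp} for $\uCSP$.
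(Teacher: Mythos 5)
Your proposal is correct and follows essentially the same route as the paper, which likewise obtains Theorem~\ref{mainucsp} by combining the Sherali--Adams gap of Theorem~\ref{thm:mainUCSP} with the SA-to-LP transfer principle of \cite{CLRS13} (and its discussion in \cite{LeeRS14}), exactly in parallel with the derivation of Theorem~\ref{mainfcsp} from Corollary~\ref{cor:fcspSA}. The only caveat is that the tradeoff in \cite{CLRS13} is stated as ``a size-$n^{r}$ LP yields an $O(r)$-round SA relaxation with the same guarantee'' rather than ``$O(\log N)$ rounds,'' but this does not affect the argument since you correctly flag the parameter bookkeeping that converts the $n^{\kappa}$-round gap into the $n^{\Omega(\log n/\log\log n)}$ size bound.
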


Similar to Section~\ref{sec:lphardness}, we first define our host hypergraph, and then provide a reduction that will yield our hardness result for \ekvertexcover using Theorem~\ref{BPZmain}.

\begin{definition}[\ekvertexcover host hypergraph]
  For fixed number of variables $n$, alphabet $q$, and arity $k \leq n$ 
  we define a hypergraph \(H^* = H^*(n,q,k)\) as follows. Let
  $x_1$, \ldots, $x_n$ denote the variables of the CSP.

  \emph{Vertices:} For every subset $S =\{i_1,\dots,i_k\} \subseteq [n]$, and every value of $A = (a_1,\dots,a_k) \in \Z_q^k$,
we have a vertex $v_{S,A}$ corresponding to the \uCSP predicate\begin{align*}
P(x_{i_1,},\dots,x_{i_k}) = 1 \quad \quad \text{ if and only if } \quad \quad \bigwedge_{j=1}^k (x_{i_j} \neq a_j)
\end{align*} 


\emph{Hyperedges:} Any $q$ vertices $v_{S_1,A_1},\dots,v_{S_q,A_q}$ are connected with a hyperedge if there exists a variable $x_i \in \bigcap_{j=1}^q S_j$ 
, such that $a_{i_1} \neq a_{i_2} \neq \dots \neq a_{i_q}$, where $a_{i_j}$ is the entry of the vector $A_j$ that is compared versus the variable $x_i$ in the predicate defined by the pair $(S_j,A_j)$. 
In other words, we have a hyperedge connecting $q$ vertices sharing a \emph{common variable} $x_i$, if no two of their \emph{corresponding} predicates check $x_i$ versus the same $a \in \Z_q$.  
\end{definition}

Note that the graph has \(q^k \binom{n}{k}\) vertices, which
is polynomial in \(n\) for fixed \(k\) and $q$. In order to establish LP-inapproximability 
of \ekvertexcover it now suffices to define a reduction 
satisfying Theorem~\ref{BPZmain}.

\begin{maintheorem}
  For every $\epsilon>0, q \geq 2$ and for infinitely many $n$, there exists a 
  hypergraph $H$ with $|V(H)|=n$ such that
  $\fc_+(\ekvertexcover(H),q-\epsilon) \geq n^{\Omega \left( \log n / \log \log n\right)}$. \end{maintheorem}
\begin{proof}
We reduce \uCSP on $n$ variables of alphabet $\Z_q$ with sufficiently large arity $k = k(\epsilon)$
to \ekvertexcover over $H \coloneqq H^*(n,q,k)$. 
For a \uCSP instance $\mc{I}_1 = \mc{I}_1(\mc{P})$ and set of Not-Equal
predicates $\mc{P} = \{P_{S_1,A_1}, P_{S_2,A_2},\ldots,P_{S_m,A_m}\}$, let $H(\mc{P})$ be the induced 
subgraph of $G$ on the set of vertices
$V(\mc{P}) = \{v_{S_i,A_i} \mid 1\leq i \leq m\}$. 

Similarly to Section~\ref{sec:lphardness}, we provide maps defining a reduction from \uCSP 
to \ekvertexcover.  The proof will then follow by combining Theorems~\ref{mainucsp} and~\ref{BPZmain}.

In the following, let \(\Pi_1 = (\mc{S}_1, \mf{I}_1)\) be the 
\uCSP problem and let \(\Pi_2 = (\mc{S}_2, \mf{I}_2)\) be the 
\ekvertexcover problem. In view of Definition~\ref{RedDef}, we 
map $\mc{I}_1 = \mc{I}_1(\mc{P})$ to $\mc{I}_2 = \mc{I}_2(H(\mc{P}))$
and let \(\mu \coloneqq 1\) and \(\zeta_{\mc{I}_1} \coloneqq \frac{1}{m}\) 
where \(m\) is the number of constraints in \(\mc{P}\).

For a total assignment $x \in \mc{S}_1$ we define
$U = U(x) \coloneqq \{v_{S,A}: P_{S,A}(x) = 0\}$. The latter
is indeed a vertex cover. To see this, consider its complement $I = I(x) \coloneqq
\{ v_{S,A} \mid P_{S,A}(x) = 1\}$. Since $x$ satisfies all the constraints corresponding to vertices in $I$ simultaneously, no hyperedge can be completely contained in $I$. Otherwise this would imply that there exists a variable $x_i$, and $q$ predicates $P^{'}_1, P^{'}_2,\dots, P^{'}_q \in \mc{P}$ requiring $x_i \neq j$ for all $j\in \Z_q$, and yet are all simultaneously satisfied by $x$.

We first verify the condition that \(\Val_{\mc{I}_1}(x) = 1 - \frac{1}{m}
\Cost_{\mc{I}_2}(U(x))\) for all instances \(\mc{I}_1 \in \mf{I}_1\) and 
assignments \(x \in \mc{S}_1\). Every predicate $P_{S,A}$ in $\mc{P}$ over 
the variables in $\{x_{i} \mid i \in S\}$ has exactly one 
representative vertex $v_{S,A}$, that will be inside $U$ only if $P_{S,A}(x)=0$, and hence our claim holds.
In other words, for any specific \(\mc{P}\) the affine shift is \(1\),
and the normalization factor is $\frac{1}{m}$. 

Next we verify exactness of the reduction, i.e.,
\[\OPT(\mc{I}_1) = 1 - \frac{1}{m} \OPT(\mc{I}_2)\,.\]
For this take an arbitrary vertex cover $U \in \mc{S}_2$ 
of $H$ and consider its complement. This is an independent 
set, say $I$. 
As $I$ is an independent set\footnote{In a hypergraph $H=(V,E)$ a set $I \subseteq V$ is said to be \emph{independent} if no hyperedge of $H$ is fully contained in $I$.}, we know 	that for any variable $x_\ell$ in $ \bigcup_{v_{S,A} \in I} S$, there exist a least one $\tilde{a}_{x_\ell}\in \Z_q$ such that  $x_\ell$ is \emph{not checked} versus $\tilde{a}_{x_\ell}$ in any of the predicates corresponding to vertices in $I$. Hence any assignment $x$ setting each $x_\ell$ to $\tilde{a}_{x_\ell}$ as defined earlier, sets $P_{S,A}(x) = 1$ for all $v_{S,A} \in I$.
Then the corresponding vertex cover $U(x)$
is contained in $U$. Thus there always exists an optimum solution 
to $\mc{I}_2$ that is of the form $U(x)$. Therefore, the reduction 
is exact.

It remains to compute the inapproximability factor via
Theorem~\ref{BPZmain}. 
We have 
\begin{align*}
  \rho_2 =  \frac{1 -
  \epsilon}{1 - (1 - 1/q - \epsilon)}
\geq q - \Theta (\epsilon)
\end{align*}
\end{proof}



\section{SDP-Hardness for \independentset}
\label{sec:sdpind}
We saw in Section~\ref{sec:lphardness} how to obtain an LP-hardness for \vertexcover and \independentset, starting from an LP-hardness for the \oFk problem. Restricting our starting CSP to have only \emph{one free bit} is crucial for the \vertexcover problem, since each constraint is then represented by a \emph{cloud} containing exactly two vertices in the resulting graph. In this case, an assignment satisfying \emph{almost all} the constraints, corresponds to a vertex cover containing  \emph{slightly more than half} of the vertices (i.e., one vertex in almost all the clouds, and both vertices in the \emph{unsatisfied} clouds), whereas if no assignment can simultaneously satisfy more than $\epsilon$-fraction of the constraints, then any vertex cover should contain \emph{almost all} the vertices. This extreme behaviour of the resulting graph is necessary to obtain a gap of $2$ for the \vertexcover problem.

However, if we are only interested in the \independentset problem, any CSP with a sufficiently large gap between the soundness and completeness can yield the desired LP-Hardness, by virtue of the well-known FGLSS reduction \cite{FGLSS96}. Formally speaking, given reals $0<s<c \leq 1$, and any CSP problem $\Pi(P,n,k)$, where $n$ is the number of variables and $P$ is a predicate of arity $k$,  and knowing that no small linear program can provide a $(c,s)$-approximation for this CSP, then one can show that no small LP can as well approximate the \independentset problem within a factor of $c/s$. This can be simply done by tweaking the reduction of Section~\ref{sec:lphardness} in a way that the number of vertices in each cloud is equal to the number of satisfying assignments for the predicate. Hence dropping the \emph{one free bit} requirement, and restricting ourselves to CSPs such that $c/s = 1/\epsilon$ for arbitrarily small $\epsilon:= \epsilon(k)>0$, would yield the desired $\omega(1)$ LP-hardness for the \independentset problem. 

Moreover, the reduction framework of~\cite{BPZ} and our construction in Section~\ref{sec:lphardness} are agnostic to whether we are proving LP or SDP lower bounds, and hence having an analog of Theorem~\ref{mainfcsp} in the SDP world would yield that any SDP of size less than $n^{\Omega \left(\log n / \log \log n\right)}$ has an integrality gap of $\omega(1)$ for the \independentset problem. In fact such SDP-hardness results for certain families of CSPs and hence an analog of Theorem~\ref{mainfcsp} are known: if our starting CSP has a predicate that supports pairwise independence with a sufficiently large arity $k$, then the result of~\cite{BCK15} by virtue of \cite{LRS14} gives us the desired SDP base hardness. By the argumentation from above we obtain:

\begin{corollary}
For every $\epsilon>0$ and for infinitely many $n$, there exists a graph $G$ with $|V(G)| = n$, such that no polynomial size SDP is a $(1/\epsilon)$-approximate SDP relaxation for $\independentset(G)$.
\end{corollary}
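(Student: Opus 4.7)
My plan is to follow the two-step blueprint already sketched in the paragraphs preceding the corollary: first establish a small-SDP hardness for a suitable base CSP, then push it through an FGLSS-style reduction to \independentset.

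First, I would pick the base CSP. Instead of \oFk (whose ``two satisfying assignments'' structure was needed only for the factor-$2$ gap of \vertexcover), I would take a predicate $P : \{0,1\}^k \to \{0,1\}$ of large arity $k$ whose satisfying assignments \emph{support a pairwise independent distribution}. The SDP integrality gap machinery of Kothari--Mori--O'Donnell--Witmer (building on Lee--Raghavendra--Steurer~\cite{LeeRS14}) — packaged here as the \cite{BCK15} result the paper cites — guarantees that for any $\eta>0$ and for $k$ sufficiently large, the Sum-of-Squares/Lasserre hierarchy fails to distinguish $(1{-}\eta)$-satisfiable instances from instances of value at most $2^k \cdot \eta / |P^{-1}(1)|$, even after $n^{\Omega(1)}$ rounds. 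By the SDP analogue of Chan--Lee--Raghavendra--Steurer shown in~\cite{LeeRS14}, this Lasserre gap transfers to a lower bound on \emph{every} SDP relaxation: for every $\epsilon'>0$ there is an arity $k=k(\epsilon')$ and constants $c>s>0$ with $c/s \geq 1/\epsilon'$ such that any $(c,s)$-approximate SDP relaxation of the CSP on $n$ variables has size $n^{\Omega(\log n / \log\log n)}$. This is the SDP analogue of our Theorem~\ref{mainfcsp}.

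Second, I would run the FGLSS-style reduction of Section~\ref{sec:lphardness}, but adapted as the paper notes: the host graph $G^\star(n,k)$ has one vertex $v_{P,\alpha}$ for each predicate $P$ and each satisfying partial assignment $\alpha \in P^{-1}(1)$ (so clouds now have size $|P^{-1}(1)|$ rather than $2$), with edges between conflicting partial assignments. For an instance $\mc I_1(\mc P)$ of the CSP and a total assignment $x$, the set $I(x) \coloneqq \{v_{P,\alpha} : \alpha \text{ satisfies } P,\ x \text{ extends } \alpha\}$ is independent in $G^\star$, and $|I(x)|$ equals $m \cdot \Val_{\mc I_1}(x)$, where $m = |\mc P|$. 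Conversely, any independent set $I$ in the induced subgraph on $V(\mc P)$ is contained in some $I(x)$ (pick any total assignment compatible with the mutually-compatible partial assignments in $I$), so $\OPT(\mc I_2(H(\mc P))) = m \cdot \OPT(\mc I_1(\mc P))$. This yields an \emph{exact reduction} (in the sense of Definition~\ref{RedDef}) from the maximization CSP to \independentset, with affine shift $\mu = 0$ and normalization $\zeta_{\mc I_1} = 1/m$. Crucially — and this is the whole point of setting things up the way \cite{BPZ} does — the reduction is a linear/affine one, so it carries over verbatim to SDP relaxations: pulling back an SDP relaxation of $\independentset(G^\star)$ through the same maps gives an SDP relaxation of the base CSP of the same size.

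Applying Theorem~\ref{BPZmain} in its maximization-to-maximization form (with $\mu=0$), an inapproximability factor of $c/s$ for the base CSP transfers to an inapproximability factor of $c/s$ for \independentset on $G^\star$. Choosing $\epsilon'$ with $1/\epsilon' \geq 1/\epsilon$, the SDP base hardness and the reduction together show that for the appropriate $G = G^\star(n,k(\epsilon'))$ with $N = |V(G^\star)| = \poly(n)$ vertices, every $(1/\epsilon)$-approximate SDP relaxation of $\independentset(G)$ has size $N^{\Omega(\log N/\log\log N)}$, which is super-polynomial and in particular not polynomial. The main obstacle, and really the only non-routine step, is the first one: one has to be careful that the SDP-analogue of Theorem~\ref{mainfcsp} is genuinely available off the shelf for a CSP whose predicate supports pairwise independence — i.e., that the \cite{BCK15}/\cite{LeeRS14} combination produces a size (not just round) lower bound with the required $c/s$ behavior as $k \to \infty$. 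Once that ingredient is in hand, the FGLSS reduction and the $\mu=0$ case of Theorem~\ref{BPZmain} conclude the argument immediately.
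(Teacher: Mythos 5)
Your proposal is correct and follows essentially the same route as the paper: an SDP analogue of Theorem~\ref{mainfcsp} obtained from a pairwise-independence-supporting predicate via \cite{BCK15} and \cite{LRS14}, pushed through the FGLSS reduction with clouds of size $|P^{-1}(1)|$ and the $\mu=0$ maximization form of Theorem~\ref{BPZmain}. The paper's own treatment is just the informal discussion preceding the corollary; you have filled in the same steps in slightly more detail, and you correctly identify the availability of the SDP base hardness as the one ingredient imported from outside.
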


\section{Discussion of related problems}

We believe that our approach extends to many other related problems.
As proved here, it applies to \ekvertexcover. 
Moreover, we would like to stress that our reduction is agnostic 
to whether it is used for LPs or SDPs and Lasserre gap instances 
for \oFk, together with \cite{LeeRS14} and our reduction would
provide SDP hardness of approximation for \vertexcover. 
This already holds for the \independentset problem   as we saw in 
Section~\ref{sec:sdpind}, since the starting CSP does not need to have 
only one free bit, as long as the gap between the soundness and 
completeness is sufficiently large.

Note that we are only able to establish hardness of approximations for
the stable set problem within any constant factor, while assuming \(P
\neq NP\) one can establish hardness of approximation within
\(n^{1-\epsilon}\). The reason for this gap is that the standard
amplification techniques via graph products do not fall into the
reduction framework in \cite{BPZ}. Also, there will be limits to
amplification as established by the upper bounds in
Section~\ref{sec:upperbounds}. 

Finally, we would like to remark
that our lower bounds on the size can be probably further
strengthened, however, with our current reductions this would require a strengthened version of the results in \cite{CLRS13}.

\section*{Acknowledgements}
Research reported in this paper was partially supported by NSF CAREER
award CMMI-1452463, NSF grant CMMI-1300144, ERC Starting Grant 
335288-OptApprox, and ERC Consolidator Grant 615640-ForEFront. Research was partially conducted at the Oberwolfach Workshop 1446 and Dagstuhl Workshop 15082.

\bibliographystyle{abbrv}
\bibliography{references}

\newpage

\appendix

\section{Definition of Sherali-Adams  for General Binary Linear Programs}
\label{sec:genSA}
For completeness, we give the general definition of the $r$-rounds SA tightening
of a given LP, and then we show that for CSPs the obtained relaxation is
equivalent to~\eqref{eq:CSP_prog2}.

Consider the following Binary Linear Program for $c \in 
\mathbb{R}^{n}$, $A \in \mathbb{R}^{m \times n}$ 
and $b \in \mathbb{R}^{m \times 1}$:
\begin{equation*}
\begin{array}{rll}
\max &\displaystyle \sum_{i=1}^n c_i x_i\\[3ex]
\text{s.t.}&Ax \leqslant b\\
&x \in \{0,1\}^n\,.
\end{array}
\end{equation*}
By replacing the integrality constraint with
\(0 \leq x \leq 1\), we get an LP relaxation. 

Sherali and Adams \cite{sherali1990hierarchy} proposed a systematic 
way for tightening such relaxations, by reformulating them in a higher
dimensional space. Formally speaking, the $r$-rounds SA relaxation is
obtained by multiplying each base inequality $\sum_{j=1}^n A_{ij} x_j
\leq b_j$ by $\prod_{s\in S} x_s \prod_{t\in T}(1-x_t)$ for all disjoint 
$S, T \subseteq [n]$ such that $|S \cup T| < r$. This gives the following
set of polynomial inequalities for each such pair  $S$ and $T$:
\begin{align*}
\left(\sum_{j\in [n]} A_{ij} x_j\right) \prod_{s\in S} x_s \prod_{t\in T} (1-x_t)  \leq & b_i \prod_{s\in S} x_s \prod_{t\in T} (1-x_t)  
\quad &\forall i\in[m]\,, \\[3ex]
\displaystyle 0 \leq  x_j \prod_{s \in S} x_s &\prod_{t\in T} (1-x_t) \leq  1 & \forall j\in [n]\, . 
\end{align*}
These constraints are then linearized by first expanding (using $x_i^2 = x_i$,
and thus $x_i(1-x_i) = 0$), and then replacing each monomial $\prod_{i \in H}
x_i$ by a new variable $y_H$, where $H \subseteq [n]$ is a set of size at most
$r$. Naturally, we set $y_\emptyset := 1$. This gives us the following linear program, referred to as the $r$-rounds SA relaxation:
\begin{equation*}
\begin{array}{rll}
\max &\displaystyle \sum_{i=1}^n c_i y_{\{i\}}\\[3ex]
\text{s.t.}&\displaystyle 
\sum_{H \subseteq T} (-1)^{|H|}\left(\sum_{j\in [n]} A_{ij} y_{H \cup S \cup \{j\}} \right) \leq b_i \sum_{H \subseteq T} (-1)^{|H|}y_{H \cup S}  
\quad &\forall i\in[m], S, T\,,\\[5ex]
& \qquad \quad \displaystyle 0 \leq  \sum_{H \subseteq T} (-1)^{|H|}y_{H\cup S \cup \{j\}} \leq 1 
\quad &\forall j \in [n], \forall S, T,\\[3ex]
& \qquad \qquad \qquad \quad \qquad \qquad y_\emptyset = 1
\end{array}
\end{equation*}
where in the first two constraint we take $S, T \subseteq [n]$ 
with $S \cap T = \emptyset$ and $|S \cup T| < r$. 

One could go back to the original space by letting $x_i = y_{\{i\}}$
and projecting onto the \(x\),
however we will refrain from doing that, in order to be able to write
objective functions that are not linear but degree-$k$ polynomials, as
is natural in the context of CSPs of arity $k$. Since we need to do $k$ rounds of
SA before even being able to write the objective function as a linear
function, it makes more sense to work in higher dimensional space.

For \constraintsatisfaction, the canonical $r$-rounds
SA relaxation is defined as follows. 
Consider any CSP defined over $n$ 
variables $x_1, \ldots, x_n \in [R]$, with $m$ constraints 
$\mc{C} = \{C_1,\dots,C_m\}$ where the arity of each constraint 
is at most $k$. For each $j \in [n]$ and $u \in [R]$, we introduce 
a binary variable $x(j,u)$, meant to be the indicator of $x_j = u$. 
Using these variables, the set of feasible assignments can naturally be formulated as 
\begin{align*}
  \sum_{u\in [R]} x(j,u) = 1 & \qquad \forall j \in [n]\,,\\
  x(j,u) \in \{0,1\} &\qquad \forall j \in [n], u \in [R]\,.
\end{align*}
If we relax the integrality constraints by, for each $j\in [n]$, $u\in [R]$, replacing $x(j,u) \in \{0,1\}$ by $x(i,u) \geq 0$ (we omit the upper bounds 
of the form $x(j,u) \leq 1$ as they are already implied by the other 
constraints) then we obtain the following constraints for the $r$-rounds SA relaxation : 
\begin{align*}
  \sum_{H \subseteq  T} (-1)^{|H|}\sum_{u\in [R]}  y_{H \cup S \cup \{(j,u)\}} & = \sum_{ H \subseteq  T} (-1)^{|H|}y_{H \cup S}  
&\forall j\in[n], S, T\,,\\[5ex]
\sum_{ H \subseteq  T} (-1)^{|H|}&  y_{H \cup S \cup \{(j,u)\}}  \geq 0  &\forall (j,u)\in[n]\times [R], S, T\,,
\end{align*}
where  we take $S, T \subseteq [n]\times[R]$ 
with $S \cap T = \emptyset$ and $|S \cup T| < r$. 

To simplify the above description, we observe that we only need the constraints for which $T=\emptyset$.
\begin{claim}
  All the above constraints are implied by the subset of constraints for which $T=\emptyset$.
\end{claim}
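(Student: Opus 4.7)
My plan is to handle the equality and inequality constraints separately, both by elementary algebraic manipulations of the $T=\emptyset$ constraints.

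For the equality constraints, I would observe that the general constraint (at a given $T$, $S$, $j$) is nothing more than the alternating sum, over subsets $H \subseteq T$ with signs $(-1)^{|H|}$, of the $T=\emptyset$ equalities applied at $S' = S \cup H$, namely $\sum_{u} y_{S \cup H \cup \{(j, u)\}} = y_{S \cup H}$. Multiplying each such identity by $(-1)^{|H|}$ and summing over $H \subseteq T$ reproduces exactly the general equality. No induction is required.

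For the inequalities, I would introduce the auxiliary quantity
\begin{equation*}
  q(M, T) \coloneqq \sum_{H \subseteq T} (-1)^{|H|}\, y_{M \cup H},
\end{equation*}
so that the general inequality becomes $q(S \cup \{(j,u)\}, T) \geq 0$. I would prove $q(M, T) \geq 0$ by induction on $|T|$. The base case $|T|=0$ reduces to $y_M \geq 0$, which is a $T=\emptyset$ inequality (together with $y_\emptyset = 1$). For the inductive step, pick any $t_1 = (j_1, u_1) \in T$ and split the sum according to whether $H$ contains $t_1$; then rewrite each $y_{M \cup H}$ with $H \not\ni t_1$ using the $T=\emptyset$ equality at index $j_1$, i.e., $y_{M \cup H} = \sum_{u' \in [R]} y_{M \cup H \cup \{(j_1, u')\}}$. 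The contribution from $u' = u_1$ cancels the $H \ni t_1$ terms exactly, and what remains is
\begin{equation*}
  q(M, T) = \sum_{u' \neq u_1} q\bigl(M \cup \{(j_1, u')\},\, T \setminus \{t_1\}\bigr).
\end{equation*}
Since $|T \setminus \{t_1\}| < |T|$, each summand on the right is non-negative by induction, so $q(M, T) \geq 0$.

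The hard part, if any, is merely bookkeeping: one should check that the subsets arising in the recursion satisfy $|M \cup \{(j_1, u')\} \cup (T \setminus \{t_1\})| \leq |M \cup T| \leq r$, so every $y$-variable invoked remains in scope of the $r$-round relaxation. Conceptually, $q(M, T)$ is the linearized probability that the assignment encoded by $M$ holds while none of those encoded by $T$ do, and the $T=\emptyset$ equalities supply precisely the marginal-consistency identities needed to carry out this inclusion-exclusion.
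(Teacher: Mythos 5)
Your proposal is correct and follows essentially the same route as the paper: the equalities are obtained by taking the signed sum of the $T=\emptyset$ equalities over $H\subseteq T$, and the inequalities by peeling off one element of $T$ at a time via the partition identity $y_{M}=\sum_{u'}y_{M\cup\{(j_1,u')\}}$, with the $u'=u_1$ term cancelling the $H\ni(j_1,u_1)$ part. The paper unrolls this recursion explicitly rather than phrasing it as induction on $|T|$, and your remark about the cardinality bookkeeping (which works out since $S\cap T=\emptyset$ and $|S\cup T|<r$) is a point the paper leaves implicit.
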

\begin{proof}
  The equality constraints are easy to verify since $\sum_{u\in [R]} y_{S \cup \{(j,u)\}} = y_S$ for all $S\subseteq [n]\times [R]$ with $|S| < r$  implies
  \begin{align*}
    \sum_{S \subseteq H \subseteq S \cup T} (-1)^{|H \cap T|}\sum_{u\in [R]}  y_{H \cup \{(j,u)\}} & = \sum_{S \subseteq H \subseteq S \cup T} (-1)^{|S \cap T|}y_H.  
  \end{align*}

  Now consider the inequalities. If we let $T = \{(j_1, u_1), (j_2, u_2), \ldots, (j_\ell, u_\ell)\}$ then by the above equalities
  \begin{align*}
    \sum_{ H \subseteq  T} (-1)^{|H|}  y_{H \cup S \cup \{(j,u)\}}  & = \sum_{ H \subseteq  T\setminus \{(j_1,u_1)\}} (-1)^{|H|}  y_{H \cup S \cup \{(j,u)\}} - \sum_{ H \subseteq  T\setminus \{(j_1, u_1)\}} (-1)^{|H|}  y_{H  \cup S \cup \{(j,u),(j_1, u_1)\}} \\
    & = \sum_{u'_1 \in [R]: u'_1 \neq u_1} \sum_{ H \subseteq  T\setminus
      \{(j_1,u_1)\}} (-1)^{|H|}  y_{H \cup S \cup \{(j,u),(j_1, u'_1)\}}\\
    &~~\vdots \\
    & = \sum_{u'_t \in [R]: u'_t \neq u_t} \ldots \sum_{u'_1 \in [R]: u'_1 \neq u_1}  y_{S \cup \{(j,u), (j_1, u'_1), \ldots, (j_t, u'_t)\}}\,.
  \end{align*}
  Hence, we have also that all the inequalities hold if they hold for those with $T=\emptyset$ and $S$ such that $|S| < r$. 
\end{proof}

By the above claim, the constraints of the canonical $r$-rounds 
SA relaxation of the CSP can be simplified to:
\begin{align*}
  \sum_{u\in [R]}  y_{S \cup \{(j,u)\}} & = y_{ S}  
&\forall j\in[n], S\subseteq [n]\times[R]: |S| < r\,,\\[1ex]
y_{S \cup \{(j,u)\}} & \geq 0  &\forall (j,u)\in[n]\times [R], S\subseteq [n]\times[R]: |S| < r\,.
\end{align*}
To see that this is equivalent  to~\eqref{eq:CSP_prog2} observe first that $y_S = 0$ if
$\{(j,u'), (j,u'')\} \subseteq S$. Indeed, by the partition constraint, we have
\begin{align*}
  \sum_{u\in R} y_{\{(j,u'), (j,u'')\} \cup \{(j,u)\}} & = y_{\{(j,u'), (j,u'')\}}\,,
\end{align*}
which implies the constraint $2y_{\{(j,u'), (j,u'')\}}  \leq y_{\{(j,u'),
(j,u'')\}}$. This in turn (together with the non-negativity) implies that
$y_{\{(j,u'), (j,u'')\}} = 0$. Therefore, by again using the partition
constraint, we have $y_S =0$ whenever $\{(j,u'), (j,u'')\} \subseteq S$ and hence we can discard variables of this type.
We now obtain the formulation~\eqref{eq:CSP_prog2} by using variables of type
$X_{(\{j_1, \ldots, j_t\}, (u_1, \ldots, u_t))}$ instead of $y_{\{(j_1, u_1),
(j_2, u_2), \ldots, (j_t, u_t)\}}$. The objective function can be linearized,
provided that the number of rounds is at least the arity of the CSP, that is 
$r \geqslant k$, so that variables for sets of cardinality $k$ are available.

\section{Proof of Claim \ref{claim:completeness}}
\label{app:proofofclaim}
\begin{proof}[Proof of Claim~\ref{claim:completeness}]
Assume that $\Lambda$ satisfies $vw_1,\dots,vw_t$ simultaneously, i.e., \begin{align}
\label{eq:sat}
\pi_{v,w_1}(\Lambda(w_1)) = \dots = \pi_{v,w_t}(\Lambda(w_t)) = \Lambda(v)
\end{align}
and let $C_{x,S}$ and $C_{\bar{x},S}$ be the sub-cubes as in Figure \ref{fig:dist}. According to the new assignment, every variable $\left<w_i,z\right>$ in the support of $C(v,\mc{W}_v,x,S)$ takes the value $z_{\Lambda(w_i)}$. Assume w.l.o.g.~that $\left<w_i,z\right>$ is such that $\pi^{-1}_{v,w_i}(z) \in C_{x,S}$, and let $y\in C_{x,S}$ satisfies $\pi_{v,w_i}(y) = z$. Then we get \begin{align}
\label{eq:subcube}
z_{\Lambda(w_i)} = \pi_{v,w_i}(y)_{\Lambda(w_i)} = y_{\pi_{v,w_i}(\Lambda(w_i))} = y_{\Lambda(v)}
\end{align}
where the last equality follows from (\ref{eq:sat}). We know from the construction of the sub-cube $C_{x,S}$ that for all $j\notin S$ and for all $y \in C_{x,S}$, we have $y_j = x_j$. It then follows that if $\Lambda(v) \notin S$, equation \ref{eq:subcube} yields that \begin{align*}
z_{\Lambda(w_i)} = y_{\Lambda(v)} = x_{\Lambda(v)} && \forall \left<w_i,z \right> \text{ s.t. } \pi^{-1}_{v,w_i}(z) \in C_{x,S}
\end{align*}
Similarly, for the variables  $\left<w_i,z\right>$ with $\pi^{-1}_{v,w}(z) \in C_{\bar{x},S}$, we get that \begin{align*}
z_{\Lambda(w_i)} = y_{\Lambda(v)} = \bar{x}_{\Lambda(v)} && \forall \left<w_i,z \right> \text{ s.t. } \pi^{-1}_{v,w_i}(z) \in C_{\bar{x},S}
\end{align*}
Thus far we proved that if If $\Lambda$ satisfies $vw_1,\dots,vw_t$ simultaneously and $\Lambda(v) \notin S$, then $\psi(\Lambda)$ satisfies $C(v,\mc{W}_v,x,S)$. But we know by construction that $|S| = \epsilon R$, and hence $\Lambda(v) \notin S$ with probability at least $1-\epsilon$.
\end{proof}

\section{Proof of Claim \ref{claim:completeness2}}
\label{app:proofofclaim2}
\begin{proof}[Proof of Claim~\ref{claim:completeness2}]
Assume that $\Lambda$ satisfies $vw_1,\dots,vw_t$ simultaneously, i.e., \begin{align}
\label{eq:sat2}
\pi_{v,w_1}(\Lambda(w_1)) = \dots = \pi_{v,w_t}(\Lambda(w_t)) = \Lambda(v)
\end{align}
and let $C_{x,S_\epsilon}$ be the sub-cube as in Figure \ref{fig:dist2}. For $z \in [q]^R$ with $\pi^{-1}_{v,w_i}(z) \in C_{x,S_\epsilon}$, let $y \in [q]^R$ be such that $\pi_{v,w_i}(y) = z$. Recall that a constraint $C(v,\mc{W}_v,x,S_\epsilon)$ looks as follows: \begin{align}
\label{pred}
\left< w_i,z \oplus \tilde{z}\right> \neq {z}_0 && \forall \,\, 1\leq i \leq t, \forall {z} \text{ such that }\pi^{-1}_{v,w_1}({z}) \in C_{x,S_\epsilon}
\end{align}
We now adopt the functions point of view, i.e., for a $w \in W$, the variables $\left<w,z\right>$ for $z \in [q]^R$ with $z_0$ are the entries of the truth table of a function $f_w$, and according to the new assignment $\Lambda$, $f_w$ is the \emph{folded} dictatorship function of the label of $\Lambda(w)$. 

So if we let $f:=f_{w_i}$ for some $1\leq i \leq t$, and $z:= \left<w_i,z\right>$, we get that \begin{align*}
\left< w_i,{z} \oplus \tilde{z}\right> \neq {z}_0 && \Longleftrightarrow && f(z) \neq 0
\end{align*}
and by our definition of the dictatorship function, the latter is zero iff ${z}_{\Lambda(w_i)} = 0$. But \begin{align}
\label{eq:subcube2}
z_{\Lambda(w_i)} = \pi_{v,w_i}(y)_{\Lambda(w_i)} = y_{\pi_{v,w_i}(\Lambda(w_i))} = y_{\Lambda(v)}
\end{align}
where the last equality follows from (\ref{eq:sat2}). We know from the construction of the sub-cube $C_{x,S_\epsilon}$ that for all $j\notin S_\epsilon$ and for all $y \in C_{x,S_\epsilon}$, we have $y_j = x_j$. It then follows that if $\Lambda(v) \notin S_\epsilon$, equation \ref{eq:subcube2} yields that \begin{align*}
z_{\Lambda(w_i)} = y_{\Lambda(v)} = x_{\Lambda(v)} && \forall \left<w_i,z \right> \text{ s.t. } \pi^{-1}_{v,w_i}(z) \in C_{x,S_\epsilon}
\end{align*}
Moreover, given that $x$ is chosen uniformly at random from $[q]^R$, we get that for any $i\in [R]$, $\Pr_{x\in[q]^R}\left[ x_{i}=0 \right] = \frac{1}{q}$.

Thus far we proved that if If $\Lambda$ satisfies $vw_1,\dots,vw_t$ simultaneously and $\Lambda(v) \notin S$, then $\psi(\Lambda)$ satisfies $C(v,\mc{W}_v,x,S)$ with probability $1-\frac{1}{q}$. But we know by construction that $|S| = \epsilon R$, and hence $\Lambda(v) \notin S$ with probability at least $1-\epsilon$.
\end{proof}

\end{document}